\newif\ifdraft%
\newcommand\blfootnote[1]{%
  \begingroup
  \renewcommand\thefootnote{}\footnote{#1}%
  \addtocounter{footnote}{-1}%
  \endgroup
}
\newtheorem{remark}{Remark}
\newtheorem{theorem}{Theorem}
\newtheorem{definition}[theorem]{Definition}
\newtheorem{observation}[theorem]{Observation}
\newtheorem{lemma}[theorem]{Lemma}
\newtheorem{claim}[theorem]{Claim}
\title{Near-Optimal Distributed Ruling Sets for Trees\\ and High-Girth Graphs}
\author{Malte Baumecker \\ TU Graz \and Yannic Maus \\ TU Graz  \and  Jara Uitto \\ Aalto University \\  }
\date{}
\newcommand{\poly}{\textrm{poly}}
\newcommand{\local}{\textrm{LOCAL}\xspace}
\newcommand{\CONGEST}{\ensuremath{\mathsf{CONGEST}}\xspace}
\newcommand{\LOCAL}{\ensuremath{\mathsf{LOCAL}}\xspace}
\newcommand{\LMJ}{\ensuremath{\textsc{Local-Minima-Join}}\xspace}
\newcommand{\cleanup}{\ensuremath{\textsc{Clean-Up}}\xspace}
\newcommand{\DThres}{\ensuremath{\Delta_*}}
\newcommand{\Dmax}{\ensuremath{\Delta_{\mathrm{max}}}}
\newcommand{\Dsamp}{\ensuremath{\Delta_{\mathrm{small}}}}
\newcommand{\procdegreedrop}{\textsc{Degree-Drop}\xspace}
\newcommand{\procdegreedropsampling}{\textsc{Degree-Drop-Sampling}\xspace}
\newcommand{\finishoff}{\ensuremath{\textsc{Finish-Off}}\xspace}
\newcommand{\LMJS}{\ensuremath{\textsc{Local-Minima-Join-Sampling}}\xspace}
\newcommand{\LMJshort}{\ensuremath{\textsc{LMJ}}\xspace}
\newcommand{\LMJSshort}{\ensuremath{\textsc{LMJ-Sampling}}\xspace}
\newcommand{\ym}[1]{\ifdraft \todo{\textcolor[rgb]{0.1,0.6,0.1}{Y: #1}}\fi}
\newcommand{\mb}[1]{\ifdraft \todo{\textcolor[rgb]{0.1,0.1,0.6}{M: #1}}\fi}
\begin{document}

\maketitle
\thispagestyle{empty}
    \section*{Abstract}
    Given a graph $G=(V,E)$, a $\beta$-ruling set is a subset $S\subseteq V$ that is i) independent, and ii) every node $v\in V$ has a node of $S$ within distance $\beta$.  In this paper, we present almost optimal distributed algorithms for finding ruling sets in trees and high girth graphs in the classic \local model. 
    As our first contribution, we present an $O(\log\log n)$-round randomized algorithm for computing $2$-ruling sets on trees, almost matching the $\Omega(\log\log n/\log\log\log n)$ lower bound given by Balliu et al.\ [FOCS'20]. Second, we show that $2$-ruling sets can be solved in $\widetilde{O}(\log^{5/3}\log n)$ rounds in high-girth graphs. Lastly, we show that $O(\log\log\log n)$-ruling sets can be computed in $\widetilde{O}(\log\log n)$ rounds in high-girth graphs, matching the lower bound up to triple-log factors. All of these results either improve polynomially or exponentially on the previously best algorithms and use a smaller domination distance $\beta$. 

\blfootnote{This research was funded in whole or in part by the Austrian Science Fund (FWF) \url{https://doi.org/10.55776/P36280}, \url{https://doi.org/10.55776/I6915}. For open access purposes, the author has applied a CC BY public copyright license to any author-accepted manuscript version arising from this submission.}
    \tableofcontents
    \clearpage
\clearpage
    \setcounter{page}{1}
    \section{Introduction}

    A $\beta$-ruling set (RS) is a set of pairwise non-adjacent nodes such that any node in an input graph $G = (V, E)$ is within $\beta$ hops of some ruling set node.
    The Maximal Independent Set (MIS) problem, that is a $1$-ruling set, is among the most extensively studied problems in distributed graph algorithms.
    In many cases, ruling sets for $\beta \geq 2$ can be faster to compute than maximal independent sets and have applications for various problems including, $\Delta$-coloring~\cite{GHKM18} (for maximum degree $\Delta$), network decompositions~\cite{GG24,awerbuch89}, and MIS~\cite{GhaffariImproved16}. 
    Ruling sets are also interesting in their own right and have been extensively studied in various settings, e.g., ~\cite{schneider2013, Balliu2020-ruling, Gfeller07,CONGEST_rulingsets, GG24, Pai2022, kothapalli-superfast2012, cambusCCruling, BEPSv3, schneider2010, assadi-streaming-ruling}.

For MIS, a classic result states that we cannot beat $\Omega(\sqrt{\log n/\log \log n})$ rounds and the bound holds even on trees \cite{kuhn16_jacm}. For ruling sets, the strongest lower bound is the minimum of $\widetilde{\Omega}(\log^{1/(\beta + 1)} \Delta)$ and $\widetilde{\Omega}(\log_\Delta \log n)$ \cite{BBKO2021hideandseek}, which does not rule out, for example, double-logarithmic runtimes.
    The main question that we partially answer in this paper is how close to optimal can we get for ruling sets in terms of both the domination distance $\beta$ and the runtime.
    Next, we present our results and then further background in detail.


    \subsection{Our Contributions}
    We consider the \local model of distributed computing introduced by Linial~\cite{linial92}.
    A communication network is modeled as a graph $G = (V, E)$, where $|V| = n$ computational units communicate over the edges in synchronous message-passing rounds.
    Message sizes are unbounded and each unit can perform arbitrary local computations. 
    In the beginning, each node only knows its neighbors, and at the end of a computation, each node $v$ should know its own output, for example, whether it is a part of the ruling set or not. 
    The runtime of an algorithm is measured in the number of rounds.
    
    First, we give an algorithm for a $2$-ruling set on trees in $O(\log \log n)$ time with the best-possible domination distance $\beta = 2$. 

    \begin{restatable}{theorem}{thmRulingSetTrees}
    \label{thm:rulingSetTrees}
        There is an $O(\log\log n)$-round algorithm to compute a $2$-ruling set in trees w.h.p.\
    \end{restatable}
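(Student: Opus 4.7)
The plan is to prove \Cref{thm:rulingSetTrees} via an algorithm with two stages: a degree-reduction stage \procdegreedrop running for $O(\log\log n)$ iterations, followed by a short \finishoff stage on the low-degree remainder. Throughout, I would maintain a set $S$ (initially empty) and a set $U \subseteq V$ of still-undominated nodes; a node leaves $U$ once some member of $S$ appears within distance $2$ of it.

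In each iteration of \procdegreedrop, every $v \in U$ marks itself with a probability depending on its current $U$-degree $d_U(v)$, roughly $1/\sqrt{d_U(v)}$, and among the marked nodes I add to $S$ those that are local minima of a random identifier in their $1$-neighborhood via a \LMJ-style step. Since local minima of distinct IDs form an independent set, $S$ stays independent. The goal is to show that one iteration drives the maximum $U$-degree from $\Delta$ down to $O(\sqrt{\Delta})$ w.h.p., so that after $O(\log\log n)$ iterations the maximum degree has dropped to $O(1)$. The \finishoff stage then produces a $2$-ruling set on a constant-degree tree in $O(\log^* n)$ rounds using a Linial-style coloring followed by iteration through color classes. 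If the residual degree is only polylogarithmic rather than constant, I would instead apply a shattering argument to break $U$ into components of size $\text{poly}\log n$ and finish each deterministically in $O(\log\log n)$ rounds via a tree-tailored network decomposition.

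The main obstacle will be proving the doubly-exponential degree decrease per iteration, since a naive analysis gives only a constant-factor drop and would require $\Omega(\log n)$ iterations. To obtain the stronger rate, I would exploit two tree-specific facts: (i) the $1$-neighborhood of any $v$ is itself independent, so the marking events of its neighbors are mutually independent and yield sharp Chernoff bounds on the fraction of marked neighbors; and (ii) since trees contain no cycles, the events that $v$ is dominated through subtrees rooted at distinct neighbors of $v$ are near-independent, which lets one amplify per-neighbor success into a per-vertex high-probability bound. The delicate point is two adjacent high-degree vertices $u,v$ whose $2$-neighborhoods overlap across the edge $uv$; I expect to handle this by bucketing vertices into $O(\log\log n)$ degree classes and processing classes in separate sub-rounds so that the marking probability always matches the local structure around each active vertex.
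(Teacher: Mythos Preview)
Your plan has the right skeleton---iterate a local-minima step for $O(\log\log n)$ rounds to force a doubly-exponential degree drop, then finish on a low-degree remainder---but the central claim, that one iteration reduces the maximum $U$-degree from $\Delta$ to $O(\sqrt{\Delta})$ \emph{with high probability}, is not justified and is in fact where the real difficulty lies.

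First, the concentration you need is not available when $\Delta$ is small. Any per-node argument based on marking with probability $\approx 1/\sqrt{d}$ and Chernoff over the neighbors gives failure probability at best $\exp(-\Theta(\mathrm{poly}(\Delta)))$, which is $1/\mathrm{poly}(\Delta)$ rather than $1/\mathrm{poly}(n)$. When $\Delta$ is, say, $O(\log^{0.9} n)$, this is far from w.h.p., and a union bound over all high-degree nodes fails. The paper confronts exactly this: its \procdegreedrop step only guarantees that an individual high-degree node survives with probability at most $\Delta^{-\Theta(1)}$ (\Cref{lem:localSuccess}), and it invokes the shattering framework \emph{in every iteration}, not merely as a fallback at the end. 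The surviving high-degree nodes are set aside into $W_i$, and the algorithm continues on the rest. The clean-up of all the $W_i$ is a separate, nontrivial phase: because the $W_i$ can have size $\mathrm{poly}(\Delta)\cdot\log n \gg \mathrm{poly}\log n$, a generic deterministic MIS on them would be too slow. The paper instead shows each component of $T[W_i]$ admits an $O(1)$-distance dominating set of size $O(\log n)$ (\Cref{lem:dominatingSet}), and uses this via a rake-and-compress procedure to build a $(2,O(\log\log n))$-network decomposition and finish in $O(\log\log n)$ rounds. Your fallback ``shatter into components of size $\mathrm{poly}\log n$'' only works once $\Delta$ is already $\mathrm{poly}\log n$; you have no mechanism for the large-$\Delta$ regime.

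Second, your independence argument is too optimistic. It is true that the \emph{marking} events of $v$'s neighbors are independent, but the events ``neighbor $w$ is a local minimum'' all depend on $v$'s own random value (and, for $2$-hop arguments, on the values of the shared $1$-hop parents). The paper's technical overview explains that conditioned on $v$ drawing one of the two smallest values in $N(v)$, which happens with probability $\Theta(1/d_v)$, there is constant probability that \emph{no} neighbor is a local minimum---so one cannot hope for survival probability better than $\Theta(1/d_v)$, let alone $n^{-c}$. The paper's resolution is to root at $v$, define a node as \emph{successful} if it beats only its children (not its parent), so that success events across siblings are genuinely independent, and then separately bound the probability that $v$ (or a $1$-hop parent) undercuts all its successful descendants (\Cref{lem:preparation}). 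Your ``near-independence through distinct subtrees'' gestures at this but does not identify or handle the coupling through the parent's value, which is precisely the step that earlier work got wrong and that the paper repairs.

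Finally, your proposed bucketing into $O(\log\log n)$ degree classes processed in separate sub-rounds would multiply the per-iteration cost by $O(\log\log n)$, giving $O(\log^2\log n)$ overall rather than $O(\log\log n)$.
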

    Our algorithm is optimal up to a triple-logarithmic factor in the runtime and it improves on the previous $\widetilde{O}(\log^2 \log n)$ time algorithm for trees that only achieves a $3$-ruling set~\cite{kothapalli-superfast2012}.
    Second, we introduce an algorithm for graphs of girth at least $7$ that obtains a $2$-ruling set and runs in $\widetilde{O}(\log^{(5/3)} \log n)$ rounds. We note that this algorithm is also very close to optimal: the domination distance cannot be improved in this runtime domain, and the runtime is within a sub-double-logarithmic factor from optimal.
    

    \begin{restatable}{theorem}{thmRulingSetHighGirth}
    \label{thm:rulingSetHighGirth}
        For graphs of girth at least $7$, there is a randomized algorithm to compute a $2$-ruling set in $\widetilde{O}(\log^{5/3}\log n)$ rounds w.h.p.
    \end{restatable}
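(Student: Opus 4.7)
The plan is to build on \Cref{thm:rulingSetTrees} by exploiting the fact that in a graph of girth at least $7$, every ball of radius $3$ induces a tree; hence the tree-based subroutines \LMJshort and \cleanup still apply locally, and the non-tree global structure only manifests at distance $4$ or more. I would therefore re-use the tree algorithm once the maximum degree has been brought down enough that its probabilistic analysis on trees transfers faithfully to our setting.

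First, I would design a degree reduction phase \procdegreedropsampling that, in each iteration, independently samples each vertex with a probability $p$ and uses a $2$-hop local rule to add a large fraction of sampled nodes to the ruling set while dominating their neighborhoods. Because every radius-$3$ neighborhood is acyclic, the expected number of surviving high-degree vertices can be estimated as on a tree; iterating for roughly $\log^{2/3}\log n$ phases would shrink the maximum degree sufficiently, after which a single invocation of the tree-style primitives finishes the job. The per-phase cost is roughly $\log^{1/3}\log n$ rounds, this being the parameter controlling how much sampling information each node gathers in order to obtain sharp concentration. The $5/3$ exponent in the total runtime comes from optimally balancing the number of phases against the per-phase cost. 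A final \finishoff step, which handles the at most poly-logarithmically many remaining low-degree nodes and guarantees the $2$-ruling property, contributes only a $\widetilde{O}(1)$ additive term.

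The main obstacle is that the tree algorithm implicitly uses global acyclicity, not only local acyclicity, when arguing that the number of undominated nodes shrinks by a polynomial factor per phase. In a girth-$7$ graph, cycles reappear at distance $7$ and beyond, which breaks independence of the local choices made by far-apart vertices. I expect the hardest part of the proof to be replacing this global argument with a concentration-based one: one has to show that, conditioned on the random choices made outside a $2$-ball, the variables inside the ball behave as if drawn from a product distribution, and then combine this with a Chernoff-type bound across many essentially independent balls. Once this concentration is in place, the recursion on the maximum degree and the final runtime calculation should follow by standard arguments, yielding the claimed $\widetilde{O}(\log^{5/3}\log n)$-round algorithm.
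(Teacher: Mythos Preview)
Your proposal has a genuine gap: you misidentify where the $5/3$ exponent comes from and, relatedly, what actually breaks when moving from trees to girth-$7$ graphs.

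In the paper, the degree-reduction analysis of \LMJshort only needs that every $3$-ball is a tree (no cycles of length $\leq 6$), which girth $\geq 7$ guarantees. So \Cref{lem:localSuccess} and the shattering lemma (\Cref{lem:shatteringBase}) carry over verbatim. The place where global acyclicity is used is the \cleanup phase: on trees the put-aside sets $W_i$ admit an $O(\log\log n)$-round rake-and-compress network decomposition (\Cref{lem:networdDecompTree}), and rake-and-compress simply does not make sense on general high-girth graphs. Your proposed fix (a concentration argument across far-apart balls) is aimed at the wrong step.

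The paper's actual argument is two-staged. First, while $\Delta\geq\log^{c}n$, a modified procedure \LMJS (sample each node active with probability $1/2$, then in a second phase activate exactly the low-degree nodes) gives a local success probability of $1-n^{-c_1}$ rather than $1-\Delta^{-\Omega(1)}$; hence the degree drops to $\poly\log n$ in $O(\log\log n)$ rounds with high probability and \emph{no} shattering residue (\Cref{lem:degreedropgirth}). Second, once $\Delta\leq\poly\log n$, the tree-style \procdegreedrop plus shattering is run; the leftover components now have size $N=O(\poly\log n)$ (\Cref{lem:girth:smallcomp}), and an MIS on each is computed deterministically via \cite{GG24} in $\widetilde{O}(\log^{5/3}N)=\widetilde{O}(\log^{5/3}\log n)$ rounds. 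Thus the $5/3$ is imported wholesale from the \cite{GG24} MIS algorithm, not from balancing a phase count against a per-phase cost as you suggest (and indeed your own arithmetic, $\log^{2/3}\log n$ phases times $\log^{1/3}\log n$ rounds each, yields $\log\log n$, not $\log^{5/3}\log n$).
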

    This result improves the previous best algorithm for high girth graphs, which, similar as on trees, obtains a $3$-ruling set in time $O(\log^3 \log n)$~\cite{kothapalli-superfast2012}.
        Finally, by slightly relaxing $\beta$ to $O(\log \log \log n)$, we give an algorithm for high-girth graphs whose runtime is within a triple-logarithmic factor from the corresponding lower bound and exponentially improves on the current best algorithm for general graphs that obtains an $O(\log \log n)$-ruling set in $O(\log \log n)$ time.
    
    \begin{restatable}{theorem}{thmlogloglog}
    \label{thm:logloglog}
        For graphs of girth at least 7, there is a randomized algorithm that w.h.p.\ computes an $O(\log \log \log n)$-ruling set in $\widetilde{O}(\log\log n)$ rounds.
    \end{restatable}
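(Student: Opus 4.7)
The plan is to run the $\procdegreedropsampling$ subroutine for $\widetilde{O}(\log\log n)$ rounds, organized into $k=\Theta(\log\log\log n)$ phases of geometrically decreasing length, followed by a trivial finisher. Each phase commits a batch of pairwise non-adjacent nodes to the ruling set $S$ and removes every active node within distance $1$ of the new commitments, so the maximum distance from a removed node to $S$ grows by at most $1$ per phase.

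In phase $i$, operating on an active subgraph of maximum degree $\Delta_i$, the algorithm runs $\procdegreedropsampling$ for $T_i=\widetilde{O}(\log\log\Delta_i)$ rounds with a sampling probability tuned (as a function of $\Delta_i$) so that $\log\Delta_{i+1}=O(\sqrt{\log\Delta_i})$ w.h.p. Because the recursion $\log\log\Delta_{i+1}\le\tfrac12\log\log\Delta_i+O(1)$ roughly halves $\log\log\Delta$ per phase, after $k=O(\log\log\log n)$ phases one has $\Delta_k=O(1)$, at which point a constant-round local finisher completes the job and adds at most $O(1)$ to the domination distance. Summing $T_i$ over $i$ is a geometric series that telescopes to $\widetilde{O}(\log\log n)$, and the total domination distance is $k+O(1)=O(\log\log\log n)$. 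Girth at least $7$ is hereditary to induced subgraphs, so every phase inherits the girth-$7$ instance required by $\procdegreedropsampling$.

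The main obstacle is the per-phase degree-drop lemma: that $\procdegreedropsampling$ brings $\Delta_{i+1}$ down to $2^{O(\sqrt{\log\Delta_i})}$ in $\widetilde{O}(\log\log\Delta_i)$ rounds w.h.p.\ on a girth-$7$ graph of max degree $\Delta_i$. The high-girth hypothesis is the workhorse here: $2$-hop neighborhoods are trees, so sampling and local-minima-join events around distinct surviving nodes are essentially independent, licensing a Chernoff-style concentration that upgrades the expected degree drop to a high-probability bound. The delicate calibration is the per-phase sampling probability, which must be large enough that a surviving node loses an $\Omega(\Delta_i-\Delta_i^{o(1)})$ fraction of its neighbors to the dominated set, yet small enough that the joined local minima remain pairwise non-adjacent despite being drawn from a denser-than-critical sample. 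This trade-off is analogous to, but coarser than, the argument underlying \Cref{thm:rulingSetHighGirth}; it is exactly this coarseness, permitted by the relaxed $O(\log\log\log n)$-ruling-set target, that enables the improved $\widetilde{O}(\log\log n)$ runtime.
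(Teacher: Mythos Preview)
Your proposal diverges from the paper's argument and has two genuine gaps.

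First, the domination-distance accounting is wrong. If in every phase you commit an independent set $S_i$ and remove $S_i\cup N(S_i)$ (or $S_i\cup N_2(S_i)$, as \procdegreedropsampling actually does), then every node that is ever removed lies within distance $1$ (resp.\ $2$) of $\bigcup_i S_i$, regardless of how many phases you run. The distance does not ``grow by $1$ per phase''; the phase index is irrelevant to domination. So your scheme, if it worked, would output a $2$-ruling set, not an $O(\log\log\log n)$-ruling set, and that would be a stronger statement than \Cref{thm:rulingSetHighGirth}. The looseness in the target domination is not what you are exploiting. (The remark about tuning the sampling probability so that ``the joined local minima remain pairwise non-adjacent'' is also off: two adjacent nodes can never both be local minima, so independence of $S$ is automatic.)

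Second, the per-phase degree-drop lemma you postulate is exactly the point where the paper's argument breaks down and is forced to change strategy. \Cref{lem:degreedropgirth} and its core \Cref{lem:girth:localsuccess} give $\Delta\to\Delta^{3/4}$ in $O(1)$ rounds \emph{only when} $\Delta\ge\log^{c}n$; the $n^{-c_1}$ bound comes from Chernoff terms like $\exp(-\Delta^{0.1})$, which are meaningless once $\Delta$ is sub-polylogarithmic. Iterating your phases does not help: what matters is the absolute current degree, not which phase you are in. Below $\poly\log n$ the paper switches to the shattering-based \procdegreedrop, producing leftover components of size $N=\poly\log n$; the $O(\log\log\log n)$ in the domination distance then arises \emph{not} from any phase count but from invoking the deterministic $O(\log\log N)$-ruling set algorithm of \cite{GG24} on those components, which runs in $\widetilde{O}(\log N)=\widetilde{O}(\log\log n)$ rounds. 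That black-box is where both the relaxed $\beta$ and the runtime savings over \Cref{thm:rulingSetHighGirth} come from.
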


By combining the core technical contributions (different fast degree reduction procedures) of \Cref{thm:rulingSetTrees,thm:rulingSetHighGirth} algorithms with the known sublogarithmic deterministic MIS algorithm for trees by Barenboim and Elkin \cite{Barenboim2010}, we also obtain the following corollary.  
\begin{restatable}{corollary}{corruling}
\label{cor:2rulingtreeV2}
There is a randomized algorithm that w.h.p.\ computes a $2$-ruling set of a tree in $O(\log\log \Delta)+O(\log\log n/\log\log \log n)$ rounds. 
\end{restatable}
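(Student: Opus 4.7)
The plan is to combine the randomized degree reduction machinery that drives \Cref{thm:rulingSetTrees} with the deterministic tree-MIS routine of Barenboim and Elkin to produce a two-phase algorithm whose runtime splits cleanly into a $\Delta$-dependent term and an $n$-dependent term.

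First I would invoke the degree reduction subroutine underlying \Cref{thm:rulingSetTrees}. Each iteration of that subroutine shrinks the maximum degree $\hat\Delta$ of the subgraph of still-uncovered vertices by a doubly exponential factor, while simultaneously growing a partial $2$-ruling set $S$ such that every vertex that has been removed from the residual subgraph is already $2$-dominated by $S$. I stop the subroutine early, as soon as $\hat\Delta$ drops to a threshold $\DThres$ chosen below, rather than running it all the way to the end. Because each iteration roughly squares the logarithm of the degree, reaching any target $\DThres \leq \Delta$ costs $O(\log\log \Delta)$ rounds w.h.p.

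Second, on the residual graph $H$, which is a subforest of the input tree, I would run the deterministic sublogarithmic MIS algorithm of Barenboim and Elkin~\cite{Barenboim2010}. Setting the stopping threshold to $\DThres = 2^{O(\log\log n/\log\log\log n)}$ makes this clean-up phase terminate in $O(\log\log n/\log\log\log n)$ rounds on $H$. The resulting MIS is added to $S$; since every remaining vertex of $H$ is at distance at most $1$ from the new MIS, and every previously removed vertex is at distance at most $2$ from $S$ by the invariant of phase one, the final $S$ is a $2$-ruling set of the input tree. Independence is preserved because phase two only adds vertices of $H$, which are by construction not adjacent to any vertex already in $S$.

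The verification of correctness is routine once the two invariants---$S$ is independent, and every vertex outside the current residual subgraph is $2$-dominated by $S$---are confirmed to be maintained by each phase. The main quantitative obstacle is selecting $\DThres$ so that the two terms balance: the degree reduction must reach $\DThres$ within $O(\log\log \Delta)$ iterations (which it does, by the doubly exponential rate of decrease), and the Barenboim--Elkin routine on a forest of maximum degree $\DThres$ must finish within the target $O(\log\log n/\log\log\log n)$ rounds. Any threshold in the stated range satisfies both constraints, yielding the claimed runtime.
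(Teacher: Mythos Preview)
Your second phase rests on a misreading of the Barenboim--Elkin result: their tree-MIS algorithm runs in $O(\log N/\log\log N)$ rounds where $N$ is the \emph{number of vertices}, not the maximum degree. Lowering the maximum degree of the residual forest $H$ to $\DThres = 2^{O(\log\log n/\log\log\log n)}$ does nothing to bound $|V(H)|$; in general $H$ still has $\Theta(n)$ vertices, so the clean-up costs $O(\log n/\log\log n)$ rounds, not the $O(\log\log n/\log\log\log n)$ you claim. You also gloss over the leftover sets $W_i$ produced by the degree reduction underlying \Cref{thm:rulingSetTrees}: that routine does \emph{not} maintain the invariant ``every removed vertex is $2$-dominated by $S$''---each iteration also removes $W_i\cup N(W_i)$, where the $W_i$ are uncovered high-degree vertices that require a separate clean-up, and when $\Delta$ is large their components can have size $\poly(\Delta)\cdot\log n$.

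The paper closes both gaps by inserting a different first step: it runs the sampling-based \procdegreedropsampling from \Cref{sec:degreeDropHighGirth1} to drive the maximum degree down to $\Dsamp=\poly\log n$ in $O(\log\log\Delta)$ rounds with high probability and \emph{without any leftover sets}. Only then does it switch to the shattering-based \procdegreedrop; because the degree is already polylogarithmic, the resulting leftover components now have \emph{size} $N=O(\poly\log n)$ (\Cref{lem:girth:smallcomp}), and Barenboim--Elkin applied per component finishes in $O(\log N/\log\log N)=O(\log\log n/\log\log\log n)$ rounds.
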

This is optimal as long as $O(\log\log\Delta)=O(\log\log n/\log\log\log n)$ and near-optimal otherwise. To the best of our knowledge when parameterizing the runtime as a function of $n$ and $\Delta$ the best prior $2$-ruling set algorithm on trees requires 
$O(\log^{1/3}\Delta+\poly\log\log n)$ rounds and can be obtained by combining \cite{BEPSv3} and \cite{GhaffariImproved16}.

    \paragraph{Revisiting MIS on Trees.}
    A key part of the current state-of-the-art result for MIS on trees is to give stronger guarantees to the classic algorithms (such as Luby's) using neighborhood independence~\cite{lenzen-tree-2011}.
    Our core technical ingredient for \Cref{thm:rulingSetTrees} is a better analysis of Luby's algorithm when run on trees, and, in particular, we aim to leverage the independence in a similar fashion as \cite{lenzen-tree-2011}.
    While their general results are correct, there are some dependencies overlooked in the proof of one of the central lemmas of \cite{lenzen-tree-2011}. As a part of our work, we show how these dependencies can be formally handled.

\subsection{Technical Overview}
\label{sec:tecOverview}
The basic building block of our results is a natural variant of the classic MIS algorithm designed independently by Luby and Alon, Babai, and Itai~\cite{luby86, alon86}. 
In rounds, each node picks a random number from the range $[0, 1]$ and becomes a ruling set node if it is a local minimum.
Then, each node adjacent to a local minimum is removed from the graph.
The classic analysis for MIS in general graphs shows that in each round, a constant fraction of edges disappear from the graph in expectation, which leads to an $O(\log n)$ runtime.
Later, Lenzen and Wattenhofer showed that this process terminates much faster in trees,i.e., in $\widetilde{O}(\sqrt{\log n})$ rounds \cite{lenzen-tree-2011}.
They used a different line of attack and showed that within $\widetilde{O}(\sqrt{\log n})$ rounds, the degree of each node becomes small or they are removed from the graph.

\paragraph{Our Analysis: Take I.} 
To obtain a double-logarithmic runtime, our goal is to reduce the maximum degree $\Delta_i$ in iteration $i$ to $\Delta_{i+1} \leq \Delta_{i}^{1 - \varepsilon}$ for iteration $i + 1$. 
The first simple observation is that since we are looking for a 2-ruling set, we can always remove the 2-hop neighborhoods of the local minima from the graph. Second, we do not aim for a degree reduction that holds with high probability, but rather we are only aiming that the probability for a node to retain a large degree is at most $1/\poly\Delta$, which is sufficient to use the powerful shattering framework. Observe that a node with degree $d$ becomes a local minimum with probability $1/d$.
The analysis is then split into two cases for each node $v$. 
Our goal is to show that there are no nodes with degree more than, say $\Delta_i^{0.8}$, after iteration $i$.
So, consider a node $v$ with degree more than $\Delta_i^{0.8}$ and notice that $v$ falls into at least one of the following cases that can be analyzed easily when \emph{ignoring} dependencies.
\begin{compactenum}[1)]
\item  If there are at least $\Delta^{0.6}$ neighbors of degree at most $\sqrt{\Delta}$, the probability that no neighbor is  a local minimum is at most $(1-1/\sqrt{\Delta})^{\Delta^{0.6}}\leq \exp(-\Delta^{-0.1})$, or
\item there are at least $\Delta^{1.1}$ unique $2$-hop neighbors. Then, none of the $2$-hop neighbors is a local minimum with probability at most $(1-1/\Delta)^{\Delta^{1.1}} \leq \exp(-\Delta^{-0.1})$. 
\end{compactenum}
Hence, in both cases, a node with degree more than $\Delta_i^{0.8}$ in the beginning of iteration $i$ will be removed from the graph with probability at least $1-\exp(-\Delta^{-0.1})$. 
This exponentially small error probability would be sufficient to use the shattering framework \cite{BEPSv3}, by placing nodes failing the progress guarantee into a clean-up phase that can be dealt with at the very end of the algorithm. 

\paragraph{Main Challenge: Dependencies.} On a quick glance, the previous paragraph may seem to give the desired intuition for a doubly-exponential degree drop. 
Unfortunately, the events whether nodes in $v$'s $1$-hop neighborhood are local minima are not independent of  $v$'s random choice and hence not mutually independent. Similarly, these events are not independent for all nodes in the $2$-hop neighborhood. 
In fact, with probability at least $\Theta(1/d(v))$, the random number of node $v$ is one of the two smallest random numbers in its $1$-hop neighborhood. Let $u$ be the other node whose random number is among the two smallest numbers in $v$'s $1$-hop neighborhood.
Conditioned on the event that $v$'s random number is one of the two smallest random numbers in its $1$-hop neighborhood, the probability that $v$ is the local minimum is at most $1/2$. We are also not guaranteed that $u$ is a local minimum because $u$ may have other neighbors. Even conditioned on the event that the random number of $u$ is smaller than all of the random numbers of $u$'s $1$-hop neighbors except the random number of $v$, there is a constant probability that $u$ is not a local minimum itself. 
Hence, we cannot hope that $v$ is removed with probability higher than (roughly) $1 - \Theta(1/d(v))$, contradicting initial intuition.

\paragraph{Our Analysis: Take II.} Our analysis still follows the intuition of the two cases above but the analysis is more complex for each case. As explained, for the first case, the probability to remain in the graph can only be bounded by roughly $\Theta(1/d(v))$. In order to analyze this probability, we root the tree at $v$. We call a node \emph{successful} if it is a local minimum with regard to its children in the tree, but ignoring its parent. The advantage is that the event of being successful is independent for different nodes on the same level of the tree. For case 1) we show that $v$ is likely to have  $\Delta^{0.1}$ successful neighbors. Now, following the counter example above, it is not unlikely that $v$ actually \emph{eliminates} all its successful neighbors by having a smaller random value than all of them. But we bound the probability that this happens by $O(1/\Delta^{0.6})$, see \Cref{lem:preparation}. To get a probability that is inversely proportional to the number of low-degree neighbors---as opposed to the (expected) number of successful nodes---we take the skew on the probability distribution of successful nodes into account, i.e., once we condition on a node being successful, its probability distribution in $[0,1]$ is not uniform anymore. 
For case 2) we perform a similar analysis but need to reason about successful nodes in the $2$-hop neighborhood that are not eliminated by their $1$-hop neighbors.

As a side result of this analysis, we also recover one of the central progress lemmas in the analysis by Lenzen and Wattenhofer \cite{lenzen-tree-2011}. This is interesting, as to the best of our understanding, their analysis overlooked some dependencies. They also root the tree at node $v$ and have a similar definition of a node being successful. But once they condition on a node being successful, they reason about the random values of the node's children and assume their independence and uniform distribution in $[0,1]$, both of which seem to be faulty, see \Cref{rem:lenzen} for our resolution.

\paragraph{The Clean-up.}
As mentioned, our degree reduction relies on the shattering framework, that is, in each iteration we obtain a leftover set of nodes that we need to deal with at the end of the process. Each of these leftover sets consists of connected components of size at most $N=O(\poly\Delta_i \log n)$. Optimally, we would just like to run a simple deterministic logarithmic time MIS or ruling set algorithm on them, but as $\Delta_i$ may be very large this is not sufficient for our runtime goal of $O(\log\log n)$. Instead, inspired by \cite{CHLPU20}, we prove that additional structure exists in the small components. This can be used to show that a classic rake \& compress-based procedure \cite{Miller1985} can deal with each connected component in $O(\log\log n)$ time. In order to process all $O(\log\log n)$ leftover sets in parallel, we also remove the neighborhood of each set before going to the next iteration of the degree reduction. Hence, there are no edges between the sets, and they can be handled in parallel.

\paragraph{High-girth graphs.}
The analysis of our probabilistic degree reduction does not require that the graph actually is a tree, but having high girth is sufficient, except that the rake \& compress procedure in the clean-up phase does not run in $O(\log\log n)$ rounds. In fact, a rake \& compress-based approach that relies on iteratively removing nodes with constant degrees does not even make sense in such graphs. As components are of size $N=\poly\Delta\log n\gg \poly\log n$ we cannot even profit from the very efficient $\widetilde{O}(\log^{5/3} N)$ MIS algorithm from \cite{GG24}. 

Hence, we design a second-degree drop procedure that reduces the degrees to $\Delta'=\poly\log n$ in $O(\log\log n)$ rounds. The algorithm enhances the previous one by incorporating a sampling-based approach and introducing distinct phases within each iteration of the degree reduction process, guaranteeing the degree drop with high probability and without producing any leftover components. When degrees are polylogarithmic in $n$ we can revert to our previous method and use the shattering framework as components are now bounded by $N=O(\poly\Delta'\log n)=O(\poly\log n)$. Finishing them off with the MIS algorithm of \cite{GG24} takes $\widetilde{O}(\log^{5/3} N)=\widetilde{O}(\log^{5/3}\log n)$ rounds and yields \Cref{thm:rulingSetHighGirth}. For \Cref{thm:logloglog} we replace the MIS computation on the small components with the more efficient algorithm of \cite{GG24} to compute $O(\log \log N)=O(\log \log \log n)$-ruling sets in time $\widetilde{O}(\log N)=\widetilde{O}(\log\log n)$.  

In principle this second degree reduction is better than our first degree reduction, also on trees, and it could also be extended to work for graphs with smaller than polylogarithmic degrees, but still requiring the shattering framework. Still, we believe that our analysis, the result, and recovering Lenzen's \& Wattenhofers progress lemma is interesting in its own right.  



\subsection{Related Work}
We have already covered the most important related work. In addition to the results presented in this section, we also recommend the excellent related work section and overview tables in the lower bound papers \cite{BBKO2021hideandseek,BBO22} and in \cite{BEPSv3} for further reference. 

\paragraph{Lower bounds.} As mentioned, the strongest lower bound for computing $\beta$-ruling sets is due to \cite{BBKO2021hideandseek} and proves that randomized algorithms requires the minimum of $\widetilde{\Omega}(\log^{1/(\beta + 1)} \Delta)$ and $\widetilde{\Omega}(\log_\Delta \log n)$ rounds while deterministic algorithms require at least $\Omega(\beta\Delta^{1/(\beta)})$ and $\Omega(\log n/(\beta\log\log n))$ rounds. Much earlier it was known that MIS, i.e., $1$-ruling sets require $\Omega(\log^*n)$ rounds. Linial showed this for deterministic algorithms \cite{linial92} and Naor for randomized algorithms \cite{Naor91}. Both bounds also hold for $\beta$-ruling sets whenever $\beta$ is constant. 

\paragraph{Randomized Algorithms. } Ghaffari's seminal MIS algorithm is still the state of the art for MIS on general graphs and runs in $O(\log\Delta)+O(\poly\log\log n)$ rounds \cite{GhaffariImproved16}. His algorithm can also be combined with prior work on ruling sets \cite{tushar-super-fast-2014} to compute $\beta$-ruling sets in $O(\beta\log^{1/\beta}\Delta)+O(\poly\log\log n)$ rounds.

Kothapalli and Pemmaraju showed that one can obtain a sublogarithmic, $O(\log^{3/4} n)$ round algorithm for 2-ruling sets, and for 3-ruling sets, a $\widetilde{O}(\log^2 \log n)$-rounds algorithm for trees and a $O(\log^3 \log n)$-rounds algorithm for bounded arboricity and girth at least 7; the runtime stated in the paper is worse, but it can be improved by plugging in the newest results on network decompositions \cite{GG24}  \cite{kothapalli-superfast2012}. Bisht, Kothapalli and Pemmaraju generalized the $3$-ruling set approach to $\beta$-ruling sets for general graphs; we have already discussed that it can be combined with \cite{GhaffariImproved16} to obtain $\beta$-ruling sets in $O(\beta\log^{1/\beta}\Delta)+O(\poly\log\log n)$ rounds \cite{tushar-super-fast-2014}. 
Gfeler and Vicari, combined with \cite{BEPSv3}  obtain $O(\log \log n)$-ruling set in time $O(\log \log n)$ plus runtime of MIS on graphs with polylogarithmic degrees \cite{Gfeller07}.

\paragraph{Deterministic algorithms.}
Early works computed $O(\log n)$-ruling sets deterministically in $O(\log n)$ rounds \cite{awerbuch89}. Also see \cite{CONGEST_rulingsets,KMW18-rulingSets} who slightly improved this algorithm to work without unique IDs and to trade runtime for domination parameter $\beta$. There are also algorithms that optimize the truly local complexity of ruling sets with the current state of the art by Maus for $\beta$-ruling sets being $O(\Delta^{2/(\beta+2)})+O(\log^* n)$ rounds \cite{M21}, slightly improving a result of a similar flavor of Schneider, Elkin, and Wattenhofer \cite{schneider2013}.

Barenboim and Elkin designed a deterministic MIS algorithm (and hence also one for ruling sets) for trees and graphs of bounded arboricity. It has runtime  $O(\log n/\log\log n)$ \cite{Barenboim2010}.  Recently \cite{GG24} obtained MIS in $\widetilde{O}(\log^{5/3}n)$ deterministically, and $O(\log \log n)$-ruling sets in $\widetilde{O}(\log n)$ rounds, also deterministically. 

For ruling set algorithms in the \CONGEST model we refer to \cite{MPU23}, and for the congested clique to \cite{CKPU23} and for the massively parallel computation model to \cite{GP24}. 


\section{Preliminaries \& Notation}

    Given a graph $G=(V,E)$, we denote by $N(v)$ the neighborhood of $v$ (excluding $v$ itself). For a node $v\in V$ we let $d_v \coloneqq |N(v)|$ denote the degree of $v$ and for two nodes $v,w$ we let $dist_G(v,w)$ be the length of the shortest path between $v$ and $w$ in $G$. Furthermore we denote by $N_2(v)$ all nodes $w \in V$ with $dist_G(v,w)\leq2$, i.e., the 2-hop neighborhood of $v$ (again excluding $v$ itself). By $N_2^-(v)$ we denote all nodes $w \in V$ with $dist_G(v,w)=2$, i.e., the \emph{exclusive} 2-hop neighborhood of $v$. Note that $N_2^-(v) \subset N_2(v)$ and $N_2^-(v) = N_2(v) \setminus N(v)$. For a subset $W \subset V$, let $G[W]$ be the induced subgraph of the subset $W$. 

    For given numbers $r_v$ for each node $v$, a node $u$ is a \emph{local minimum} if its number $r_u$ is smaller than the number of all of its $1$-hop neighbors, i.e., $r_u < r_w$ for all $w\in N(u)\setminus\{u\}$. 
    
    A set $U\subseteq V$ of nodes is \emph{$x$-independent} for some integer $x\geq1 $ if $dist(u,v)>x$ for all nodes $u,v \in U$.

     An algorithm is correct w.h.p. (with high probability) if there exists a constant $c>1$ such that it errs with probability $\leq 1/n^c$. 

     We use the notation $\widetilde{O}$ as follows: $g(n) \in \widetilde{O}(f(n))$ if there exists a constant $c$ such that $g(n) \in O(f(n) \cdot \log^c(f(n)))$.

    \section{Ruling Sets in Trees}
    \label{sec:trees}
    The goal of this section is to prove the following theorem. 
\thmRulingSetTrees*

As explained in \Cref{sec:tecOverview}, the core of  our result is the following lemma. It shows that a constant number of iterations of the \emph{pick a random number, let local minima join the ruling set, and remove the $2$-hop neighborhood} is likely to drastically reduce the maximum degree induced by uncovered nodes. Degree reduction does not hold with high probability, but nodes not satisfying the claim will in some sense form small components (see \ref{itm:4}) that can be solved efficiently thereafter. 

    \begin{restatable}[Degree-drop Lemma]{lemma}{lemDegreeDropTrees}
    \label{lem:degreedroptrees}
    Let $c>0$ be a constant.
    For a tree $T$ with maximum degree $\Delta \geq 18$ there is a $O(1)$-round \LOCAL algorithm that finds sets of nodes $S,W\subseteq V(T)$ such that with probability $1-n^{-c}$ the following hold,
 
   \begin{compactenum}[(a)]
          \item $S$ is an independent set\label{itm:1},
          \item There is no edge between any node in  $W$ and any node in $S$.\label{itm:2}, 
          \item the maximum degree  of $G[V(T)\setminus (S \cup N_2(S)\cup W)]$ is at most $\DThres=\Delta^{3/4}$\label{itm:3},
          \item each connected component of $T[W]$ has a 6-distance dominating set of size $O(\log n)$. The distance $6$ is measured in $T$ and not in $T[W]$.
          \label{itm:4}
        \end{compactenum}   
        All of these properties except for \ref{itm:4} hold deterministically regardless of the choice of the algorithms random bits. 
    \end{restatable}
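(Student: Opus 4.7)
The algorithm is the natural one: each node samples $r_v\in[0,1]$ independently, shares $r_v$ within distance $3$, and we take $S$ to be the local minima of this labeling and $W := \{v\notin S\cup N_2(S) : d_v>\Delta^{3/4}\}$. The computation is clearly $O(1)$ rounds in \LOCAL. Properties \ref{itm:1}--\ref{itm:3} follow immediately from the construction and hold for every outcome of the random bits: two adjacent vertices cannot simultaneously be local minima, $W$ is defined disjoint from $N(S)$, and vertices outside $S\cup N_2(S)\cup W$ have original (hence residual) degree at most $\Delta^{3/4}$.

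Property \ref{itm:4} requires a pointwise bound of the form $\Pr[v\in W]\leq p$ with $p=1/\mathrm{poly}(\Delta)$ for every $v$ of original degree $>\Delta^{3/4}$, which I would then feed into a shattering union bound. Following the sketch in the overview, split into two structural cases. In Case~(i), $v$ has at least $\Delta^{0.6}$ neighbors of degree $\leq\sqrt{\Delta}$, and we aim to show some neighbor is a local minimum (forcing $v\in N(S)$). In Case~(ii), $v$ has at least $\Delta^{1.1}$ distinct $2$-hop neighbors, and we aim to show one of them is a local minimum (forcing $v\in N_2(S)$). Every $v$ of degree $>\Delta^{3/4}$ falls into at least one case: if Case~(i) fails, then more than $\Delta^{3/4}-\Delta^{0.6}$ neighbors of $v$ have degree exceeding $\sqrt{\Delta}$, and because $T$ is a tree their $1$-hop neighborhoods are pairwise disjoint, yielding $|N_2^-(v)|>(\Delta^{3/4}-\Delta^{0.6})(\sqrt{\Delta}-1)>\Delta^{1.1}$.

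The main obstacle is the dependency between the local-minimum events at distinct neighbors of $v$, all of which involve $r_v$. I would resolve this in the way highlighted by the overview: root $T$ at $v$ and call a node $u$ \emph{successful} if $r_u$ is strictly smaller than the values at all of $u$'s children in the rooted tree. Successful events at distinct children of $v$ are mutually independent because their subtrees are disjoint, and conditioned on $u$ being successful the value $r_u$ has density $d_u(1-x)^{d_u-1}$ rather than uniform. In Case~(i) this reduces to showing that $v$'s uniform value undercuts all successful low-degree neighbors with probability $O(1/\Delta^{0.6})$: conditioning on $r_v=x$, the probability that a successful neighbor $u$ of degree $d_u\leq\sqrt{\Delta}$ is nevertheless not a local minimum is $1-[1-(1-x)^{d_u}]/d_u$, which is bounded by $1-\Omega(\min(x,1/d_u))$. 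Taking the product over the $\Delta^{0.6}$ low-degree neighbors and integrating $x$ separately on $[0,1/\sqrt{\Delta}]$ and $[1/\sqrt{\Delta},1]$ yields the desired bound. Correctly accounting for the skewed density after conditioning on success is precisely the subtlety that appears to be glossed over in \cite{lenzen-tree-2011}, and noting the resolution justifies a remark. Case~(ii) proceeds in parallel, replacing successful children of $v$ by successful grandchildren that survive their own parent.

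Finally, for \ref{itm:4} I would invoke shattering. The event $v\in W$ depends only on the randomness in the $3$-hop neighborhood of $v$, so whenever $I\subseteq V(T)$ has pairwise distances $\geq 7$, the events $\{u\in W\}_{u\in I}$ are mutually independent and $\Pr[I\subseteq W]\leq p^{|I|}$. If a connected component of $T[W]$ had no $6$-dominating set of size $k$, a maximal $6$-independent subset of it would itself dominate at distance $6$ and so have size $>k$, producing an all-bad $6$-independent set $I$ spanning a connected subtree of $T$. Enumerating such configurations by their Steiner skeleton rooted at some vertex gives at most $n\cdot\Delta^{O(k)}$ possibilities, and multiplying by $p^k$ drives the total below $n^{-c}$ for $k=\Theta(\log n)$ whenever the single-node bound $p\leq\Delta^{-c_0}$ holds with $c_0$ large enough to beat the growth factor; in the remaining small-$\Delta$ regime the relevant subtrees are shallow enough that the claim holds by direct inspection.
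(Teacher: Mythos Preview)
Your deterministic verification of \ref{itm:1}--\ref{itm:3} is fine and matches the paper. The gap is in \ref{itm:4}: a \emph{single} round of local-minima-join does not give a per-node failure probability small enough for the shattering union bound, and this cannot be repaired by a sharper analysis.

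Your own Case~(i) calculation already shows why. With $\geq \Delta^{0.6}$ low-degree neighbors, the expected number of successful ones is $\Theta(\Delta^{0.1})$, and after the skew argument the probability that $r_v$ undercuts them all is $\Theta(1/\Delta^{0.6})$. So the best you can say is $p\le\Delta^{-c_0}$ with $c_0\approx 0.6$. This is not an artifact: as the technical overview points out, with probability $\Theta(1/d(v))$ the value $r_v$ is among the two smallest in $N(v)\cup\{v\}$, and conditioned on this there is a constant chance no neighbor of $v$ is a local minimum; hence a single iteration can never push $p$ below $\Theta(1/\Delta)$. But your shattering step needs $p^{k}$ to kill a count of order $n\cdot\Delta^{\Theta(k)}$ (the paper's enumeration gives an exponent of roughly $8$), so you need $c_0$ to exceed that exponent. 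With $c_0\le 1$ the union bound diverges, and your fallback ``small-$\Delta$ regime'' does not rescue anything because the shortfall is in the exponent of $\Delta$, not in $\Delta$ itself.

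The paper's fix is exactly to iterate: \procdegreedrop runs $\LMJ$ for $16c$ rounds, and the single-round bound $p_1\le\Delta^{-1/16}$ (their \Cref{lem:localSuccess}) then amplifies to $\Delta^{-c}$ for the full $W$, which is what the enumeration in \Cref{lem:shatteringBase} consumes. Everything else in your outline---the rooted-tree notion of ``successful'', the skewed density after conditioning, and the greedy construction of a $6$-independent dominating set to feed the union bound---is essentially the paper's argument; you are only missing the repetition that makes the exponent tuneable.
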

    
  The procedure of \Cref{lem:degreedroptrees} is denoted by \procdegreedrop. We detail on it in \Cref{sec:degreeDrop} where we also present pseudocode. Next, we continue with the high level overview of the algorithm. 

\paragraph{High level overview on Ruling Set Algorithm:}    We iterate the procedure \procdegreedrop from \Cref{lem:degreedroptrees} $R = O(\log \log n)$ times, resulting in independent sets $S_1,\dots,S_R$, unsolved put-aside sets $W_1, \ldots W_R$, and a remaining forest with a constant maximum degree.  In iteration $i$, we also remove $N(W_i)$ from the graph, ensuring that $W_i$'s are non-adjacent. \Cref{lem:degreedroptrees}  shows that each of the $W_i$'s can be covered by an $O(1)$-distance dominating set of size $O(\log n)$, i.e., a subset of nodes $D_i\subseteq W_i$ such that any $w\in W_i$ has a node in $D_i$ in in constant distance. This dominating set is never actually constructed but its existence can be used to prove that a certain rake-and-compress based algorithm to compute an MIS $Z_i$ of $T[W_i]$ runs in  $O(\log\log n)$ rounds \cite{CHLPU20}. As $W_i$s are non-adjacent each of the $R$ $W_i$s can be handled in parallel in this clean-up phase. See \Cref{sec:cleanup} for the details. Lastly, we find an MIS $S_{R+1}$ on the remaining constant-degree instance in $O(\log^*n)$ rounds \cite{linial92}. In total we return the union of the computing sets $S_1, \ldots, S_{R+1}$, and $Z_1, \ldots, Z_R$. The result is a $2$-ruling set. 
 Next, we provide pseudocode for the whole ruling set algorithm. The details of each step are presented in the sections thereafter.

      \begin{algorithm}[!htbp]
    \caption{Randomized 2-ruling set for trees}
    \label{alg:complete}
    \begin{algorithmic}[1]
       \State Initialize $S_i,W_i,Z_i \leftarrow \emptyset$ for $i=1,\dots,R$ and $S_{R+1} \leftarrow \emptyset$.
       \For{$i=1, \dots, R=O(\log\log n)$} \label{line:treesforloop}
       \State $S_i,W_i \leftarrow$ \procdegreedrop($T$, $\Dmax^{(3/4)^i}$)
       \State Remove  $S_i \cup N_2(S_i) \cup W_i\cup N(W_i)$ from $T$ \label{alg:complete:neighborremoval}
       \EndFor\label{line:endtreeforloop}
        \State \textbf{for} $i=1,\dots,R$ in \textbf{parallel}: $Z_i \leftarrow $CleanUp($W_i$) \label{alg:complete:cleanup}
       \State $S_{R+1} \leftarrow $MIS(T)  \emph{//using $O(\log^*n)$ rounds on remaining graph with $O(1)$ maximum degree  \cite{linial92}}\label{alg:complete:finalMIS}
       \State \textbf{return:} $S_{R+1}\cup \bigcup_{i=1}^R (S_i \cup Z_i)$
    \end{algorithmic}
    \end{algorithm}

\subsection{Degree Drop}    
\label{sec:degreeDrop}
The objective of this section is to prove \Cref{lem:degreedroptrees}. 

     Through this section, we will use $\Dmax$ to denote the maximum degree of the input tree of $\procdegreedrop(T,\DThres)$ and $\DThres$ to denote the threshold degree to which we want to drop in one call of \procdegreedrop. For our purposes we let $\DThres \coloneqq \Dmax^{3/4}$. In particular in \Cref{alg:complete} we invoke \procdegreedrop($T,\Dmax^{(3/4)^i}$) in the $i$-th iteration, ensuring the maximum degree drops to the respective threshold. 

    \procdegreedrop (details below) begins with an empty ruling set $S=\emptyset$. Then we iteratively for $16c$ iterations let uncovered nodes pick a  numbers in $[0,1]$ uniformly at random, local $1$-hop minima join the ruling set $S$, and their $2$-hop neighbors are removed from the graph. The constant $c$ is the constant used in \Cref{lem:degreedroptrees} and we remark that the probability in that statement is tuneable. This $O(1)$-round procedure that we term \LMJ has been used frequently in the literature to compute ruling sets and maximal independent sets on trees and general graphs, e.g., \cite{kothapalli-superfast2012,tushar-super-fast-2014,lenzen-tree-2011,luby86}. See \Cref{alg:lmj} for pseudocode. 
    \begin{algorithm}[!htbp]
    \label{alg:lmj}
      \begin{algorithmic}[1]
        \Procedure{\LMJ}{$T$}  \emph{// we use $\LMJshort $ as short notation}
        \State $S \leftarrow \emptyset$
        \State Uniformly and independent at random compute a real $r_v \in [0,1]$ for all $v\in V(T)$
        \State In parallel for all $v \in V(T)$ 
        \If{$r_v < r_w \forall w \in N(v)$}
        \State $S \leftarrow S \cup \{v\}$ \emph{// those are the nodes joining the independent set}
        \EndIf
        \State Return $S$
        \EndProcedure
      \end{algorithmic}
    \end{algorithm}

    
    \begin{algorithm}
    
      \begin{algorithmic}[1]
      
        \Procedure{\procdegreedrop}{$T, \DThres$}  
        \label{alg:degdrop}
        \State $S \leftarrow \emptyset$
          \For{$j=1,\dots, 16c$}
            \State $S \leftarrow$ $S\cup\LMJ(T)$ 
            \State Remove  $S\cup N_2(S)$ from $T$\label{alg:degreeDrop:neighborremoval} \emph{// those are the covered nodes by S}
          \EndFor
          \State $W \leftarrow \{v \in V(T) \mid d_v > \DThres \}$ \emph{// those are the nodes that will treated in the Clean-up \ref{sec:cleanup}}
          \State Return $S,W$
        \EndProcedure
      \end{algorithmic}
    \end{algorithm}





\paragraph{Outline.} In \Cref{sec:localSuccess} we first focus on the local success probability, i.e., the probability that a large-degree node is removed in a single iteration of \LMJ. In \Cref{sec:shattering}, we prove that $16c$ iterations of \LMJ shatters the remaining large-degree nodes into small components. Then in \Cref{sec:lemDegreeDropProof} we prove \Cref{lem:degreedroptrees}. We use $\LMJshort$ as a short notation for \LMJ.

    \subsubsection{Local success probability of \LMJ}
    \label{sec:localSuccess}
The objective of this section is to prove the following lemma.
\begin{lemma}
\label{lem:localSuccess}
 Let $\Dmax$ be the maximum degree of the input tree $T$.
        Consider a node $v \in V(T)$ with degree $d_v \geq \DThres$ at the start of a call of procedure $\LMJshort(T)$ of $\procdegreedrop(T,\DThres)$. Then the probability that $v$ is not covered after the call of $\LMJshort(T)$ is at most $(1/ \Dmax)^{1/16}$.
\end{lemma}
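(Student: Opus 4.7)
The plan is to bound the probability that no node within distance $2$ of $v$ becomes a $1$-hop local minimum during one invocation of $\LMJshort$---which is exactly what it means for $v$ to remain uncovered. I would root $T$ at $v$ so that $r_v$ is independent of all randomness in the subtrees below, and call a non-root node $u$ \emph{successful} when $r_u$ is strictly smaller than the values at all children of $u$ in the rooted tree. Because this event depends only on randomness inside $u$'s subtree, successful-events across nodes at the same level of $T$ are mutually independent. I would then partition $N(v)$ into low-degree children $L=\{u\in N(v):d_u\le\sqrt{\Dmax}\}$ and high-degree children $H=N(v)\setminus L$, and case-split on whether $|L|\ge\Dmax^{0.6}$ (Case~1) or not (Case~2).

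In Case~1, each $u\in L$ is successful with probability at least $1/\sqrt{\Dmax}$, independently across $u\in L$, so a Chernoff bound yields that the random set $L^\star$ of successful low-degree children has $|L^\star|\ge\Dmax^{0.1}$ with probability $1-\exp(-\Omega(\Dmax^{0.1}))$. A successful $u$ becomes a genuine local minimum exactly when $r_u<r_v$, so the event ``no $u\in L^\star$ is a local minimum'' is $r_v\le\min_{u\in L^\star} r_u$. Conditioning on the successful-events biases each $r_u$ toward $0$: a direct computation gives the conditional CDF of $r_u$ as $1-(1-t)^{d_u}$, and the variables $\{r_u\}_{u\in L^\star}$ remain independent since they live in disjoint subtrees. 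Since $r_v$ is an independent uniform variable, the conditional probability of the bad event evaluates to $\int_0^1(1-t)^{\sum_{u\in L^\star} d_u}\,dt=1/(1+\sum_{u\in L^\star} d_u)\le 1/(1+|L^\star|)\le\Dmax^{-0.1}$, which together with the Chernoff failure probability sits well below $\Dmax^{-1/16}$.

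In Case~2, $|H|\ge d_v-\Dmax^{0.6}\ge\Dmax^{3/4}/2$, and because $T$ is a tree the $|H|\cdot(\sqrt{\Dmax}-1)=\Omega(\Dmax^{1.25})$ grandchildren of $v$ through $H$ are pairwise distinct. I would apply the same machinery one level deeper: each grandchild $x$ is successful with probability $\ge 1/\Dmax$ independently across the disjoint grandchild-subtrees, so Chernoff yields a successful-grandchild set $X^\star$ of size $\Omega(\Dmax^{0.25})$ with only an $\exp(-\Omega(\Dmax^{0.25}))$ failure probability. Grouping $X^\star=\bigsqcup_{w\in H}X_w^\star$ by parent, a successful $x\in X_w^\star$ is a true local minimum exactly when $r_x<r_w$; independence across the disjoint $w$-subtrees together with the Case-1 computation applied with $r_w$ in the role of $r_v$ bounds the joint probability by $\prod_{w:|X_w^\star|\ge 1}1/(1+\sum_{x\in X_w^\star} d_x)\le 1/(1+|X^\star|)=O(\Dmax^{-0.25})$, again inside the $\Dmax^{-1/16}$ target.

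The main obstacle is the dependency handling flagged in the technical overview: the naive ``each low-degree neighbor of $v$ becomes a local minimum independently with probability $\approx 1/\sqrt{\Dmax}$'' is wrong, because the local-minimum event involves $r_v$ itself. Rooting at $v$ and decomposing by successful-events along disjoint subtrees recovers the correct independence structure, but then the conditional distribution of $r_u\mid u\text{ successful}$ is skewed toward $0$ and must be tracked explicitly to quantify $v$'s probability of beating the minimum of the successful values. This is precisely the step that was overlooked in the earlier tree analysis of \cite{lenzen-tree-2011}, and carrying it out rigorously is the technical heart of the lemma.
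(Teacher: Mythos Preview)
Your proposal is correct and follows essentially the same approach as the paper: root the tree at $v$, introduce the ``successful'' notion so that independence holds across disjoint subtrees, case-split on whether $v$ has many low-degree versus many high-degree children, and in each case exploit the skewed conditional distribution of $r_u$ given that $u$ is successful to bound $\mathbb{P}[r_{\text{parent}}<\min r_u]$ by $1/(1+\sum d_u)$. The only differences are cosmetic---you choose the threshold $\sqrt{\Dmax}$ and case boundary $\Dmax^{0.6}$ where the paper uses $\DThres^{3/4}=\Dmax^{9/16}$, and in the high-degree case you run a single global Chernoff over all grandchildren whereas the paper (via \Cref{lem:preparation}) does a per-child-of-$v$ analysis and takes a product---but both routes lead to the same bound.
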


In one case of our analysis we have $k$ $1$-hop neighbors (for some suitable choice of $k$) and each of them is a local minimum with probability $1/\sqrt{k}$. In this case one is tempted to claim that the probability that none of them is a local minimum is at most $(1-1/\sqrt{k})^k\leq e^{-\sqrt{k}}$. 
The main obstacle in proving \Cref{lem:localSuccess}  is that this is not correct as the events whether the neighbors of $v$ are local minima are not independent despite their random choices being independent and despite being in a tree. The same holds for the local-minima-events of $2$-hop neighbors that share a $1$-hop neighbor of $v$. These dependencies have been partially overlooked in the analysis of \LMJshort to design efficient MIS algorithms for trees in \cite{lenzen-tree-2011}.

  \paragraph{Rooting the tree, children, successful nodes, etc.}  Solely for the purpose of analysis, when we talk about a high-degree node $v$ getting removed during the algorithm, we will \emph{root} the tree $T$ at $v$. This allows us to talk about \emph{children} of nodes in that rooted tree. A node $u$ is the \emph{child} of a node $w$ if $dist(u,v) = dist(w,u)+1$. In particular $N(v)$ are the children of $v$ and for every child of $v$ its children are a subset of $N_2^-(v)$.     
    Further we will call a node $w\in N_2(v)$ \emph{successful} during one call of \LMJshort($T$) if $r_w < r_u$ for all children $u$ of $w$ with regard to rooted tree at $v$. Observe that a successful nodes does not 
    need to be a local minimum. 

    \begin{observation}\label{obs:successful}
    A node $w\in N_2(v)$ is successful with probability $1/\deg(w)$.
    \end{observation}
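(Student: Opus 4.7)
The plan is short because the claim is essentially a symmetry argument, with one small bookkeeping point to get right.

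First, I would unpack the definition in the rooted tree. Since $T$ is rooted at $v$ for the purposes of this analysis, every node $w \neq v$ has a unique parent among its neighbors, and the remaining neighbors are its children. In particular, for any $w \in N_2(v)$, $w$ is not the root, so $w$ has exactly $\deg(w) - 1$ children. This holds uniformly whether $w \in N(v)$ (in which case its parent is $v$ itself) or $w \in N_2^-(v)$ (in which case its parent lies in $N(v)$).

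Second, I would translate the successful-event into a min-event. By definition, $w$ is successful during the call of $\LMJshort(T)$ exactly when $r_w < r_u$ for each of the $\deg(w)-1$ children $u$ of $w$. Equivalently, $r_w$ is the strict minimum of the multiset
\[
R_w \;\coloneqq\; \{r_w\} \cup \{\,r_u : u \text{ a child of } w\,\},
\]
which has exactly $\deg(w)$ elements. Note that the random value $r_{p(w)}$ of the parent $p(w)$ of $w$ plays no role in the successful-event, which is precisely the feature that makes this probability clean.

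Third, I would conclude by symmetry. The values $\{r_x\}_{x \in V(T)}$ are i.i.d.\ continuous uniform on $[0,1]$, so all orderings of the $\deg(w)$ elements of $R_w$ are equally likely and ties occur with probability zero. Hence each of the $\deg(w)$ values is the strict minimum with probability $1/\deg(w)$, and so $\Pr[w \text{ is successful}] = 1/\deg(w)$.

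There is no real obstacle here; the only thing to be careful about is the off-by-one in counting children versus degree, which is resolved by the observation that the parent edge is excluded from the successful-event. This is also the reason the statement gives a clean $1/\deg(w)$ rather than $1/(\deg(w)+1)$, and it is the structural feature that will be exploited in the more delicate dependency arguments of the subsequent sections.
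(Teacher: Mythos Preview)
Your proof is correct and is precisely the standard symmetry argument the paper implicitly relies on; the paper itself states this as an observation without proof. The only point worth noting is your careful handling of the off-by-one (children versus degree), which is exactly right.
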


     When talking about a 2-ruling set $S$, we say that a node $v$ is \emph{covered} either if $v \in S$ or $dist(v,S) \leq 2$. This implies if a set of nodes is covered, we can remove them from our instance and compute a $2$-ruling set on the remaining instance. In particular if an independent set $S$ covers all nodes, $S$ is a $2$-ruling set. Finally we introduce the set $U \coloneqq \{v \in V(T) \mid dist(v,S)>3 \}$ of uncovered nodes at the end of the execution of \procdegreedrop.

    
    \paragraph{High level proof idea.} For the analysis we aim to prove that the probability of a node  $v$ with degree $d_v \geq \DThres$ remaining in the graph is small. Bounding this probability is done via  two cases. First we consider the case where $v$  has many low degree neighbors. We show that $v$ is likely to be covered due to a local minimum in its 1-hop neighborhood. As each low-degree $1$-hop neighbor has a good probability to be successful, $v$ is only not removed if the random number $r_v$ of $v$ is smaller than the numbers of all successful $1$-hop neighbors. We show that this probability is small. 
    In the second case node $v$ has many high degree neighbors. We argue that it is likely that $v$ gets covered due to a local minimum in its 2-hop neighborhood $N_2^-(v)$. While we cannot guarantee that individual $2$-hop neighbors are very likely to become successful because they may have a huge degree themselves, but as there are many of them, we have many candidates that may become local minima, also providing a small probability for $v$ to remain uncovered in this case. We summarize the discussion with the following lemma. Intuitively it is simple, but its statement and its proof are overly technical; conditioning has to be done in the right order to avoid dependencies. 
    \begin{restatable}{lemma}{lemPreparation}
    \label{lem:preparation}
    Let $\Dmax$ be the maximum degree of the input tree $T$.
Consider an arbitrary  node $v\in V$ that is active at the beginning of a call of $\LMJshort(T)$ and let $U$ be the set of uncovered nodes after the call. Let $d\geq 0$ be an integer and let
\begin{align*}
    H(v)=\{u\in N(v)\mid \deg(u)> d\} & & L(v)=\{u\in N(v)\mid \deg(u)\leq d\}
\end{align*}
 be the sets of low and high neighbors of $v$,  respectively.  Let $L_v$ be the random variable counting the successful nodes in $L(v)$, and for $u\in H(v)$ let $C_u$ be the number of successful children of $u$.   Then the following holds for every integer $\ell>0$.


  \begin{align}
\mathbb{P}[v \in U] 
           &  \leq \mathbb{P}[L_v<\ell] +\frac{1}{d\cdot \ell+1} 
            & (\textbf{covered by $1$-hop neighbor})
            \label{lem:caseSmallDegreePreparation} \\
\mathbb{P}[v \in U] 
 &  \leq   \prod_{u\in H(v)}\left(\mathbb{P}[C_u<\ell] +\frac{1}{\Dmax\cdot \ell+1}\right) & (\textbf{covered by $2$-hop neighbor})
          \label{lem:caseHighDegreePreparation}
        \end{align}
    \end{restatable}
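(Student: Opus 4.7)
The plan is to root $T$ at $v$ (and, for (\ref{lem:caseHighDegreePreparation}), additionally at each $u\in H(v)$) so that the random bits in node-disjoint children-subtrees are independent. Under this rooting the equivalence
\[
w\text{ is a local minimum}\iff w\text{ is successful \textbf{and} }r_w<r_{\operatorname{parent}(w)}
\]
translates the event ``$v\in U$'' into a conjunction of inequalities between each parent's random value and the values of its successful children. The technical heart is that once we condition on $w$ being successful, $r_w$ is no longer uniform on $[0,1]$: its conditional density is $d_w(1-t)^{d_w-1}$ (equivalently, $r_w$ is distributed as the minimum of $d_w$ i.i.d.\ uniforms), and exploiting this skew toward $0$ is what delivers the $d\cdot\ell$ and $\Dmax\cdot\ell$ factors in the denominators.

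For (\ref{lem:caseSmallDegreePreparation}) the event $v\in U$ forces no $u\in L(v)$ to be a local minimum, so every successful $u\in L(v)$ must satisfy $r_u\ge r_v$. I would split
\[\mathbb{P}[v\in U]\;\le\;\mathbb{P}[L_v<\ell]\;+\;\mathbb{P}[v\in U,\,L_v\ge \ell],\]
and for the second term first expose all random bits strictly below $N(v)$: this determines the successful set $A\subseteq L(v)$ and fixes the (skewed, mutually independent across $u\in A$) laws of $(r_u)_{u\in A}$. Integrating the independent uniform $r_v$ against these skewed laws gives
\[
  \mathbb{P}\bigl[r_v\le r_u\text{ for all }u\in A\,\big|\,A\text{ is the successful set}\bigr]
  = \int_0^1 \prod_{u\in A}(1-t)^{d_u}\,dt
  = \frac{1}{\sum_{u\in A}d_u+1},
\]
which, combined with $|A|\ge\ell$, the degree bound $d_u\le d$ on $L(v)$, and absorption in the expectation over $A$, yields the $\tfrac{1}{d\cdot\ell+1}$ term.

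For (\ref{lem:caseHighDegreePreparation}) I would push the same argument one level deeper. Define $E_u:=\{\text{no child of }u\text{ is a local minimum}\}$; the event $v\in U$ forces $E_u$ for every $u\in H(v)$. In the tree rooted at $v$, the subtree hanging off $u$ is node-disjoint from the subtree hanging off $u'$ whenever $u\ne u'\in H(v)$, so the random bits defining the events $(E_u)_{u\in H(v)}$ are disjoint and the events are mutually independent, yielding
\[\mathbb{P}[v\in U]\le\prod_{u\in H(v)}\mathbb{P}[E_u].\]
Each individual $\mathbb{P}[E_u]$ is bounded by reusing the argument of (\ref{lem:caseSmallDegreePreparation}) with $u$ replacing $v$ and the children of $u$ replacing $L(v)$; the only change is that the children's degrees are bounded not by $d$ but by the global $\Dmax$, which is precisely why the factor becomes $\Dmax\cdot\ell+1$.

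The main obstacle is ordering the conditioning correctly. Naively treating the successful $u$'s random values as fresh uniforms (the dependency slip in \cite{lenzen-tree-2011} we flag in \Cref{rem:lenzen}) overestimates the probability by a factor of roughly $d$ (resp.\ $\Dmax$), because it replaces the exponent $\sum_{u\in A}d_u$ in the integrand by $|A|$. The correct recipe is to expose the subtree bits \emph{first}, which simultaneously determines who is successful and induces the skewed conditional density on each $r_u$, and only \emph{then} to compare against the still-uniform parent value; the exponent $\sum_{u\in A}d_u$ that this ordering produces is what supplies the extra factor of $d$ (resp.\ $\Dmax$) in the denominator and hence the sharp rate.
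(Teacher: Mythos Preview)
Your overall architecture is right and matches the paper: root at $v$, expose the subtree bits first so that the successful set $A$ is determined and the values $(r_u)_{u\in A}$ acquire the skewed conditional law $d_u(1-t)^{d_u-1}$, then integrate the still-uniform $r_v$ against them. Your computation
\[
\mathbb{P}\bigl[r_v\le r_u\ \forall u\in A \,\big|\, A\text{ is the successful set}\bigr]=\int_0^1(1-t)^{\sum_{u\in A}d_u}\,dt=\frac{1}{\sum_{u\in A}d_u+1}
\]
is correct, as is the product structure over $H(v)$ in~(\ref{lem:caseHighDegreePreparation}).

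The gap is in the very last step. From $\frac{1}{\sum_{u\in A}d_u+1}$ together with $|A|\ge\ell$ and $d_u\le d$ you \emph{cannot} conclude $\le \frac{1}{d\ell+1}$: the inequality $d_u\le d$ points the wrong way. Small $d_u$ means \emph{less} skew toward $0$, hence a \emph{larger} value of $\frac{1}{\sum d_u+1}$. Concretely, if all $u\in L(v)$ are leaves ($d_u=1$) then every $u$ is successful, $L_v=|L(v)|$ deterministically, and your bound degenerates to $\frac{1}{|A|+1}\le\frac{1}{\ell+1}$, which can be much larger than $\frac{1}{d\ell+1}$. The same issue recurs verbatim in your treatment of~(\ref{lem:caseHighDegreePreparation}) with $\Dmax$ in place of $d$. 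The paper handles this not by bounding $\sum d_u$ but by a monotonicity step \emph{before} the split into $\{L_v<\ell\}$ and $\{L_v\ge\ell\}$: increasing every degree in $L(v)$ to exactly $d$ (respectively, every grandchild's degree to $\Dmax$) can only \emph{decrease} the probability that some child is a local minimum, hence only \emph{increase} the quantity being upper-bounded; after that worst-case assumption one has $\sum_{u\in A}d_u=d|A|\ge d\ell$ and the $\frac{1}{d\ell+1}$ bound follows. Note that this worst-case replacement also changes the law of $L_v$ (and of $C_u$), so the $\mathbb{P}[L_v<\ell]$ term in the final bound is really computed under the modified degrees --- which is exactly how the lemma is invoked in the proof of \Cref{lem:localSuccess}.
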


\begin{proof}[Proof Sketch of \Cref{lem:preparation}]
    For the full proof, see \Cref{app:probability}. 
    For the proof of \Cref{lem:caseSmallDegreePreparation} we only argue about the probability that a node in the direct neighborhood $N(v)$ joins $S$. 
    For the proof of \Cref{lem:caseHighDegreePreparation} we argue only about the probability that a node in the exclusive 2-hop neighborhood $N_2^-(v)$ joins $S$. 
    Here we crucially exploit, that we do not have cycles in the 3-hop neighborhood, i.e., no cycles of length at most $6$. This fact guarantees, that all the children of two distinct nodes in $N_2^-(v)$ are different and thus their probability to be successful are independent.
    Nevertheless children of a node $v$ have the same parent $v$ and thus are not local minima, independently.

    We circumvent this dependency by never conditioning on any event involving the random value of a parent of the considered nodes. For example in the core of the proof for \Cref{lem:caseSmallDegreePreparation} we condition on sets of $1$-hop neighbors of $v$ being successful. Assume that nodes in $W\subseteq N(v)$ are successful. To bound the probability of $v$ being covered we analyze the probability of $\mathbb{P}\big[r_v< \min_{w\in W} r_w \mid W\text{ successful}\big]$. To obtain the bound of the lemma statement we observe that the conditioning skews the probability distribution of $r_w$ favoring smaller numbers. This makes sense since a successful node is more likely to have a small number. 
\end{proof}

    \begin{proof}[Proof of \Cref{lem:localSuccess}]
        Let $v$ be a node that is still in $T$ at the beginning of an iteration (for loop) of $\procdegreedrop$($T,\DThres$) with $d_v > \DThres$. As in \Cref{lem:preparation} let $U$ be the set of uncovered nodes after the call of $\LMJshort$ of that iteration.  Let $d \coloneqq \DThres^{3/4}$ and let $H(v) = \{ u \in N(v) \mid \deg(u) > d\}$ and $L(v) = \{ u \in N(v) \mid \deg (u) \leq d \}$ like in \Cref{lem:preparation}. Observe that $|H(v)|+|L(v)| = d_v \geq \DThres$ holds. 

        
\medskip

        \noindent\textbf{\boldmath Case $|H(v)| \geq \DThres^{3/4}$:} 
        For each $u \in H(v)$ let $C_u$ be the random variable that counts the number of successful children of $u$. 
        When considering only local minima in $N_2^-(v)$, we can assume that every node $u \in H(v)$ has degree $d+1$ and that every node $w\in N_2^-(v)$ has degree $\Dmax$. These assumptions can be made, since they only decrease the probability that there exists a local minimum in $N_2^-(v)$; \Cref{lem:caseHighDegreePreparation} of \Cref{lem:preparation}  only argues about the 2-hop neighborhood $N_2^-(v)$.
        
        For every node $u\in H(v)$, we obtain $\mathbb{E}[C_u] = \frac{d+1}{\Dmax} $ via \Cref{obs:successful}. Since each child of $u$ is successful independently of all the other children, a Chernoff bound yields 
        \begin{align}
            \mathbb{P}[C_u < \mathbb{E}[C_u]/2] < \exp(-\mathbb{E}[C_u]/8) < \exp(-(d+1)/ 8(\Dmax)) < \exp(-d/8 \Dmax). 
        \end{align}
        Now let $\ell = \mathbb{E}[C_u]/2 = (d+1)/2(\Dmax)$ and apply \Cref{lem:caseHighDegreePreparation} of \Cref{lem:preparation} to obtain:
        \begin{align*}
            \mathbb{P}[v \in U] & \leq \prod_{u\in H(v)}\left(\mathbb{P}[C_u<\ell] +\frac{1}{\Dmax\cdot \frac{d+1}{2\Dmax}+1}\right) \\
           &  \leq \prod_{u\in H(v)} \left(  \exp(-d/8\Dmax)+\frac{2}{d}  \right) 
           \leq \left(  \exp(-d/8\Dmax)+\frac{2}{d}  \right)^{|H(v)|}\\
            & \leq \left(\frac{3}{\DThres^{3/4}}\right)^{\DThres^{3/4}} = \left(\frac{3}{\Dmax^{9/16}}\right)^{\DThres^{3/4}}< \left( \frac{1}{\Dmax}\right)^{1/16}~.
        \end{align*}
        The very last inequality is extremely lossy but sufficient to prove the lemma statement. 

        \medskip

        \noindent\textbf{\boldmath Case $|H(v)|< \DThres^{3/4}$:}      
        We obtain $|L(v)| \geq \DThres - \DThres^{3/4}$, to be used later in the proof. Let $L_v$ be the random variable that counts the successful children of $v$ in $L(v)$. 
        When considering only local minima in $N(v)$, we can assume that every node $u \in L(v)$ has degree $d_u = d$. This assumption can be made, since it only decreases the probability that there exists a local minimum in $N(v)$  and we use \Cref{lem:caseSmallDegreePreparation} of \Cref{lem:preparation} which only argues about the 1-hop neighborhood $N(v)$.

        With linearity of expectation and \Cref{obs:successful} we obtain $\mathbb{E}[L_v] = \frac{|L(v)|}{d}$.
        Since each child of $v$ is successful independently of all the other children, a Chernoff bound yields
        \begin{align}
             \mathbb{P}[L_v< \mathbb{E}[L_v]/2] < \exp(-\mathbb{E}[L_v]/8) < \exp(-|L(v)|/8d).
        \end{align}

        Now let $\ell = \mathbb{E}[L_v]/2 =  |L(v)|/2d$ 
        and apply \Cref{lem:caseSmallDegreePreparation} of \Cref{lem:preparation} to obtain:
        \begin{align*}
            \mathbb{P}[v \in U] & \leq \mathbb{P}[L_v<\ell] +\frac{1}{d\cdot \frac{|L(v)|}{2d}+1} \\
            & \leq \exp(-|L(v)|/8d) + \frac{2}{|L(v)|} \leq  \frac{3}{|L(v)|} \\
            & \leq \frac{3}{\DThres-\DThres^{3/4}} = \frac{3}{\Dmax^{3/4}-\Dmax^{9/16}} \stackrel{(*)}{\leq} \left( \frac{1}{\Dmax}\right)^{1/16}
        \end{align*}
        In $(*)$ we use that $\Dmax \geq 18$.
    \end{proof}

\begin{remark}\label{rem:lenzen}
    Similar to the proof of \Cref{lem:localSuccess} one can use \Cref{lem:preparation} to recover the central progress lemma of \cite[Lemma 4.1]{lenzen-tree-2011}.
    Their case corresponded to $|L(v)| \geq d_v/2 $ with threshold $d=d_v/16\log d_v$. In the proof of \Cref{lem:localSuccess} we bound the probability that $v \in U$ by $3/|L(v)|$. Plugging in the above numbers, we get an upper bound of $6/d_v$ compared to their claim of $5/d_v$.
\end{remark}
       \subsubsection{Shattering Behavior of $16c$ iterations of \LMJ}
       \label{sec:shattering}
    \Cref{lem:localSuccess} shows that the probability of a high-degree node $v$ to be removed is $1/\poly \Dmax$. As this is not a \emph{with high probability guarantee}, we use the shattering framework to prove that non-removed high-degree nodes in some sense form small connected components. A standard shattering proof would provide us with components of size $O(\poly\Dmax \log n)\gg \poly\log n$ if $\Dmax\gg \poly\log n$, which is insufficient for a $O(\log\log n)$ post-shattering phase. Inspired by \cite{CHLPU20},  we prove a slightly stronger result on the components reasoning about the existence of certain small dominating sets in each  component. Their small size is then exploited down the line in \cleanup in \Cref{sec:cleanup}.

     First we will show that well-spaced subsets of these nodes do not survive with a good probability. 
     
    \begin{lemma}
    \label{lem:survivingSubsets}
        Let $c > 0$ be an arbitrary constant. Let $W$ be the set of nodes defined in line 7 of procedure $\procdegreedrop(T,\DThres)$. For any 6-independent set $B$ the probability that $B\subseteq W$ is at most $\Dmax^{-c |B|}$. Here distance is measured with respect to $T$.
    \end{lemma}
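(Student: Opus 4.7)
The plan is to show that across the $16c$ independent iterations of $\LMJshort$ inside \procdegreedrop, each $v\in B$ fails to get covered in a given iteration with probability at most $\Dmax^{-1/16}$ by \Cref{lem:localSuccess}, and that for a 6-independent set $B$ these events are mutually independent across $B$ within any single iteration. Multiplying the conditional failure probabilities of the $16c$ iterations across the $|B|$ nodes then yields $(\Dmax^{-1/16})^{16c\cdot|B|}=\Dmax^{-c|B|}$.

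First I would make the locality observation: whether a node $v$ is covered during a single call to $\LMJshort$ is determined entirely by the random values assigned inside the $3$-hop neighborhood of $v$ in the current graph. Indeed, $v$ is covered iff some $u$ with $\mathrm{dist}(v,u)\le 2$ is a local minimum, and the latter event only depends on $r_u$ and the $r$-values of $u$'s current neighbors, hence on values within distance $3$ of $v$. Because removing nodes in a tree can only increase or sever distances, the current-graph distance between any two nodes in $B$ is at least their tree-distance, so for any two $v,w\in B$ their relevant $3$-hop balls are disjoint in every iteration (using $3+3=6<\mathrm{dist}_T(v,w)$).

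Next I would combine this with \Cref{lem:localSuccess}. Observe that a node that ends up in $W$ must have had degree strictly greater than $\DThres$ at the start of \emph{every} iteration, since degrees are monotonically non-increasing as nodes are removed. Fix the (random) graph state at the start of iteration $j$, and condition on the event $E_{j-1}$ that all nodes of $B$ still have degree $>\DThres$ and are active. The fresh random values used in iteration $j$ are independent of $E_{j-1}$, and by the locality observation the events ``$v$ is not covered in iteration $j$'' for $v\in B$ depend on disjoint sets of these fresh values, hence are mutually independent given $E_{j-1}$. Each such event has probability at most $\Dmax^{-1/16}$ by \Cref{lem:localSuccess} (applied to $v$ in the current graph with $d_v\ge\DThres$), so the conditional probability that all of $B$ survives iteration $j$ is at most $\Dmax^{-|B|/16}$.

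Finally, I would iterate this bound over the $16c$ rounds. Writing $F_j$ for the event that every $v\in B$ fails to be covered in iteration $j$ and retains degree $>\DThres$ afterwards, we have $\{B\subseteq W\}\subseteq \bigcap_{j=1}^{16c} F_j$, and by the chain rule
\begin{equation*}
\Pr\!\left[\textstyle\bigcap_{j=1}^{16c} F_j\right]=\prod_{j=1}^{16c}\Pr[F_j \mid F_1,\dots,F_{j-1}]\le \bigl(\Dmax^{-|B|/16}\bigr)^{16c}=\Dmax^{-c|B|},
\end{equation*}
which is the desired bound. The main obstacle is not any single inequality but the careful conditioning: one must be explicit that the independence across $B$ is \emph{conditional} on the graph state at the start of each iteration (it is a statement about the fresh random bits used in that iteration only), and that \Cref{lem:localSuccess} applies because surviving implies the degree hypothesis is satisfied at every iteration start. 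Everything else then collapses into the product above.
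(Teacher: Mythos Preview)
Your proposal is correct and follows essentially the same approach as the paper's proof: both use the chain rule over the $16c$ iterations, invoke \Cref{lem:localSuccess} for the per-node bound, and argue mutual independence within each iteration via the observation that coverage is a function of the $3$-hop neighborhood while $6$-independence guarantees these balls are disjoint. Your write-up is in fact slightly more explicit than the paper's about the monotonicity of degrees (ensuring the hypothesis of \Cref{lem:localSuccess} holds at every iteration) and about the fact that the independence claim is conditional on the graph state and concerns only the fresh random bits of the current iteration.
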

    \begin{proof}  
Consider the $r=16c$ iterations of $\procdegreedrop(T,\DThres)$ and for $i=0,\ldots, r$ let $U_i$ be the set of nodes with degree higher than $\DThres$ that are not covered after the $i$-th iteration.  We have
\begin{align}
\mathbb{P}[B\subseteq W]=\mathbb{P}[B\subseteq U_r]= \prod_{i=1}^r \mathbb{P}[B\subseteq U_i \mid B\subseteq U_{i-1}]\stackrel{(*)}{\leq} \prod_{i=1}^r 1/{{\Dmax}^{|B|/16}}=1/{{\Dmax}^{|B|\cdot r/16}}\\
= 1/\Dmax^{c \cdot |B|} = \Dmax^{-c|B|}. 
\end{align}
At $(*)$ we use \Cref{lem:localSuccess} to bound $\Pr(B\subseteq U_i \mid B\subseteq U_{i-1})\leq (1/\Dmax^{1/16})^{|B|}$ for each $1\leq i\leq r$. 
Note that we can apply \Cref{lem:localSuccess} as the nodes in $B$ are 6-independent and thus have distance at least $7$, which implies that the probability that two different nodes of $B$ get covered in a single iteration of \LMJshort are independent. This is the case, since the probability that a node gets covered in one iteration of the for loop of $\procdegreedrop$ only depends on its 3-hop neighborhood: A node gets only covered if some node in its 2-hop neighborhood joins $S$ and wether a node joins $S$ depends on its 1-hop neighborhood.
    \end{proof}

The following lemma is standard, see e.g., \cite{FGLLL17}.\footnote{Still, we chose to re-prove the lemma as some versions of it in the literature have faulty proofs \cite{personalCommunication}  or too strong assumptions for our setting. For the proof of \Cref{lem:shatteringBase} see \Cref{sec:deferredproofs}} 

\begin{lemma}
    \label{lem:shatteringBase}
        Let $W$ be the set returned by $\procdegreedrop(T,\DThres)$. Let $c$ be the constant with which we run $\procdegreedrop$,i.e., we run the for loop $16c$ times. Let $c_1 > c-10$, then with probability at least $1-n^{-c_1}$ any $6$-independent $8$-connected set $W'\subseteq W$ has size at most $\log_{\Dmax} n$. The distances $6$ and $8$ are measured in $T$ and not in $T[W]$. 
    \end{lemma}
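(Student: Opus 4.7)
The plan is to apply \Cref{lem:survivingSubsets} inside a union bound over all $6$-independent, $8$-connected candidate sets $B\subseteq V(T)$ of a suitable fixed size. First I would reduce to ruling out such sets of size exactly $k=\lceil \log_{\Dmax} n \rceil + 1$: if some $6$-independent, $8$-connected $W'\subseteq W$ had size strictly larger than $\log_{\Dmax} n$, then by taking a spanning subtree of the $8$-adjacency on $W'$ and iteratively pruning leaves, one obtains a $6$-independent, $8$-connected subset of $W$ of size exactly $k$. Thus it suffices to bound the probability that \emph{any} such size-$k$ set lies in $W$.

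Second, I would count the candidates. Define the auxiliary graph $G'$ on $V(T)$ with an edge between every pair of nodes at distance at most $8$ in $T$. Since $T$ has maximum degree $\Dmax$, the ball of radius $8$ around any vertex contains at most $O(\Dmax^8)$ other vertices, so $G'$ has maximum degree $\Delta' \leq \Dmax^9$ (for $\Dmax$ large). By the classical subtree-counting bound (Knuth / Borgs--Chayes--Kahn--Lov\'asz), the number of connected subgraphs of $G'$ of size $k$ that contain a fixed vertex is at most $(e\Delta')^{k-1}\leq \Dmax^{10k}$.

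Combining these two ingredients, for each $6$-independent candidate $B$ of size $k$, \Cref{lem:survivingSubsets} gives $\Pr[B \subseteq W]\leq \Dmax^{-ck}$. Union-bounding over the $n$ possible anchor vertices and the at most $\Dmax^{10k}$ candidate sets through each anchor yields
\[
\Pr[\text{bad}] \;\leq\; n \cdot \Dmax^{10k}\cdot \Dmax^{-ck} \;=\; n\cdot \Dmax^{(10-c)k} \;\leq\; n^{11-c},
\]
using $\Dmax^k \leq \Dmax\cdot n$. Since the number of iterations of the for loop in \procdegreedrop is $16c$, taking $c$ large compared to $c_1$ (as licensed by the hypothesis $c_1 > c-10$, which I read as ``$c$ is chosen $\approx 10$ larger than the desired $c_1$'') makes this at most $n^{-c_1}$, giving the claim.

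The main obstacle will be the subtree-counting step: the auxiliary graph $G'$ is not a tree, and one must avoid conflating subtrees of $T$ with connected subgraphs of $G'$. I would handle this by invoking the standard bound on connected subgraphs in a bounded-degree graph applied directly to $G'$. A minor technical point is verifying that \Cref{lem:survivingSubsets} is applicable with the required $6$-independence; this is immediate since we union-bound only over candidate sets that are $6$-independent by definition.
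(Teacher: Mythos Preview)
Your approach is essentially the same as the paper's: both define an auxiliary distance-$8$ graph on $V(T)$, upper-bound the number of connected vertex sets of size $k\approx\log_{\Dmax} n$ by roughly $n\cdot\Dmax^{O(k)}$, and combine with \Cref{lem:survivingSubsets} via a union bound; the paper uses $T^{[7,8]}$ and explicit tree-embedding counting where you use $T^{\le 8}$ and the Borgs--Chayes--Kahn--Lov\'asz bound, but the arithmetic lands in the same place. One small slip: your final inequality needs $\Dmax^k\ge n$ (i.e., $k>\log_{\Dmax} n$), not $\Dmax^k\le \Dmax\cdot n$, to push $\Dmax^{(10-c)k}\le n^{10-c}$.
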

    
    \begin{lemma}
    \label{lem:dominatingSet}
        Let $W$ be the set returned by $\procdegreedrop(T,\DThres)$. Then with probability at least $1-n^{-c_1}$, for a tune-able constant $c_1$, any connected component of $T[W]$ has a 7-dominating set of size smaller than $\log_{\Dmax} n$. Where the distance is measured in $T$ and not $T[W]$.
    \end{lemma}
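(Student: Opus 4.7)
The plan is, for each connected component $C$ of $T[W]$, to greedily produce a subset $B \subseteq C$ that is simultaneously a $7$-dominating set of $C$, $6$-independent in $T$, and $8$-connected in the sense used by \Cref{lem:shatteringBase}. Once such a $B$ exists, \Cref{lem:shatteringBase} directly caps $|B|$, and the same high-probability event handles all components at once. The crucial structural observation is that $C$ induces a subtree of $T$: since $T$ is a tree and $C$ is connected in $T[C]$, the unique $T$-path between any two $u,v\in C$ is entirely contained in $C\subseteq W$. I will use this to ensure that all intermediate nodes manipulated by the greedy procedure actually belong to $W$.

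Starting from $B:=\{b_1\}$ for an arbitrary $b_1 \in C$, I iterate the following step as long as $B$ is not a $7$-dominating set of $C$. Pick any witness $v \in C$ with $dist_T(v,B)\geq 8$, let $b^* \in B$ be a closest element to $v$, and walk the unique $T$-path $v=u_m,u_{m-1},\ldots,u_0=b^*$. The sequence $k\mapsto dist_T(u_k,B)$ is integer-valued, starts at $0$, ends at $\geq 8$, and satisfies $|dist_T(u_k,B)-dist_T(u_{k+1},B)|\leq 1$ by the triangle inequality (adjacent vertices change any distance by at most one). Hence it attains the value $8$ at some index $k^*$, and I add $u_{k^*}$ to $B$. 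By the subtree observation, $u_{k^*}$ lies on the $T$-path between $v,b^*\in C$, so $u_{k^*}\in C\subseteq W$. By induction, every new element is at $T$-distance exactly $8$ from the previous $B$, so $B$ remains $6$-independent (since $8>6$) and stays $8$-connected (since $8\leq 8$).

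Invoking \Cref{lem:shatteringBase}, on its $1-n^{-c_1}$ event no $6$-independent $8$-connected subset of $W$ can exceed $\log_{\Dmax} n$ elements. Hence the greedy process cannot produce more than $\lfloor \log_{\Dmax} n\rfloor$ nodes and must terminate with a valid $7$-dominating set of $C$ of the required size; the same global event covers all components of $T[W]$ simultaneously. The cosmetic gap between the ``at most'' bound produced by \Cref{lem:shatteringBase} and the strict ``smaller than $\log_{\Dmax} n$'' in the statement is absorbed by increasing the tuneable constant $c_1$ slightly (equivalently, by running \procdegreedrop{} with a constant one larger).

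The main obstacle is precisely hitting the window required by \Cref{lem:shatteringBase}: $6$-independence \emph{and} $8$-connectedness. A naive maximal $6$-independent set in $C$ would automatically $6$-dominate $C$, but the centers of adjacent Voronoi cells can sit at $T$-distance up to $2\cdot 6+1=13$, giving only $13$-connectedness, which is insufficient for \Cref{lem:shatteringBase}. The discrete intermediate-value walk above circumvents this by exploiting the unique-path property of the tree to insert each new center at $T$-distance exactly $8$ from the current $B$, landing in the tight window $(6,8]$ that \Cref{lem:shatteringBase} demands; verifying this step carefully (in particular, that the chosen intermediate node belongs to $W$) is where the tree hypothesis is essential.
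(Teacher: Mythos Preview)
Your proof is correct and follows the same overall plan as the paper: for each component $C$, greedily build a subset of $W$ that is $6$-independent and $8$-connected in $T$, then invoke \Cref{lem:shatteringBase}. The paper's version is terser---it picks an arbitrary remaining vertex of $C$, deletes its $6$-ball, and asserts the outcome is ``by construction $6$-independent and $8$-connected.'' As you correctly point out, that naive greedy in general only yields $13$-connectedness (adjacent Voronoi centers of a maximal $6$-independent set can sit at $T$-distance up to $2\cdot 6+1$), so your refinement is a genuine strengthening: the discrete intermediate-value walk along the unique tree path forces each new center at $T$-distance exactly $8$ from the current $B$, landing precisely in the window $(6,8]$ that \Cref{lem:shatteringBase} requires. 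Your explicit use of the subtree property of $C$ (to guarantee the intermediate node $u_{k^*}$ lies in $W$) is also something the paper leaves implicit. In short: same strategy, but your execution is more careful and actually justifies the $8$-connectedness claim the paper merely asserts.
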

    \begin{proof}
        
        Let $C$ be any connected component of $T[W]$. We pick a 7-distance dominating set $D$ of $C$ greedily as follows:  We always pick an arbitrary node $v \in V(C)$ and remove all nodes in $N^T_6(v)\cap W$ from consideration for future picks. The resulting dominating set $D$ is by construction $6$-independent and $8$-connected (in $T$). 
        
        Thus by \Cref{lem:shatteringBase} the size of $D$ is with probability $1-n^{-c_1}$ at most $\log_{\Dmax} n$. 
     \end{proof}

\subsubsection{Proof of the Degree-Drop Lemma (Lemma~\ref{lem:degreedroptrees})}
\label{sec:lemDegreeDropProof}
    \begin{proof}[Proof of \Cref{lem:degreedroptrees}]
        We separately prove properties \ref{itm:1}--\ref{itm:4}.
\begin{compactitem}
\item \textbf{Property \ref{itm:1}:} No two adjacent nodes get added to $S$ in the same iteration of \LMJshort, since they cannot both be a local minimum at the same time. Furthermore, after a node gets added to $S$, all its neighbors get removed from $T$ and thus they cannot be added to $S$ in any later iteration. This shows that $S$ is an independent set.
 \item \textbf{Property \ref{itm:2}:} The set $W$ is computed as a subset of the remaining nodes after $S$ and $N_2(S)$ have been removed from the graph. Hence, no node of $W$ is adjacent to any node of $S$.

       \item \textbf{Property \ref{itm:3}:} Every node with degree higher than $\DThres$  gets added to $W$ in the last step of the algorithm. Hence, the maximum degree of $G[V(T) \setminus (S \cup N_2(S) \cup W)$ is at most $\DThres$.
        
       \item \textbf{Property \ref{itm:4}:} This is proven separately in \Cref{lem:dominatingSet}. \qedhere
       \end{compactitem}
    \end{proof}

    \subsection{Clean-up Phase (Post-shattering)} 
    \label{sec:cleanup}
The goal of this section is to show that we can run \Cref{alg:complete:cleanup} of \Cref{alg:complete} in $O(\log \log n)$ rounds.

A $(\lambda,\gamma)$-network decomposition is a decomposition of the nodes of a graph $G$ into $\lambda$ parts $V_1,\dots, V_\lambda$ such that each connected component of the induced graph $G[V_i]$ has diameter at most $\gamma$ for all $i$. 
    In particular if we have a network decomposition with $\gamma = O(\log \log n)$ we can compute an MIS in $O(\log \log n)$ 
    rounds. Process the color classes sequentially. Each color class consists of (possibly many) connected components of diameter $\gamma$ that are usually called clusters. Solve such a cluster by gathering its topology at a cluster leader and distributing the solution to all nodes of the cluster. In this section we build upon a similar post-shattering treatment as in \cite{CHLPU20}. The goal of this section is to compute a $(2,O(\log \log n))$-network decomposition for every connected component of $T[W_i]$  in $O(\log\log n)$ rounds yielding that we can compute the MIS $Z_i\subseteq W_i$ in $O(\log \log n)$ rounds.
   Due to \Cref{lem:parallelcleanup} (see below) we  can handle $W_1, \ldots, W_R$ in parallel.
    
    The following theorem, based on a rake \& compress procedure, from \cite{CHLPU20} tells us when we can compute a $(2,O(\log \log n))$-network decomposition efficiently on a tree that has a small $O(1)$-distance dominating set.

    \begin{theorem} \cite[Theorem~7]{CHLPU20}
    \label{thm:networkdecomp}
        Let $T$ be a tree for which there exists a $d$-distance dominating set of size $s$. There is a deterministic \local algorithm that computes a $(2,O(\log s+ d/k))$-network decomposition of $T^k$ in $O(k \log s + d + k \log^* n)$ rounds, i.e., $O(\log s + \log^*n)$ rounds when $d = O(1)$ and $k = O(1)$.
    \end{theorem}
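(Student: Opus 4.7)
The plan is to prove \Cref{thm:networkdecomp} via a rake-and-compress scheme in the style of Miller and Reif, exploiting the small distance-$d$ dominating set to bound the number of phases. The structural insight is that once all branches of depth $<d$ have been peeled off, the surviving tree must have its leaves among the dominators, so its ``skeleton'' consists of only $O(s)$ structural vertices (leaves and branch-points) linked by long degree-$2$ paths.

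Phase~1 performs $d$ rounds of raking: each round, every current leaf of the residual tree labels itself with the round index and is removed. After $d$ iterations, every leaf of the residual tree $T'$ is a dominator, so $T'$ has $\leq s$ leaves and thus $\leq s$ branch-points of degree $\geq 3$. Phase~2 then runs $O(\log s)$ super-rounds of rake-and-compress on $T'$: each super-round first removes current leaves (rake) and then selects every other node along each maximal degree-$2$ path (compress) by a standard pointer-halving trick. Because $T'$ has $O(s)$ structural vertices, this reduces the skeleton to a single vertex in $O(\log s)$ super-rounds. Simulating a single super-round on $T^k$ costs $O(k)$ communication rounds in $T$, giving a total of $O(d + k\log s)$ rounds for the labeling phase.

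Each vertex is then colored $1$ if it was removed in a rake step and $2$ if it was removed in a compress step. A Phase-$1$ rake cluster is a hanging subtree whose layers are consecutive, with depth at most $d$ in $T$, hence diameter $O(d/k)$ in $T^k$. A Phase-$2$ cluster spans at most one super-round's worth of raked or compressed vertices (length $O(k)$ in $T$), hence diameter $O(1)$ in $T^k$, with at most $O(\log s)$ such clusters stacking along any root-to-leaf path of the skeleton. Summing, every connected component within a color class has diameter $O(\log s + d/k)$ in $T^k$, matching the target.

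The main obstacle is guaranteeing that components of the same color class do not merge across distinct rake or compress layers into one giant component exceeding the diameter bound. I would resolve this by first producing an intermediate decomposition with $O(1)$ auxiliary colors indexed by (rake/compress, layer index modulo a small constant), so that vertices from adjacent layers always receive distinguishable labels; the resulting intermediate clusters already enjoy the stated diameter bound. This palette is then coarsened down to exactly $2$ colors via Linial's $O(\log^* n)$-round coloring on the quotient graph of clusters, preserving the diameter guarantee but introducing the final $O(k\log^* n)$ term in the round complexity.
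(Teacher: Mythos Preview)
The paper does not prove this statement; it is quoted from \cite{CHLPU20} and used as a black box, so there is no in-paper proof to compare your attempt against.

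Your overall scheme---$d$ initial rake rounds to strip the fringe, then $O(\log s)$ phases of rake-and-compress on the remaining skeleton, followed by a Linial-style reduction to two colors---is indeed the approach of \cite{CHLPU20}. However, the justification of your key structural step is wrong. You assert that after $d$ rake rounds every leaf of the residual tree $T'$ is a dominator. This already fails on a path: take $T=v_0v_1\cdots v_{100}$ with $D=\{v_0,v_{50},v_{100}\}$, so $s=3$ and $d=25$; after $25$ rake rounds $T'=\{v_{25},\ldots,v_{75}\}$, whose leaves $v_{25}$ and $v_{75}$ are not in $D$. The correct argument is that any subtree hanging off the Steiner tree $\mathrm{ST}(D)$ has depth at most $d$ (every node in it reaches $D$ only through the attachment point), so $d$ rake rounds eliminate all such nodes and $T'$ is a connected subtree of $\mathrm{ST}(D)$. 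Since the leaves of $\mathrm{ST}(D)$ lie in $D$ and a connected subtree of a tree never has more leaves than the tree itself, $T'$ has at most $s$ leaves and hence at most $s-1$ branch vertices. With this repair your skeleton bound stands; the remaining diameter accounting and the reduction to two colors are plausible but would still need to be spelled out carefully to constitute a proof.
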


     We show the following.

        \begin{lemma}
        \label{lem:networdDecompTree}
            Let $i \in \{1,\dots,R\}$ and let $W_i$ be the set $W$ returned by the $i$th call of \procdegreedrop($T,c$).
            There is a deterministic \local algorithm that computes a $(2,O(\log \log n))-$network decomposition for any connected component of $T[W_i]$ in $O(\log \log n + \log^* n)$ time.
        \end{lemma}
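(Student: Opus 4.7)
The plan is to reduce the claim to \Cref{thm:networkdecomp} applied with $k=1$ on each connected component of $T[W_i]$, exploiting the structural guarantee already produced by the degree-drop phase.

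First I would verify the precondition of \Cref{thm:networkdecomp}. Let $C$ be any connected component of $T[W_i]$. Because $T$ is a tree, every connected induced subgraph of $T$ is itself a subtree; in particular, for any $u,v\in V(C)$ the unique $u$--$v$ path in $T$ lies entirely inside $C$, so $dist_T(u,v)=dist_C(u,v)$. Consequently, the $7$-distance dominating set of size at most $\log_{\Dmax} n = O(\log n)$ guaranteed w.h.p.\ by \Cref{lem:dominatingSet}---whose distances are measured in $T$---is also a $7$-distance dominating set of $C$ in the intrinsic sense (distances inside $C$).

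Now I would apply \Cref{thm:networkdecomp} to $C$ with $d=7$, $s=O(\log n)$, and $k=1$. Since $C^1=C$, the theorem outputs a $(2,\,O(\log s + d))=(2,\,O(\log\log n))$-network decomposition of $C$ in $O(k\log s + d + k\log^* n)=O(\log\log n + \log^* n)$ deterministic rounds. Because the algorithm of \Cref{thm:networkdecomp} is a \local algorithm and distinct connected components of $T[W_i]$ share no edges, all components execute the protocol simultaneously, giving the claimed bound for the entire induced subgraph $T[W_i]$.

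The only real subtlety, and hence the main potential obstacle, is the distance-measurement mismatch between \Cref{lem:dominatingSet} (distances measured in $T$) and \Cref{thm:networkdecomp} (distances measured in the input graph of the algorithm). The subtree observation dissolves this at no cost: on the connected component $C$ the two distance measures coincide, so the existence hypothesis transfers verbatim. Nothing else requires work, as the algorithm of \Cref{thm:networkdecomp} is purely deterministic and does not need the dominating set as an explicit input---only its existence is used in the runtime bound.
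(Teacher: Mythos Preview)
Your proposal is correct and essentially identical to the paper's own proof: both invoke \Cref{lem:dominatingSet} for the small dominating set, use the tree property to argue that distances in a connected component $C$ of $T[W_i]$ coincide with distances in $T$, and then apply \Cref{thm:networkdecomp} with $d=7$, $k=1$, and $s=O(\log n)$. The only cosmetic differences are that the paper plugs in the slightly sharper bound $s=O(\log_{\Dmax} n)$ (which makes no difference since $\log s=O(\log\log n)$ either way) and does not spell out the parallel execution across components.
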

        \begin{proof}
            \Cref{lem:dominatingSet} tells us that every connected component of $T[W_i]$ admits a $7$-distance dominating set of size smaller than $\log_{\Dmax} n$, with the dominating distances measured in $T$.  Since $T$ is a tree, there exists one unique path between every pair of nodes of $W$. Thus,  any $7$-distance dominating set of a component of $T[W]$ with regard to distances in $T$ is also one with regard to distances in $T[W]$. 
            
            
            We use \cite[Theorem~7]{CHLPU20} with $d=7$, $k=1$ and $s= O(\log_{\Dmax}n)$ to obtain a $(2,O(\log \log n)$-network decomposition in $O(\log \log n + \log^*n) = O(\log \log n)$ rounds. 
        \end{proof}
    Next we show that different $W_i$s and $S_i$ do not interfere. 
        \begin{lemma}
            \label{lem:parallelcleanup}
                Let $1\leq i<j\leq R$ and let $W_i$ and $W_j$ be the sets $W$ returned in the $i$th and $j$th iteration of \Cref{alg:complete}, respectively. And let $S = \bigcup_{i=1}^R S_i$. Then the following hold
                \begin{compactenum}     
                    \item there is no edge between any node of $W_i$ and any node of $W_j$,\label{itm:independentWs}
                    \item there is no edge between any node of $W_i$ and any node of $S$.\label{itm:independentWandS}
                \end{compactenum}
        \end{lemma}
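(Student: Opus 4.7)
The plan is to prove both claims as direct bookkeeping consequences of the node-removal step on line~\ref{alg:complete:neighborremoval} of \Cref{alg:complete}, combined with property~\ref{itm:2} of \Cref{lem:degreedroptrees}. The crucial observation is that at the end of iteration $i$ of the for-loop (\Cref{line:treesforloop}--\Cref{line:endtreeforloop}) the algorithm removes \emph{both} $W_i$ and $N(W_i)$ from $T$, so every node that lives in the tree at the start of any later iteration $k>i$ is non-adjacent to $W_i$. I would state this as a short preliminary observation and then deduce both items.

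For item~\ref{itm:independentWs}, let $i<j$. When iteration $j$ of the for-loop begins, the set of remaining nodes is a subset of $V(T)\setminus(W_i\cup N(W_i))$, because these nodes were removed at the end of iteration $i$ and are never re-introduced. Since $W_j$ is returned by \procdegreedrop invoked on this remaining forest, $W_j\cap(W_i\cup N(W_i))=\emptyset$, so no edge of $T$ connects $W_j$ to $W_i$.

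For item~\ref{itm:independentWandS}, I split into three cases according to which $S_k$ contains the putative neighbor $s$ of a node $w\in W_i$. If $k=i$, then $s\in S_i$ and $w\in W_i$ are both returned by the same call \procdegreedrop, so property~\ref{itm:2} of \Cref{lem:degreedroptrees} rules out the edge directly. If $k<i$, then at the end of iteration $k$ the algorithm removed $S_k\cup N_2(S_k)$, hence in particular $N(S_k)$, from $T$; since $W_i$ is chosen from the graph at the start of iteration $i>k$, we get $W_i\cap N(S_k)=\emptyset$, and no edge exists between $w$ and $s$. If $k>i$, use the preliminary observation: at the start of iteration $k$ the set $N(W_i)$ has already been removed, so $S_k\subseteq V(T)\setminus N(W_i)$, ruling out the edge.

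The argument is essentially bookkeeping and I do not foresee any obstacle; the only thing that requires care is to keep track of which removal happened in which iteration and to realize that line~\ref{alg:complete:neighborremoval} deletes $N(W_i)$ (not merely $W_i$), which is precisely what makes the later $S_k$ and $W_j$ non-adjacent to $W_i$. This removal of $N(W_i)$ is the sole reason the $R$ clean-up computations in \Cref{alg:complete:cleanup} of \Cref{alg:complete} can be executed in parallel.
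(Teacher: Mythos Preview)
Your proof is correct and follows essentially the same approach as the paper: both arguments use the removal of $W_i\cup N(W_i)$ in line~\ref{alg:complete:neighborremoval} to handle later iterations, the removal of $S_k\cup N_2(S_k)$ to handle earlier ones, and property~\ref{itm:2} of \Cref{lem:degreedroptrees} for the same-iteration case. Your write-up is slightly more explicit in isolating the preliminary observation, but the logical structure is identical.
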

        \begin{proof}
            For \Cref{itm:independentWs}, observe that $W_i$ is always a subset of the current tree $T$. In particular, when $W_j$ gets defined $T$ does not contain $W_i \cup N(W_i)$, since we removed it in iteration $i$. Thus, there cannot be an edge between any node of $W_i$ and any node of $W_j$, $j>i$.
            
            For \Cref{itm:independentWandS}, we use \Cref{itm:2} of \Cref{lem:degreedroptrees} and thus know that there is no edge between any node $W_i$ and any node of $S_i$. For $j<i$ we know that $S_j \cup N_2(S_j)$ got removed before we compute $W_i$ and for $j>i$ we know that $W_i \cup N(W_i)$ got removed before we compute $S_j$. In both cases, we have one layer of nodes between $W_i$ and $S_j$, so in total, there is no edge between $W_i$ and $S$.
        \end{proof}

        \begin{lemma}
            \label{lem:cleanup}
            There is a deterministic \local algorithm that computes a MIS $Z$ of $T[W_1\cup \ldots \cup W_R]$ in $O(\log \log n)$ rounds, such that $Z\cup \bigcup_{i=1}^RS_i$ is independent.
            
        \end{lemma}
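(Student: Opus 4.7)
The plan is to assemble the lemma from the three ingredients already in place: the network decomposition result (\Cref{lem:networdDecompTree}), the non-interference of the put-aside sets (\Cref{lem:parallelcleanup}), and the classical fact that an MIS can be computed in a cluster of diameter $\gamma$ in $O(\gamma)$ rounds by gathering topology at a cluster leader.

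First, by \Cref{lem:parallelcleanup}\ref{itm:independentWs}, the subgraphs $T[W_1],\ldots,T[W_R]$ pairwise have no edges between them, so they can be processed entirely in parallel without any conflict. Thus it suffices to describe what happens inside a single $T[W_i]$ and take the union. Inside $T[W_i]$, I invoke \Cref{lem:networdDecompTree} to compute, in $O(\log\log n + \log^* n) = O(\log\log n)$ rounds, a $(2, O(\log\log n))$-network decomposition of every connected component of $T[W_i]$. This partitions the nodes of $W_i$ into two color classes $V_1^{(i)}, V_2^{(i)}$, where each connected component (cluster) of $T[V_j^{(i)}]$ has diameter $O(\log\log n)$.

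Next I compute the MIS by the standard sequential-over-colors procedure. For $j = 1, 2$, process all clusters of $T[V_j^{(i)}]$ in parallel: each cluster elects a leader, which gathers the entire topology of the cluster together with the current MIS status of its boundary (i.e., which of its neighbors already lie in the partial MIS built from $V_1^{(i)}$, for $j=2$). The leader locally computes any MIS of the cluster subject to the constraint of avoiding already-selected neighbors, then broadcasts the result back. Since the cluster diameter is $O(\log\log n)$, each such step costs $O(\log\log n)$ rounds, and since there are only $\lambda = 2$ color classes, the total cost is $O(\log\log n)$ rounds. All $R$ instances run in parallel, so the overall runtime is $O(\log\log n)$. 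Let $Z \coloneqq \bigcup_{i=1}^R Z_i$ be the union of the MISes produced on the $T[W_i]$'s; this is an MIS of $T[W_1 \cup \cdots \cup W_R]$ because there are no edges across the different $W_i$'s.

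Finally, for the joint-independence claim, note that $Z \subseteq W_1 \cup \cdots \cup W_R$ while $\bigcup_i S_i$ is independent by construction (each $S_i$ is independent and $S_j \subseteq N_2(S_i)^c$ for $j>i$, since $N_2(S_i)$ was removed in \cref{alg:complete:neighborremoval}). By \Cref{lem:parallelcleanup}\ref{itm:independentWandS}, no edge of $T$ connects a node of any $W_i$ to a node of $\bigcup_i S_i$, so no edge connects $Z$ to $\bigcup_i S_i$ either; combined with the internal independence of each side, this gives that $Z \cup \bigcup_{i=1}^R S_i$ is independent, as required. The only non-routine step is the appeal to \Cref{lem:networdDecompTree}, which has already been dispatched; everything else is bookkeeping.
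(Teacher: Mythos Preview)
Your proof is correct and follows essentially the same approach as the paper: use \Cref{lem:parallelcleanup}\ref{itm:independentWs} to parallelize over the $W_i$'s, apply \Cref{lem:networdDecompTree} to get a $(2,O(\log\log n))$-decomposition of each component, process the two color classes sequentially by gathering and solving clusters, and finally invoke \Cref{lem:parallelcleanup}\ref{itm:independentWandS} for joint independence with $\bigcup_i S_i$. The only difference is that you add an explicit (and correct) justification that $\bigcup_i S_i$ itself is independent, which the paper defers to the proof of \Cref{thm:rulingSetTrees}.
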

        \begin{proof}
        Due to \Cref{itm:independentWs} of \Cref{lem:parallelcleanup} the following can be executed in parallel for each $T[W_i]$, $i=1,\dots,R$. So we can fix one $W_i$ and show how we can compute a MIS for $T[W_i]$.
        
        In a similar fashion, We can deal with all connected components of $T[W_i$] in parallel, so we can fix one connected component $C$ and show how we obtain an MIS in $O(\log \log n)$ rounds.
        By \Cref{lem:networdDecompTree} we can compute a $(2,O(\log \log n))$-network decomposition of $C$ in $O(\log \log n)$ rounds. 
        Now we have two disjoint coloring classes $C_1$ and $C_2$, where each connected component, also called cluster, of $T[C_i]$ $(i=1,2)$ has diameter at most $O(\log \log n)$. We start with $C_1$ and gather, for each cluster of $T[C_1]$, the whole topology in one node, solve the cluster and then distribute the solution to every node of the cluster. This can be done in diameter,i.e., $O(\log \log n)$ rounds and gives us a MIS $S_1$ of $T[C_1]$. Now remove $S_1 \cup N(S_1)$ from $T$ and repeat the same process for the second coloring class $C_2$ and obtain a MIS $S_2$ of $T[C_2]$. Then. $S_1 \cup S_2$ is a MIS of $T[C]$.

        In total we get a MIS $Z$ of $T[W_1 \cup \dots \cup W_R]$ and by \Cref{itm:independentWandS} of \Cref{lem:parallelcleanup}, we know that there are no edges between any node of $Z$ and any node of $\bigcup_{i=1}^RS_i$ since we computed $Z$ on a set of nodes that already fulfills this property. Thus $Z \cup \bigcup_{i=1}^RS_i$ is independent.
        \end{proof}

     \subsection{Proof of \Cref{thm:rulingSetTrees}}

        Now we have gathered all the results we need to prove \Cref{thm:rulingSetTrees}.

        \begin{proof}[Proof of \Cref{thm:rulingSetTrees}]
            We show that \Cref{alg:complete} returns a 2-ruling set in $O(\log \log n)$ rounds with high probability.

            \textbf{Runtime.}  We first run $R=O(\log\log n)$ iterations of \procdegreedrop taking $O(1)$ rounds each due to \Cref{lem:degreedroptrees}. In \Cref{alg:complete:cleanup} we compute multiple MIS on all $G[W_1], \ldots, G[W_R]$ in parallel. It total this clean-up phase requires $O(\log\log n)$ rounds due to \Cref{lem:cleanup}. Computing an MIS on a constant-degree graph in \Cref{alg:complete:finalMIS} takes $O(\log^* n)$ rounds \cite{linial92}. 

\textbf{Termination and Correctness of Subroutine Calls.} By \Cref{lem:degreedroptrees}, the maximum degree of the remaining graph after iteration $i$ of \procdegreedrop is at most $\Delta^{(3/4)^i}$. Hence, after $O(\log\log \Delta)$ iterations the  maximum degree of the remaining graph is constant, allowing for the $O(\log^*n)$-round MIS algorithm in \Cref{alg:complete:finalMIS}. 

By \Cref{itm:independentWs} of \Cref{lem:parallelcleanup} all the $W_i$'s are non-adjacent and thus \Cref{alg:complete:cleanup} is equivalent to computing a MIS on $T[W_1,\dots,W_R$], which can be computed due to \Cref{lem:cleanup}.


\textbf{Independence.} The returned set is an independent set for the following reasons. \Cref{lem:degreedroptrees} guarantees that the sets $S_1,\ldots, S_R$ are independent sets and in \Cref{alg:complete:neighborremoval} we remove the $2$-hop neighborhood from the graph once some $S_i$ has been computed and hence no node in some $S_j$ with $j>i$ can have a neighbor in $S_i$.
By \Cref{itm:independentWandS} of \Cref{lem:parallelcleanup} every $W_i$ is non-adjacent to all the $S_j$ and thus the MIS computed in \Cref{alg:complete:cleanup} does not violate independence.
Also the MIS computed on the remaining nodes in \Cref{alg:complete:finalMIS} cannot violate independence.

\textbf{Domination.} All nodes contained in the ruling set and also for all nodes not included in the MIS computed in the clean-up phase in \Cref{alg:complete:cleanup} or the final MIS computation on the constant-degree graph \Cref{alg:complete:finalMIS} are contained in the returned set or have a neighbor in the set. The main part requiring a proof are the nodes ($S_i\cup W_i\cup N_2(S_i)\cup N(W_i)$) that are removed in \Cref{alg:complete:neighborremoval}. We have already reasoned about the nodes in $S_i$ and $W_i$ that are removed in this step. Nodes in $N_2(S_i)$ are dominated by the nodes in $S_i$. Each node $w\in N(W_i)$ has a neighbor $u\in W_i$. Now, either $u$ is in the MIS computed in the clean-up phase in \Cref{alg:complete:cleanup}, or $u$ has a neighbor in that MIS. In either case $w$ is dominated with distance at most $2$.

 \textbf{Error probability.}   All steps and properties of the algorithm actually hold deterministically (in particular the clean-up phase and the final MIS computation are fully deterministic), except for the shattering claim on the size of the $O(1)$-distance dominating set of each $W_1, \ldots, W_R$. This claim fails with probability $n^{-c_1}$ for a tuneable constant $c_1$, in which case the clean-up phase may still work, but does not run in $O(\log\log n)$ rounds. 
 \end{proof}

\section{Ruling sets in High Girth Graphs}

The goal of this section is to prove the following theorems and \Cref{cor:2rulingtreeV2}.
\thmRulingSetHighGirth*

\thmlogloglog*


    \begin{algorithm}[!htbp]
    \caption{Randomized 2-ruling set for high girth graphs}
    \label{alg:highgirth}
    \begin{algorithmic}[1]
       \State Initialize $S_i\leftarrow \emptyset$ for $i=1,\dots,R_1+1$. $P_i,Z_i,W_i \leftarrow \emptyset$ for $i=1,\dots R_2$.
       \For{$i=1, \dots, R_1=O(\log\log n)$}
       \State $S_i \leftarrow$ \procdegreedropsampling($G$, $\Dmax^{(3/4)^i}$)
       \State Remove  $S_i \cup N_2(S_i)$ from $G$ \label{alg:girth:neighborremoval1}
       \EndFor
       \For{$i=1, \dots, R_2=O(\log \log \Dsamp$)}\label{line:2ndforloop} // this is the exact same part as in \Cref{alg:complete} 
       \State $P_i,W_i \leftarrow \procdegreedropsampling(G,\Dsamp^{(3/4)^i})$
        \State Remove  $P_i \cup N_2(P_i) \cup W_i\cup N(W_i)$ from $G$ \label{alg:girth:neighborremoval2}
       \EndFor \label{line:end2ndfor}
         \State $S_{R_1+1} \leftarrow $MIS(G)  //using $O(\log^*n)$ rounds on remaining graph with constant maximum degree  \cite{linial92}\label{line:constantdegree}
       \State \textbf{for} $i=1,\dots,R_2$ in \textbf{parallel:} $Z_i \leftarrow $CleanUp($W_i$) \label{line:cleanupgirth}
       \State \textbf{return:} $S_{R_1+1}\cup \bigcup_{i=1}^{R_1} S_i \cup \bigcup_{i=1}^{R_2}(P_i \cup Z_i)$
    \end{algorithmic}
    
    \end{algorithm}
      \begin{algorithm}[!h]
    
      \begin{algorithmic}[1]
        \Procedure{\procdegreedropsampling}{$G,\DThres$}  \label{alg:degdrop:girth}
        \State $\emptyset \leftarrow S$ 
          \For{$j=1,\dots, \Tilde{c}$} //the constant $\Tilde{c}$ can be chosen such that \Cref{lem:degreedropgirth} becomes tunable
            \State $S \leftarrow$ $\LMJS(G) \cup S$
            \State Remove $S \cup N_2(S)$ from $G$
          \EndFor
          \State Return $S$
        \EndProcedure
      \end{algorithmic}
    \end{algorithm}

Intuitively we want to use the same approach like in \Cref{sec:trees}.
The main issue of our approach of \Cref{sec:trees} for high girth graphs is that we cannot expect the clean-up phase (\Cref{sec:cleanup}) to work, since we cannot exploit the existence of a $O(1)$-distance dominating set. Instead we adjust the procedure $\LMJ(T)$ by first sampling the set of active nodes. Further we add a second phase to each iteration of the adjusted procedure, where we chose a subset of active nodes deterministically. Using those two different phases we prove some similar results as in \Cref{sec:trees}, but instead of bounding the probability that a node does not get covered by $1/\poly \Delta$ we bound it by $<1/ \poly \ n$ for at least polylogarithmic $\Delta$ and thus obtain the degree drop immediately with high probability. We refer to the procedure as $\LMJS$, or  $\LMJshort$ in short.

Once degrees are at most polylogarithmic we revert to the method of \Cref{sec:trees}, but with a different clean-up phase. See \Cref{alg:highgirth} for pseudocode of the whole procedure.

\subsection{Degree Drop with Sampling for large degrees}
\label{sec:degreeDropHighGirth1}
The objective of this section is to prove \Cref{lem:degreedropgirth}. It shows that we can reduce the maximum degree from a starting high degree $\Dmax \geq \log^c n$ down to a lower maximum degree of $\Dsamp = O(\poly \log n)$ with high probability. The adjusted procedure \LMJS is presented in \Cref{alg:lmjs}.

\begin{lemma}\label{lem:degreedropgirth}
    Let $c>0$ be a sufficiently large constant. For a graph $G$ with girth 7 and maximum degree $\Dmax \geq \log^c n$ 
      there is a procedure that finds a set of nodes $S\subseteq V(G)$, such that with high probability the following holds
      \begin{compactenum}[(a)]
          \item S is an independent set,\label{item:indepgirth}
          \item the maximum degree $\DThres$ of $G[V(G)\setminus (S \cup N_2(S))]$ is at most $\Dmax^{3/4}$,\label{itm:degdropgirth}
      \end{compactenum}
      in $O(1)$ rounds in the LOCAL model.
  \end{lemma}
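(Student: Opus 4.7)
The plan is to promote the per-iteration single-node success bound from $1/\poly(\Dmax)$, as achieved by \Cref{lem:localSuccess}, to $1 - 1/n^{c_1}$ for any tunable constant $c_1$, by combining the assumption $\Dmax \geq \log^c n$ with the sampling step in \LMJS and the deterministic second phase. The independence property~\ref{item:indepgirth} is immediate from the structure of \LMJS: only local minima of the (sampled or deterministically selected) candidate set enter $S$, and their $2$-hop neighborhood is removed from $G$ before the next iteration, so no two chosen nodes are adjacent. The main work is the degree-drop property~\ref{itm:degdropgirth}.

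Fix a node $v$ with $d_v > \DThres \coloneqq \Dmax^{3/4}$ at the start of an iteration, and set $d \coloneqq \DThres^{3/4}$. Following the two-case split of \Cref{lem:preparation}, either $|H(v)| \geq d$ where $H(v) = \{u \in N(v) : d_u > d\}$ (\emph{high-degree} case), or $|L(v)| \geq \DThres - d = \Omega(\DThres)$ where $L(v) = N(v) \setminus H(v)$ (\emph{low-degree} case). In the low-degree case, sample each active node independently at rate $p \approx 1/d$; then the sampled subgraph has $O(1)$ effective degree, so each active $w \in L(v)$ is a local minimum of the sample with probability $\Omega(1)$, and these events are near-independent thanks to girth $\geq 7$ (no cycles of length $\leq 6$ in $v$'s $3$-hop neighborhood). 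A Chernoff bound over the $\Omega(\DThres \cdot p) = \Omega(\Dmax^{3/16})$ expected successful candidates then yields failure probability $\exp(-\Omega(\Dmax^{3/16}))$. In the high-degree case, the $\geq d$ high-degree neighbors of $v$ each contribute at least $d$ distinct children in $N_2^-(v)$ (distinct because girth $\geq 7$ forbids length-$\leq 6$ cycles through $v$), yielding $\geq d^2 = \Dmax^{9/8}$ candidates in $N_2^-(v)$; sampling and Chernoff again drive per-node failure probability to $\exp(-\Omega(\Dmax^{\Omega(1)}))$. Since $\Dmax \geq \log^c n$, both bounds are at most $1/n^{c_1+2}$ for $c$ chosen large enough, and a union bound over the at most $n$ high-degree vertices and the $O(1)$ iterations of \procdegreedropsampling yields the with-high-probability statement.

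The role of the sampling step is to circumvent exactly the dependency trap handled by \Cref{lem:preparation} in the tree case. When each candidate is active only with probability $p \ll 1$, the probability that $v$ itself wins the random-value competition in its $1$-hop neighborhood (and hence blocks otherwise-successful neighbors) drops by a factor of $p$. Together with girth $\geq 7$, which makes $v$'s $3$-hop neighborhood a tree so that local-minimum events on disjoint BFS branches are truly independent, this reduces the analysis to elementary Chernoff bounds over independent trials. The deterministic second phase serves as a safety net: for the rare nodes whose sampled neighborhood happens to produce no local minimum, the second phase deterministically selects a dominator, so we avoid paying the $\poly(\Dmax)$ slack of the tree-case analysis in \Cref{lem:localSuccess}.

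The main obstacle will be calibrating the sampling rate $p$: it must be small enough that the sampled subgraph has $O(1)$ effective degree (so low-degree neighbors of $v$ are local minima with $\Omega(1)$ marginal probability), yet large enough that $\Omega(\poly\log n)$ independent candidates survive in the relevant neighborhood of $v$ so Chernoff drives failure probability below $1/n^{c_1}$. A secondary subtlety is showing that the two phases of each \LMJS iteration compose without introducing adversarial dependencies, so the per-iteration analysis combines cleanly across the $O(1)$ iterations of \procdegreedropsampling.
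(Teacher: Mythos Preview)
Your overall plan---split by whether $v$ has many high- or low-degree neighbors, exploit the tree-like $3$-hop view from girth $\geq 7$, and use $\Dmax\geq\log^c n$ to upgrade $\exp(-\Dmax^{\Omega(1)})$ to $1/n^{c_1}$---matches the paper. But both mechanisms you propose have concrete gaps. In your low-degree case you sample at rate $p\approx 1/d$ and assert that the events ``sampled $w\in L(v)$ is a local minimum'' are near-independent by girth. Girth $7$ only rules out \emph{other} common neighbors of the $w$'s; it does nothing about $v$, which is adjacent to every $w\in L(v)$. If $v$ is sampled (probability $p$) and draws a small $r_v$, it simultaneously blocks all of them, capping your Phase~1 guarantee at failure probability $\approx p=1/\Dmax^{9/16}=1/\poly\log n$, not $1/n^{c_1}$. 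Your description of Phase~2 as a safety net that ``deterministically selects a dominator'' is also not what the paper does: Phase~2 deterministically sets the active set to $\{w:d_w\leq\DThres^{3/4}\}$ and then re-runs the \emph{random} local-minima step. The entire point is that $v$ (with $d_v>\DThres$) is now deterministically passive, which is precisely what makes the local-minimum events of the $w\in L(v)$ genuinely independent. Phase~2 is the whole argument for the low-degree case, not a fallback for Phase~1.

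In your high-degree case, sampling at rate $1/d$ does \emph{not} give the $2$-hop candidates $O(1)$ sampled degree: they can have degree $\Dmax$, so sampled degree $\Dmax/d=\Dmax^{7/16}\gg 1$, and they are not local minima with probability $\Omega(1)$; ``sampling and Chernoff again'' does not immediately apply. The paper samples at rate $1/2$ and uses a different insight: once a neighbor $u\in H(v)$ is passive, its children's local-minimum events are independent (girth $7$ forbids any other shared neighbor, and the shared parent is inactive). With roughly half of $H(v)$ passive w.h.p., their active children give $\Omega(\Dmax^{1.1})$ independent trials, each succeeding with probability $\geq 1/\Dmax$, and $(1-1/\Dmax)^{\Omega(\Dmax^{1.1})}\leq\exp(-\Omega(\Dmax^{0.1}))$ finishes. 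The sampling is there to create passive \emph{parents}, not to thin degrees.
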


  The procedure of \Cref{lem:degreedropgirth} is denoted by $\procdegreedropsampling$.


 \begin{algorithm}[!h]
       \begin{algorithmic}[1]
        \Procedure{\LMJS}{$G$} \emph{// we use $\LMJSshort$ as short notation}
        \State $S \leftarrow \emptyset$
        \State \textbf{Phase 1:} \emph{//all nodes active}\label{alg:LMJSphase1}
        \State In parallel and independently for all $v\in V(G)$ mark a node as active with probability $1/2$
        \State Let $A \subseteq V(G)$ denote the set of marked nodes
        \State Uniformly and independently at random compute a real $r_v \in [0,1]$ for all $v\in A$
        \State In parallel for all $v \in A$ 
        \If{$r_v < r_w \forall w \in N(v) \cap A$}
        \State $S \leftarrow S \cup \{v\}$
        \EndIf
        \State \textbf{Phase 2:}   \emph{//only small degree nodes active}\label{alg:LMJSphase2}
        \State Set $A \leftarrow\{v \in V(G) \mid d_v \leq \DThres^{3/4} \} \setminus (S \cup N_2(S))$, i.e., every node with small degree that has not been covered marks itself as active
        \State Uniformly and independent at random compute a real $r_v \in [0,1]$ for all $v\in A$
        \State In parallel for all $v \in A$ 
        \If{$r_v < r_w \forall w \in N(v) \cap A$}
        \State $S \leftarrow S \cup \{v\}$
        \EndIf
        \State Return $S$
        \EndProcedure
      \end{algorithmic}
      \caption{}
      \label{alg:lmjs}
    \end{algorithm}

The following lemma is the core of the argument and shows that  \LMJSshort (Pseudocode in \Cref{alg:lmjs}) removes large-degree vertices with high probability.  
\begin{lemma}
\label{lem:girth:localsuccess}
For any $c_1>0$ and a sufficiently large constant $c>0$.
Let $\Dmax \geq \log^c n$ be the maximum degree of the input graph $G.$
    Consider a node $v \in V(G)$ at the beginning $\LMJSshort$ with $d_v \geq \Dmax^{3/4}$. And let $U$ be the set of nodes uncovered after one call of $\LMJSshort$. Then we have 
    \begin{align*}
        \mathbb{P}[v \in U] \leq n^{-c_1}~.
    \end{align*}    
\end{lemma}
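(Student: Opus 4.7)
The plan is to follow the two-case structure of \Cref{lem:localSuccess} (the tree analogue), but to upgrade each per-node failure probability from $1/\mathrm{poly}(\Dmax)$ to something exponentially small in $\Dmax^{\Omega(1)}$, by exploiting both phases of \LMJSshort together with the tree-like structure forced by girth $\geq 7$. Set $d = \DThres^{3/4}$ and split $N(v) = L(v) \cup H(v)$ at the degree threshold $d$. Because $d_v \geq \Dmax^{3/4}$, at least one of $|L(v)|, |H(v)|$ is $\geq \Dmax^{3/4}/2$. The structural consequence of girth $\geq 7$ that I use throughout is: no two distinct $u, u' \in N(v)$ are adjacent or share any neighbor other than $v$ (else a cycle of length $\leq 4$), and the BFS layers out to distance $3$ from $v$ carry no cycles, so \Cref{lem:preparation} applies verbatim to the $2$-hop neighborhood of $v$.

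\textbf{Case L ($|L(v)| \geq \Dmax^{3/4}/2$).} This is handled by Phase~2 of \LMJSshort. Since $d_v > \DThres^{3/4}$, $v$ is never active in Phase~2, so the event ``some $u \in L(v)$ is a Phase-$2$ local minimum'' immediately covers $v$. I first condition on $v$ being inactive in Phase~1, which has probability $1/2$. Under this conditioning, for each $u \in L(v)$ the event ``$u$ joins $S$ in Phase~1'' depends only on $u$'s own Phase-$1$ coin and on the Phase-$1$ randomness of $N(u) \setminus \{v\}$, and by girth these random-variable sets are pairwise disjoint across distinct $u \in L(v)$, which yields mutual independence. Each individual event has probability at least $\tfrac{1}{2(d+1)}$ (sample $u$, then let $u$ be the global minimum over $N(u) \cup \{u\}$), so the probability that no $u \in L(v)$ enters $S$ during Phase~1 is at most
\[
  \bigl(1 - \tfrac{1}{2(d+1)}\bigr)^{|L(v)|} \;\leq\; \exp\!\bigl(-\Omega(\Dmax^{3/16})\bigr).
\]
Any $u \in L(v)$ that enters $S$ already covers $v$. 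The complementary event ``$v$ is active in Phase~1'' is handled analogously: either $v$ itself is a Phase-$1$ local minimum (covered), or the same disjoint-randomness argument applies after further conditioning on $r_v$. Any surviving $u \in L(v)$ that reaches Phase~2 then competes for a Phase-$2$ local minimum with pairwise-independent events by the same girth argument, and a Chernoff bound closes the case with the same exponent $\exp(-\Omega(\Dmax^{3/16}))$.

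\textbf{Case H ($|H(v)| \geq \Dmax^{3/4}/2$).} This case is handled by Phase~1 and mirrors the high-degree subcase in the proof of \Cref{lem:localSuccess}. By girth, the children of different $u \in H(v)$ in the BFS tree from $v$ are disjoint, so part~(b) of \Cref{lem:preparation} applies. A Chernoff bound on the number of ``successful'' grandchildren of $v$, adjusted by the $\Theta(1)$ factor coming from Phase-$1$ sampling, gives
\[
  \mathbb{P}[v \in U] \;\leq\; \prod_{u \in H(v)} \Bigl(\mathbb{P}[C_u < \ell] + \tfrac{1}{\Dmax\,\ell + 1}\Bigr) \;\leq\; \Dmax^{-\Omega(\Dmax^{3/4})},
\]
which is $\leq n^{-c_1}$ because $\Dmax \geq \log^c n$ with $c$ large enough.

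\textbf{Main obstacle.} The novelty sits in Case~L: Phase~2 only activates nodes that \emph{survived} Phase~1, so I have to control the joint distribution of the ``survives Phase~1'' indicators over $L(v)$. This is exactly what the girth-$7$ hypothesis buys: after conditioning on $v$'s own Phase-$1$ decisions, the randomness that determines each $u \in L(v)$'s Phase-$1$ fate sits in a disjoint graph neighborhood, so the events become mutually independent and a single Chernoff suffices. Once this decoupling is in place, both cases reduce to calculations already present in \Cref{sec:trees}, with polynomial-in-$\Delta$ factors promoted to $n^{-c_1}$ by the assumption $\Dmax \geq \log^c n$.
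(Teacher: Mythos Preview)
Your two-case split is natural, but both cases diverge from the paper in ways that leave genuine gaps.

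\textbf{Case L.} Your primary argument runs through Phase~1, conditioning on $v$ being inactive. That half is fine. The trouble is the $v$-active half: saying ``either $v$ itself is a Phase-1 local minimum, or the same disjoint-randomness argument applies after further conditioning on $r_v$'' does not yield a $n^{-c_1}$ bound. Conditioned on $r_v = x$, each $u \in L(v)$ joins $S_1$ only with probability roughly $\tfrac{1}{2}\cdot\tfrac{1-(1-x)^{d}}{d}$; integrating $\exp(-|L(v)|\cdot(\text{this}))$ over $x \in [0,1]$ leaves a term of order $1/|L(v)| = O(\Dmax^{-3/4})$, which is only $1/\mathrm{poly}(\Dmax)$, not $1/\mathrm{poly}(n)$. (This is exactly the obstruction that forced the $1/\mathrm{poly}(\Dmax)$ bound in \Cref{lem:localSuccess}.) Your Phase-2 backup does not rescue this: the claim in your ``Main obstacle'' paragraph that the ``survives Phase~1'' indicators for $u \in L(v)$ become independent after conditioning on $v$'s randomness is false --- $u$'s Phase-1 fate depends on whether any node in $N_2(u)$ joins $S_1$, and $N_2(u)$ contains all of $N(v)$ (via $v$), so every $u$'s survival depends on the same shared randomness across $N(v)$; moreover for girth exactly $7$ the distance-$3$ grandchildren of different $u$'s can be adjacent, so even the branch-local parts overlap.

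The paper's key idea for this case is different and is really what Phase~2 is designed for: in Phase~2 the active set is $\{w : d_w \le \DThres^{3/4}\}\setminus(S\cup N_2(S))$, so $v$ is \emph{deterministically} passive. With $v$ passive, the events ``$u$ is a Phase-2 local minimum among active neighbours'' for $u \in L(v)$ depend on disjoint randomness (the $u$-branches), are mutually independent, and each has probability $\ge 1/(\sqrt{\Dmax}-1)$. That is how the paper promotes the $1/\mathrm{poly}(\Dmax)$ tree bound to a with-high-probability bound.

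\textbf{Case H.} Invoking \Cref{lem:preparation} ``adjusted by a $\Theta(1)$ factor for sampling'' is plausible but not justified as stated: that lemma is proved for \LMJshort\ without sampling, and the conditioning on ``successful'' changes under sampling. The paper's argument here is both different and cleaner, and it is where the Phase-1 sampling actually earns its keep: by Chernoff, w.h.p.\ at least a constant fraction of $H(v)$ are \emph{passive}, and w.h.p.\ those passive $u$'s have $\Omega(\Dmax^{1.1})$ active grandchildren in total. The crucial observation (the paper's \Cref{obs:independence}) is that once a parent $u$ is passive, the events ``child $w$ of $u$ joins $S_1$'' are mutually independent across all such $w$, each with probability $\ge 1/(\Dmax-1)$. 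This directly gives $(1-1/(\Dmax-1))^{\Omega(\Dmax^{1.1})} \le \exp(-\Omega(\Dmax^{0.1})) \le n^{-c_1}$, with no need to port \Cref{lem:preparation} to the sampled setting.

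In short: the purpose of the two phases is that Phase~1's sampling manufactures passive \emph{intermediate} parents for the high-degree case, while Phase~2's degree threshold makes $v$ itself deterministically passive for the low-degree case. Your sketch uses Phase~1 for both and thereby loses exactly the ingredient that lifts Case~L from $1/\mathrm{poly}(\Dmax)$ to $n^{-c_1}$.
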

    \begin{proof}
        We  split the proof in two cases: in the first one we  consider nodes with many high degree neighbors and argue about the 2-hop neighborhood of $v$ to show that there exists a node that joins the ruling set with high probability. In the second case we show  that if $v$ has many low degree neighbors, one of the direct neighbors of $v$  joins the ruling set with high probability.

        \begin{observation} \label{obs:independence} If a node $u$ is passive, the event that a children of $u$ joins the set $S$ is independent of that event for all other children of $u$.
        \end{observation}
        \textbf{1.Case:} Assume $v$ has at least $\Dmax^{0.6}$ neighbors with degree larger than $\sqrt{\Dmax}$.
        Here we want to argue that it is unlikely that $v$ does not to get covered due to \Cref{alg:LMJSphase1} of $\LMJSshort$.

        Let $H(v) \coloneqq \{ u\in N(v) \mid d_u > \sqrt{\Dmax }\}$. Observe that we can assume (for the analysis) that every $w \in H(v)$ actually has exactly degree $\sqrt{\Dmax}+1$.
        This assumption can be made, since we argue over the probability that a node $u$ with $dist(v,u)=2$ joins the ruling set and that assumptions only decreases the number of those nodes. Furthermore it does not change the probability of any of the nodes in $N_2^-(v)$ becoming a local minima.

Let $X$ be the random variable that counts the number of passive, i.e., non active, children $u \in H(v)$.
\begin{claim}\label{claim:passiveneighbors}
    With high probability $X\geq \Dmax^{0.6}/4$.
\end{claim}
\begin{proof}[Proof of \Cref{claim:passiveneighbors}]
Since each node is independently passive with probability $1/2$, We have
        \begin{align*}
            \mathbb{E}[X] = \frac{|H_v|}{2}
       \end{align*} 
        Since each node is independently active/passive we obtain by a Chernoff Bound
        \begin{align*}
            \mathbb{P}[X < \mathbb{E}[X]/2] < \exp\left(-\frac{\mathbb{E}[X]}{8}\right) =\exp\left(-\frac{|H_v|}{16}\right)\leq \exp\left(-\frac{\Dmax^{0.6}}{16}\right) \leq \exp(-\frac{\log^{0.6c}n}{16}) \leq n^{-c_2}~, 
       \end{align*}
       for a sufficiently large $c>0$.
       \renewcommand{\qed}{\ensuremath{\hfill\blacksquare}}
\end{proof}
\renewcommand{\qed}{\hfill \ensuremath{\Box}}

Let $P\subseteq H(v)$ be a set of children of $v$ that are passive with $|P|=\Dmax^{0.6}/4$. Formally, that is a random variable but such a set exists with high probability and we do not care which children of $H(v)$ are in $P$, since all nodes $u \in H(v)$ behave the same. This is the case, since they all have the same degree (by our assumption) and additionally we can also assume that all children of nodes in $H(v)$ have the maximum degree of $\Dmax$, because that only decreases the probability of any of them becoming a local minima. 

Let $Y$ be the random variable that counts the active children of the nodes in $P$. 
\begin{claim}\label{claim:active2hop}
With high probability $Y\geq\Dmax^{1.1}/16$. 
\end{claim}
\begin{proof}[Proof of \Cref{claim:active2hop}]
    Since each node $u \in P$ has $\sqrt{\Dmax}$ children and each of them is independently active with probability $1/2$, We have
     \begin{align*}
            \mathbb{E}[Y] = \frac{|P|\sqrt{\Dmax}}{2} = \frac{\Dmax^{0.6}\cdot\sqrt{\Dmax}}{8} = \frac{\Dmax^{1.1}}{8}
       \end{align*} 
        Since each node is independently active/passive we obtain by a Chernoff Bound
        \begin{align*}
            \mathbb{P}[Y< \mathbb{E}[Y]/2] < \exp\left(-\frac{\mathbb{E}[Y]}{8}\right) = \exp\left(-\frac{\Dmax^{1.1}}{256}\right) \leq \exp(-\frac{\log^{1.1c}n}{256}) \leq n^{-c_3} 
        \end{align*}
       for a sufficiently large choice $c>0$.
\renewcommand{\qed}{\ensuremath{\hfill\blacksquare}}
\end{proof}
\renewcommand{\qed}{\hfill \ensuremath{\Box}}
    Let $C\subseteq N_2^-(v)$ be a set of active children of children of $v$ that are passive with $|C| = \Dmax^{1.1}/16$. By \Cref{obs:successful} and \ref{obs:independence} all the nodes in $C$ join $S$ independently with probability $1/(\Dmax-1)$, since their degree is $\Dmax$, by our above assumption, and they have one passive neighbor.

    Thus, we obtain
    \begin{align*}
        \mathbb{P}[v \in U] \leq \left(1-\frac{1}{\Delta-1}\right)^{\Delta^{1.1}/16} \leq \exp\left( -\frac{\Delta^{1.1}}{16\Delta} \right) = \exp(-\Delta^{0.1}/16) =\exp(-\log^{0.1c}n/16)\leq n^{-c_4}
    \end{align*}

    For a sufficiently large $c > 0$.

    By a union bound over event that $X \geq \Dmax^{0.6}/4$, the event that $Y \geq \Dmax^{0.6}/16$ and the event that $\mathbb{P}[v \in U]$, we get \Cref{lem:girth:localsuccess} for this case.

        \textbf{2.Case:}
        Assume $v$ has at most $\Dmax^{0.6}$ neighbors with degree larger $\sqrt{\Dmax}$, and thus $v$ has at least $d \coloneqq \Dmax^{3/4}-\Dmax^{0.6}$ neighbors with degree at most $\sqrt{\Dmax}$.
        
        Let $L(v) = \{ w \in N(v) \mid d_w \leq \sqrt{\Dmax} \}$. We can assume that
        each of those nodes $w \in L(v)$ has degree exactly $\sqrt{\Dmax}$. 
        We can make this assumption, because it of only decreases the probability that a direct neighbor of $v$ joins the ruling set and we only argue about the probability that $v$ gets covered by a direct neighbor.
        
        In \Cref{alg:LMJSphase2} of $\LMJSshort(G)$, we activate all $w \in L(v) $ with probability 1 and know that $v$ itself is passive. By \Cref{obs:successful} and \ref{obs:independence}, we have $d$ nodes in $N(v)$ joining the ruling set with probability at least $1/(\sqrt{\Dmax}-1)$ independently.
        Thus we obtain,
        \begin{align*}
            \mathbb{P}[v \in U] & \leq \left(1-\frac{1}{\sqrt{\Dmax}-1}\right)^{d} \leq \left(1-\frac{1}{\sqrt{\Dmax}}\right)^{\Dmax^{3/4}-\Dmax^{0.6}}\\
            &
            \leq \left(1-\frac{1}{\sqrt{\Dmax}}\right)^{\Dmax^{0.55}} \leq \exp\left(-\frac{\Dmax^{0.55}}{\sqrt{\Dmax}}\right)\\
            &= \exp´\left(-\Dmax^{0.05}\right) =\exp\left(-\log^{0.05c}n\right) \leq n^{-c_1}~, 
        \end{align*}  
       for sufficiently large choice $c>0$. This completes the proof of \Cref{lem:girth:localsuccess}.
    \end{proof}

    As long as the maximum degree of the graph is large enough, \Cref{lem:girth:localsuccess} gives us the guarantee that high-degree nodes get covered with high probability. We are now ready to prove \Cref{lem:degreedropgirth}.

    \begin{proof}[Proof of \Cref{lem:degreedropgirth}]
    \textbf{Property \ref{item:indepgirth}:} 
    No two adjacent nodes get added to $S$ in the same iteration of $\LMJSshort$ since they cannot both be a local minimum at the same time. Furthermore, after a node gets added to $S$, all its neighbors are removed from $T$ and thus they cannot be added to $S$ in any later iteration. This shows that $S$ is an independent set.

    \textbf{Property \ref{itm:degdropgirth}:}
    Let $U$ be the set of uncovered nodes after \procdegreedropsampling. 
    By \Cref{lem:girth:localsuccess} a node $v$ with degree $d_v \geq \Dmax^{3/4}$ is not covered after one round of $\LMJSshort$ with probability at most $n^{-c_1}$. 
    In \procdegreedropsampling we run $\Tilde{c}$ iterations of $\LMJSshort$. 
    After one call of $\LMJSshort$ the degree of a node $v$ might decrease. If the degree drops below $\Dmax^{3/4}$ we do not need to consider it further. 
    If its degree is still higher then that \Cref{lem:girth:localsuccess} still holds for the next call of $\LMJSshort$.
    So if the degree of $v$ at the beginning of the last iteration of $\LMJSshort$ is still higher than $\Dmax^{3/4}$, we get that $\mathbb{P}[v \in U] \leq (n^{-c})^{\Tilde{c}}$.
    \end{proof}
    
\subsection{Degree drop for small degrees}
\label{sec:degreeDropHighGirth2}
Applying the procedure from \Cref{lem:degreedropgirth}, we reduce the maximum degree of the entire graph down to $\Dsamp=O(\poly \log n)$ in $O(\log\log\Delta)$ iterations.
    After the maximum degree got reduced to $O(\poly \log n)$ we only get a local success probability of $1/\poly \Delta$, see \Cref{sec:localSuccess}.
    Thus, similar to \Cref{sec:shattering}, we want to use the shattering framework to prove that non-removed high-degree nodes in some sense form small connected components. In contrast to \Cref{sec:cleanup}, we cannot exploit  a small $O(1)$-dominating set in all those components, instead we exploit that components have few nodes as, $\Dsamp$ is polylogarithmic. 

    \begin{lemma}
    \label{lem:girth:smallcomp}
        Given a high girth graph $G$ with max degree $\Dsamp = O(\poly \log n)$, there is a $O(\log \log \log n)$ round \LOCAL procedure that computes, with high probability, a subset $S \subseteq V(G)$, such that 
        \begin{compactenum}[(a)]
            \item $S$ is an independent set,\label{itm:girth:sampindepenent}
            \item each connected component of $G[V(G) \setminus (S\cup N_2(S))]$ has size at most $N=\Dsamp^6 \log n = O(\poly \log n)$.\label{itm:girth:smallcomp}
        \end{compactenum}        
    \end{lemma}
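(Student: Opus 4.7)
The plan is to iterate the degree-drop procedure \procdegreedrop from \Cref{lem:degreedroptrees} for $R = O(\log \log \Dsamp) = O(\log \log \log n)$ iterations, invoking it in the $i$-th iteration with threshold $\Dsamp^{(3/4)^i}$. Although \Cref{lem:degreedroptrees} is phrased for trees, its proof only uses that the $3$-hop neighborhood of any node is acyclic---so that the successful-events for distinct $2$-hop neighbors are independent---which holds whenever the girth is at least $7$. Between consecutive iterations we remove $S_i \cup N_2(S_i) \cup W_i \cup N(W_i)$ from $G$, exactly as in \Cref{alg:complete}. After all $R$ iterations, $\Dsamp^{(3/4)^R} = O(1)$, so the remaining alive graph has constant maximum degree; on it we run Linial's algorithm \cite{linial92} to obtain a final MIS $S_{R+1}$. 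The total runtime is $R \cdot O(1) + O(\log^* n) = O(\log \log \log n)$.

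The returned set $S \coloneqq \bigcup_{i=1}^{R+1} S_i$ is independent exactly as in the proof of \Cref{thm:rulingSetTrees}: each $S_i$ is independent by \Cref{lem:degreedroptrees}\ref{itm:1}; the removal of $N_2(S_i)$ before iteration $i+1$ prevents edges between distinct $S_i$'s; and $S_{R+1}$ is computed on the leftover graph, so no node of $S_{R+1}$ has a neighbor in any previous $S_i$. This establishes item~\ref{itm:girth:sampindepenent}.

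For item~\ref{itm:girth:smallcomp}, I claim that every vertex in $V(G) \setminus (S \cup N_2(S))$ must lie in some $W_i$: vertices of the constant-degree leftover are within distance $1$ of $S_{R+1} \subseteq S$, and every vertex removed as part of some $N_2(S_j)$ is within distance $2$ of $S$ by construction. The removal of $N(W_i)$ at iteration $i$ guarantees that there are no edges in $G$ between $W_i$ and $W_j$ for $i \neq j$ (the high-girth analog of \Cref{lem:parallelcleanup}\ref{itm:independentWs}), so each connected component of $\bigcup_i W_i$ lies inside a single $W_i$. Invoking the shattering argument of \Cref{lem:survivingSubsets,lem:shatteringBase,lem:dominatingSet}---which, again, only needs girth $7$---w.h.p.\ every connected component of $G[W_i]$ admits a constant-distance dominating set of size at most $\log_{\Dsamp} n$. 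Since the maximum degree in $G$ is at most $\Dsamp$, such a component contains at most $\log_{\Dsamp}(n) \cdot \Dsamp^{O(1)} = O(\Dsamp^{6} \log n) = N$ vertices, as required.

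The main obstacle I anticipate is verifying that the analyses of \Cref{sec:localSuccess,sec:shattering} transfer verbatim from trees to high-girth graphs: the tree was used implicitly to root at $v$ and to talk about ``children,'' but these notions are well-defined on any $G$ of girth at least $7$ via a BFS of depth at most $3$ from $v$, whose induced subgraph is a tree by the girth assumption. With that observation in hand, every step of the degree-drop and shattering arguments goes through unchanged, and the rest of the proof is bookkeeping about how the removals between iterations separate the $W_i$'s from one another and from $S$.
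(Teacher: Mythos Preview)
Your procedure and independence argument match the paper exactly, and your remark that the tree analysis of \Cref{sec:localSuccess,sec:shattering} only uses local acyclicity up to depth~$3$ (hence transfers to girth $\geq 7$) is precisely the paper's justification.

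The gap is in your argument for item~(b). The assertion ``every vertex in $V(G)\setminus(S\cup N_2(S))$ must lie in some $W_i$'' is false: your case analysis forgets the vertices removed as part of $N(W_i)$. Such a vertex $v$ has low degree at iteration~$i$ (so $v\notin W_i$), was removed at iteration~$i$ (so $v\notin W_j$ for any $j>i$), yet can be uncovered. Concretely, if $v$ has degree~$2$ in $G$ and both neighbors land in $W_1$, then every node within distance~$2$ of $v$ in $G$ belongs to $W_1\cup N(W_1)$ and is removed at iteration~$1$; hence no later $P_j$ or $S_{R+1}$ can contain a node within distance~$2$ of $v$, and $v$ was not covered by $P_1$ either (its only neighbors are in $W_1$, so at distance $\geq 3$ from $P_1$). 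This also invalidates the next step of your argument: because an $N(W_i)$-vertex can be $G$-adjacent to vertices removed at later iterations, connected components of the uncovered graph need not be confined to a single $W_i$.

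The paper avoids this claim altogether. It states the shattering bound (\Cref{claim:shatter}) directly for the set $W$ of \emph{all} uncovered nodes after lines~6--10, and then derives the component-size bound by greedily extracting a $6$-independent $8$-connected subset from any putative large component---such a subset would have size exceeding $\log_{\Dsamp} n$, contradicting the claim. To repair your proof you need to work at that level of generality rather than decomposing into the individual $W_i$'s.
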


    The procedure of \Cref{lem:girth:smallcomp} is Line \ref{line:2ndforloop} to Line \ref{line:constantdegree} of \Cref{alg:highgirth}. 
     Since Line \ref{line:2ndforloop} to Line \ref{line:end2ndfor} of \Cref{alg:highgirth} are the same as Line \ref{line:treesforloop} to Line \ref{line:endtreeforloop} of \Cref{alg:complete}, and in all the proofs before the \Cref{sec:cleanup} we only need that we do not have cycles in the 3-hop neighborhood, the statements also hold for graphs of girth 7.
     Furthermore we use the same $O(\log^* n)$ round algorithm, in Line \ref{line:constantdegree}, for the remaining constant degree graph, since it works for general graphs, not just trees.   In particular a similar statement to \Cref{lem:shatteringBase}  holds.

    \begin{proof}[Proof of \Cref{lem:girth:smallcomp}]
    Define $S=S_{R_1+1} \cup \bigcup_{i=1}^{R_2}P_i $. We show that $S$ satisfies the properties stated in \Cref{lem:girth:smallcomp}.
    
    \textbf{Property \ref{itm:girth:sampindepenent}:}
            For all $i=1,\dots,R_2$ a node only joins a set $P_i$ if it is active and a local minimum among its active neighbors, therefore two adjacent nodes cannot join the same set $P_i$. Thus each $P_i$'s is independent. Then the 2-hop neighborhood gets removed from the tree (in $\LMJSshort$ and in \Cref{alg:girth:neighborremoval2} of \Cref{alg:highgirth}), thus there is no node between any node of $P_i$ and any node of $P_j$, for any $i \neq j $.
      Since we always remove the 2-hop neighborhod of each $P_i$, we can compute an MIS on the remaining graph in Line \ref{line:constantdegree} and the set $S$ stays independent.
        
    \textbf{Property \ref{itm:girth:smallcomp}:}
    The same proof as for \Cref{lem:shatteringBase} shows the following claim.

    \begin{claim}
    \label{claim:shatter}
    Let $W$ be the be the set of uncovered nodes after we ran line 6 to 10 of \Cref{alg:highgirth}. Let $\Tilde{c}$ be the constant with which we run \procdegreedropsampling. Then, we have with high probability that any 6-independent 8-connected set $W' \subseteq W$ has size at most $\log_{\Dsamp}n$. The distances 6 and 8 are measured in $G$ and not in $G[W]$.
 \end{claim}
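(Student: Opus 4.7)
The plan is to mirror the proof of \Cref{lem:shatteringBase} essentially verbatim, substituting $\Dsamp$ for $\Dmax$ throughout. Two ingredients are needed: a tail bound on the probability that a fixed 6-independent set $B$ lies entirely in $W$, and a combinatorial bound on the number of candidate sets of a given size.

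For the first ingredient, I would first establish an analogue of \Cref{lem:survivingSubsets}: for any 6-independent set $B$, $\mathbb{P}[B \subseteq W] \leq \Dsamp^{-\tilde{c}|B|/16}$. This follows from two observations. First, the local success bound of \Cref{lem:localSuccess} only requires girth $\geq 7$: its proof roots the $2$-ball at $v$ and reasons about children of children, and the absence of cycles of length $\leq 6$ is exactly what makes this $2$-ball a tree, so the bound $\Dsamp^{-1/16}$ on the uncoverage probability per iteration of $\LMJshort$ carries over when the current max degree is $\Dsamp$. Second, whether a vertex $u$ remains uncovered in a given iteration depends only on the random bits in its $3$-hop neighborhood, so 6-independence (i.e., pairwise distance $\geq 7$) makes these events mutually independent across $u \in B$. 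Iterating over the $\tilde{c}$ rounds of $\LMJSshort$ inside \procdegreedropsampling and taking the product gives the claimed bound, as in the proof of \Cref{lem:survivingSubsets}.

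For the second ingredient, I would bound the number of 6-independent, 8-connected subsets of $V(G)$ of size $k$ by $n \cdot (4\Dsamp^8)^k$ using the standard trick: fix a root ($n$ choices), and grow the set via a DFS-style traversal that at each step appends a node within $G$-distance $8$ of some already-chosen node (at most $\Dsamp^8$ such candidates per extension), with a small constant factor accounting for the DFS encoding. Combining both ingredients via a union bound gives
\[
\mathbb{P}[\exists \text{ 6-indep., 8-conn. } W' \subseteq W \text{ with } |W'| = k] \;\leq\; n \cdot (4\Dsamp^8)^k \cdot \Dsamp^{-\tilde{c}k/16}.
\]
Choosing $\tilde{c}$ sufficiently large so that the exponent of $\Dsamp$ dominates the $8$ coming from the combinatorial count (and provides any desired polynomial slack), and then setting $k = \log_{\Dsamp} n$, makes this at most $n^{-c_1}$ for any prescribed constant $c_1$, establishing the claim.

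The main obstacle is verifying that the independence step in the first ingredient genuinely survives in the high-girth (non-tree) setting. Specifically, I need to be sure that the ``successful-children'' analysis in \Cref{lem:preparation} uses only the girth-$7$ property and not true acyclicity, and that the $3$-hop locality of the uncoverage event truly allows independence under $6$-independence of $B$. Once this transfer is checked, everything else is a direct translation of the tree argument.
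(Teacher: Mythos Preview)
Your proposal is correct and matches the paper's own approach: the paper simply states that the same proof as for \Cref{lem:shatteringBase} establishes the claim, relying (as you do) on the fact that \Cref{lem:localSuccess}, \Cref{lem:preparation}, and \Cref{lem:survivingSubsets} only use the absence of cycles of length at most~$6$ rather than full acyclicity. Your explicit verification of the girth-$7$ transfer and the tree-counting union bound is precisely the content behind the paper's one-line justification.
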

    To complete the proof of \Cref{lem:girth:smallcomp}, assume there exists a connected component $C$ with $\Dsamp^6\cdot \log_{\Dsamp}n$ nodes. We now construct a $6$-distance dominating set violating \Cref{claim:shatter} as follows.
    Greedily pick a node $v \in V(C)$ and remove all nodes in $N_6^T(v)\cap W$ from consideration for future picks. The resulting dominating set $D$ is by construction $6$-independent and $8$-connected (in $G$). And in every step we remove at most $\Dsamp^6$ nodes from consideration, thus $D$ has size at least $(\Dsamp^6\cdot \log_{\Dsamp}n)/\Dsamp^6 = \log_{\Dsamp}n$.
    \end{proof}

\subsection{Proofs of \Cref{thm:rulingSetHighGirth} and \Cref{thm:logloglog}}
\label{sec:proofsThm2Thm3}
    \begin{proof}[Proof of \Cref{thm:rulingSetHighGirth} and \Cref{thm:logloglog}]
    We show that \Cref{alg:highgirth} returns a 2-ruling set in $\Tilde{O}(\log^{5/3} \log n)$ rounds with high probability, if we use the MIS algorithm presented in \cite{GG24} for the clean-up in Line \ref{line:cleanupgirth}, proving \Cref{thm:rulingSetHighGirth}.
    Further we show that \Cref{alg:highgirth} returns a $O(\log \log \log n)$-ruling set in $\Tilde{O}(\log \log n)$ rounds with high probability, if we use the $O(\log \log n)$ ruling set algorithm presented in \cite{GG24} for the clean-up in Line \ref{line:cleanupgirth}, proving \Cref{thm:logloglog}.
    
    \textbf{Runtime.} We first run $R_1=O(\log \log n)$ iterations of \procdegreedropsampling taking $O(1)$ rounds each due to \Cref{lem:degreedropgirth}. Then the remaining graph has maximum degree $\Dsamp = O(\poly \log n)$ and we run $R_2 = O(\log \log \Dsamp) = O(\log \log \log n)$ rounds of \procdegreedrop taking $O(1)$ rounds each due to \Cref{lem:degreedroptrees}. Computing an MIS on a constant degree graph in Line \ref{line:constantdegree} takes $O(\log^* n)$ rounds.
    Due to \Cref{itm:girth:smallcomp} we know that each connected component of each $G[W_i]$ has size at most $N=O(\poly \log n)$ and due to \Cref{lem:parallelcleanup} we can invoke any algorithm on each connected component of each $G[W_i]$ in parallel.

    For \Cref{thm:rulingSetHighGirth} run the MIS algorithm of \cite{GG24} which takes $\Tilde{O}(\log^{5/3} N) = \Tilde{O}(\log^{5/3} \log n)$.
     For \Cref{thm:logloglog} run the $O(\log \log N) = O(\log \log \log n)$-ruling set algorithm of \cite{GG24} which takes $\Tilde{O}(\log  N) = \Tilde{O}(\log\log n)$.

    \textbf{Termination and Correctness of Subroutine Calls.} By \Cref{lem:degreedropgirth}, the maximum degree of the remaining graph after iteration $i$ of \procdegreedropsampling is at most $\Dmax^{(3/4)^i}$. Hence, after $O(\log\log\Dmax)=O(\log\log n)$ iterations the maximum degree of the remaining graph is $\Dsamp = O(\poly \log n)$ w.h.p. By \Cref{lem:degreedroptrees}, the maximum degree of the remaining graph after iteration $i$ of \procdegreedrop is at most $\Dsamp^{(3/4)^i}$.
    Hence, after $O(\log \log \Dsamp)=O(\log \log \log n)$ iterations the maximum degree of the remaining graph is constant, allowing for the $O(\log^* n)$-round MIS algorithm in Line \ref{line:constantdegree}.

    By \Cref{itm:independentWs} of \Cref{lem:parallelcleanup} all the $W_i$'s are non-adjacent and thus in Line \ref{line:cleanupgirth} 
    we can compute a MIS, for \Cref{thm:rulingSetHighGirth}, or compute a $O(\log \log N)$ ruling set, for \Cref{thm:logloglog}, in parallel for each of the connected components of the $G[W_i]$. 
    For that we can use the respective algorithm from \cite{GG24}.

    \textbf{Independence.} The returned set is an independent set for the following reasons. \Cref{lem:degreedropgirth} guarantees that the sets $S_1, \dots, S_{R_1}$ are independent sets and \Cref{lem:degreedroptrees} guarantees that the sets $P_1,\dots,P_{R_2}$ are independent. In \Cref{alg:girth:neighborremoval1} and \Cref{alg:girth:neighborremoval2} we remove the 2-hop neighborhood from the graph once some $S_i$ or $P_i$ has been computed, respectively. Hence, no node in some $S_j$ or $P_j$ with $j>i$ can have a neighbor in $S_i$ or $P_j$, respectively. Furthermore each $P_i$ is non-adjacent to all the $S_j$, since the 2-hop neighborhoods of each $S_j$ gets removed before any $P_i$ gets computed. With the same reasoning, we obtain that every $W_i$ is non-adjacent to all the $S_j$. By \Cref{itm:independentWandS} of \Cref{lem:parallelcleanup} every $W_i$ is non-adjacent to all the $P_j$. And thus the MIS or $O(\log \log n)$ ruling set computed in Line \ref{line:cleanupgirth} does not violate independence. Also the MIS computed in Line \ref{line:constantdegree} cannot violate independence.

    \textbf{Domination:}
    All nodes contained in the ruling set, and also for all nodes not included in the MIS computed in Line \ref{line:constantdegree}, are contained in the returned set or have a neighbor in the set. With the nodes of the $W_i$'s we deal with in the clean-up phase in Line \ref{line:cleanupgirth}.
    We either compute an MIS or a $O(\log \log \log n)$ ruling set.
    
    If we compute an MIS those nodes are either contained in the returned set or have a neighbor in the set. 
    
    If we compute a  $O(\log \log \log n)$ ruling set, those nodes are either contained in the returned set or have a node in $O(\log \log \log n)$ distance.
    
    The main part requiring a proof are the nodes $(S_i\cup N_2(S_i))$ removed in \Cref{alg:girth:neighborremoval1} and the nodes $(P_i\cup W_i\cup N_2(P_i) \cup N(W_i))$ removed in \Cref{alg:girth:neighborremoval2}. We have already reasoned about the nodes in $S_i$, $P_i$ and $W_i$ that are removed in those steps. Nodes in $N_2(S_i)$ and $N_2(P_i)$ are dominated by the nodes in $S_i$ and $P_i$, respectively. Each node $w \in N(W_i)$ has a neighbor $u\in W_i$. 
    If we compute a MIS in the clean-up phase, $u$ either is in the MIS or has a neighbor in that MIS. In either case $w$ is dominated with distance at most 2.

    If we compute a $O(\log \log \log n)$-ruling set in the clean-Up phase, $u$ either is in that ruling set or is dominated by node with distance $O(\log \log n)$. In either case $w$ is dominated with distance at most $O(\log \log n +1) = O(\log \log n)$.

    \textbf{Error probability.} All steps and properties of the algorithm actually hold deterministically (in particular the clean-up phase, since both algorithms we use from \cite{GG24} are deterministically, and the final MIS computation are fully deterministic), except for the shattering claim on the size of the uncovered components and the degree drop of \procdegreedropsampling. These claims fail with probability $n^{-c}$ for a tunable constants $c_1$ and $c_2$.
    \end{proof}
    
\corruling*

\begin{proof}[Proof of \Cref{cor:2rulingtreeV2}]
    We first run  \Cref{alg:highgirth}, except for the clean-up phase (Line \ref{line:cleanupgirth}). For the first for loop we only need $R_1 = O(\log \log \log n) = O(\log \log \Dmax)$ iterations, to get the degree down to $O(\poly \log n)$.
    
     Our adaptation invokes the MIS algorithm for trees of \cite{Barenboim2010} as the clean-up phase in Line \ref{line:cleanupgirth}. By \Cref{lem:girth:smallcomp} we only consider components of size $N = O(\poly \log n)$. Then the MIS algorithm by \cite{Barenboim2010} works in $O(\log N/\log \log N) = O(\log \log n/\log \log \log n)$. Since we compute a MIS in the Clean-up phase, correctness of the whole algorithm follows from the proof of \Cref{thm:rulingSetHighGirth}. In total, we have a runtime of $O(\log \log \Dmax)+O(\log \log n/\log\log\log n)$.
\end{proof}

 \begin{remark}
        The \CONGEST model is identical to the \LOCAL model, except that the size of each message is restricted to  $O(\log n)$ bits. The pre-shattering phases of the algorithms for \Cref{thm:rulingSetTrees,thm:rulingSetHighGirth,cor:2rulingtreeV2} immediately work in \CONGEST. The same is true for the rake \& compress based post-shattering phase required to obtain the $O(\log\log n)$ round algorithm of \Cref{thm:rulingSetTrees}.
        Adapting the post-shattering phase in the result for  high girth graphs in \Cref{thm:rulingSetHighGirth} to the \CONGEST model requires more care. After the pre-shattering phase we are left with components of size $N = O(\poly\log n)$. We can compute an MIS of these components in $O(\poly\log N)=O(\poly\log\log n)$ by using one of the recent network decomposition algorithm, e.g., \cite{RG20,GGHIR23}, and using the MIS algorithm for small diameter graphs of \cite{CPS20} to solve the MIS problem efficiently in each cluster. Thus, in \CONGEST,  \Cref{thm:rulingSetHighGirth} holds with a slightly worse runtime of $\poly\log\log n$ rounds.
        The result of \Cref{cor:2rulingtreeV2} immediately holds in \CONGEST, as the post-shattering phase of \cite{Barenboim2010} can be implemented in \CONGEST. There is no natural way to extend \Cref{thm:logloglog}, since there is no similarly efficient deterministic $O(\log \log n)$-ruling set algorithm in the \CONGEST model.
   \end{remark}

\bibliographystyle{alpha}
\bibliography{references}

\appendix
\allowdisplaybreaks
\section{Proof of Lemma~\ref{lem:preparation} from Section~\ref{sec:trees}}
\label{app:probability}


\paragraph{The distribution of the Minimum of independent Random Variables.} The following lemmata are crucial for our analysis, to capture the influence of dependencies.  
    
    \begin{lemma}[CDF of Minimum of RVs, e.g., \cite{StackExchangeMinDistribution}]
    \label{lem:minimumDistribution}
Let $X_1, \ldots, X_k$ be i.i.d. random variables where the CDF of each $X_i$ is $F: D\rightarrow [0,1]$. Then $\min_{i=1}^k X_i$ has CDF $1-(1-F(x))^k$ where $x\in D$. 
    \end{lemma}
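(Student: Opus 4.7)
The plan is to reduce the CDF of the minimum to the CDF of the maximum of the complementary events, exploiting independence. Concretely, write $M = \min_{i=1}^k X_i$ and compute $\Pr[M \leq x]$ via the complementary event $\{M > x\}$. Observe the identity $\{M > x\} = \bigcap_{i=1}^k \{X_i > x\}$, since the minimum exceeds $x$ if and only if every $X_i$ exceeds $x$.

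From here the argument is two lines. Because the $X_i$ are independent, the probability of the intersection factors as
\[
\Pr[M > x] = \prod_{i=1}^k \Pr[X_i > x] = \prod_{i=1}^k (1 - F(x)) = (1 - F(x))^k,
\]
using that each $X_i$ has CDF $F$ so $\Pr[X_i > x] = 1 - F(x)$. Taking complements yields the claimed CDF
\[
\Pr[M \leq x] = 1 - (1 - F(x))^k.
\]

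There is essentially no obstacle here; the only subtlety worth flagging is the potential gap between $\Pr[X_i > x]$ and $1 - F(x)$ when $F$ has atoms, but this is immediate from the definition $F(x) = \Pr[X_i \leq x]$ so that $\Pr[X_i > x] = 1 - F(x)$ holds regardless of continuity. Thus the lemma follows directly from independence and a complementation step, with no need to invoke continuity of $F$ or any absolute continuity hypothesis.
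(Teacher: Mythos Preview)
Your proof is correct and is precisely the standard argument. The paper does not actually prove this lemma; it is stated as a known fact with an external citation, so there is nothing to compare against beyond noting that your derivation is the canonical one.
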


\begin{lemma} 
\label{lem:conditionalProb}The following statements hold:
\begin{enumerate}
\item Let $r$ and $X_1,\ldots,X_k$ be uniformly distributed in $[0,1]$. 
Let $M=\min_{i=1}^k X_i$. Then, the density function of $r$ conditioned on $r<M$ is 
$f_{r\mid (r\leq M)}(x)=(k+1)(1-x)^k$.\label{itm:cond1}
\item    Let $r$ be uniformly distributed in [0,1] and let $X_1, \dots, X_\ell$ be distributed according to the probability density function $f(x)=k(1-x)^{k-1}$. Let $M = \min_{i=1}^\ell X_i$. Then, $\mathbb{P}[r<M] = \frac{1}{(k \ell+1)}$.\label{itm:cond2}
\end{enumerate}
\end{lemma}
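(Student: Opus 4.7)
The plan is to prove both parts using standard density/CDF manipulations together with \Cref{lem:minimumDistribution}, which we can invoke directly.

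For part~\ref{itm:cond1}, I would apply Bayes' rule in the continuous form: $f_{r\mid r<M}(x) = \frac{\Pr[r<M\mid r=x]\cdot f_r(x)}{\Pr[r<M]}$. Since $r$ is uniform on $[0,1]$ we have $f_r(x)=1$ on this interval, and conditional on $r=x$ the event $r<M$ becomes $\min_{i} X_i > x$, which has probability $(1-x)^k$ by independence of the $X_i$'s. It then remains to compute the normalizing constant $\Pr[r<M]=\int_0^1 (1-x)^k\,dx = 1/(k+1)$. Combining these pieces yields the claimed density $(k+1)(1-x)^k$.

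For part~\ref{itm:cond2}, I would first identify the distribution of each $X_i$: integrating the density $f(x)=k(1-x)^{k-1}$ gives the CDF $F(x)=1-(1-x)^k$, i.e.\ $\Pr[X_i>x]=(1-x)^k$. This is precisely the survival function of the minimum of $k$ i.i.d.\ uniforms on $[0,1]$, so by \Cref{lem:minimumDistribution} each $X_i$ is distributed as such a minimum. Then, by independence of the $X_j$'s, $\Pr[M>x]=(1-x)^{k\ell}$. Since $r$ is uniform on $[0,1]$, $\Pr[r<M]=\mathbb{E}[M]=\int_0^1 \Pr[M>x]\,dx = \int_0^1 (1-x)^{k\ell}\,dx = \frac{1}{k\ell+1}$, which is the claim.

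Neither part presents a real obstacle; both reduce to single-variable integrals of $(1-x)^m$. The only minor care point is to keep straight that in part~\ref{itm:cond2} the $X_i$'s are themselves not uniform, so one must compute their survival function from the density before taking the product $\prod_{i=1}^{\ell}\Pr[X_i>x]$ to obtain $\Pr[M>x]$; afterwards the identity $\Pr[r<M]=\mathbb{E}[M]$ for uniform $r$ finishes the calculation cleanly.
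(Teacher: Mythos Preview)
Your proposal is correct and follows essentially the same approach as the paper: compute $\Pr[r<M\mid r=x]=(1-x)^k$ (resp.\ $(1-x)^{k\ell}$), integrate over $x$ to obtain the normalizing constant, and divide. The only minor difference is that in part~\ref{itm:cond2} you shortcut the paper's explicit computation of the density $f_M$ by using the survival function $\Pr[M>x]=(1-x)^{k\ell}$ together with the identity $\Pr[r<M]=\mathbb{E}[M]=\int_0^1\Pr[M>x]\,dx$ for uniform $r$; this is a slight streamlining but not a different route.
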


In \Cref{lem:conditionalProb} (part \ref{itm:cond1})  $X_1,\dots,X_k$ can be seen as the random variables $r_w$ for children $w$ of $v$ and the conditioned density function tells us the probability that none of the children of $v$ is a local minimum, due to the fact that $r_v$ is smaller than all the values $r_w$. \Cref{lem:conditionalProb} (part \ref{itm:cond2}) gives us a similar statement for successful nodes in $N_2^-(v)=\{u\in V\mid dist(v,u)=2\}$.

\begin{proof}[Proof of \Cref{lem:conditionalProb}]
\textbf{Proof of \Cref{itm:cond1}:}

As the pdf of each $X_i$ is $f_{X_i}(x)=f(x)=1$, we obtain by \Cref{lem:minimumDistribution} that $f_M(x)= k\cdot (1-F(x))^{k-1}\cdot f(x)=k\cdot (1-x)^{k-1}$ holds. 
We also have
\begin{align}
\label{eqn:a2}
\mathbb{P}[r=x \wedge r<M]=\int_x^1 f_{r,M}(x,m) dm = \int_x^1 f_r(x) f_M(m)dm = \int_x^1 1 \cdot k (1-m)^{k-1} dm = (1-x)^k
\end{align}
We deduce
\begin{align*}
\mathbb{P}[r<M]= \int_{0}^1 \mathbb{P}[r=x \wedge r<M] dx \stackrel{(\ref{eqn:a2})}= \int_{0}^1 (1-x)^k dx=1/(k+1)~.
\end{align*}
Thus, we obtain
\begin{align*}
f_{r\mid (r\leq M)}(x)=\frac{\mathbb{P}[r=x \wedge r<M]}{\mathbb{P}[r<M]}=\frac{(1-x)^k}{1/(k+1)}=(k+1)(1-x)^k~.
\end{align*}
\textbf{Proof of \Cref{itm:cond2}:}

    As the CDF of each $X_i$ is 
    \begin{align*}
        F(x)=\int_0^xf(z)dz = \int_0^xk\cdot (1-z)^{k-1}dz = -(1-z)^k|_0^x =1 -(1-x)^k,
    \end{align*}  
    we have, by \Cref{lem:minimumDistribution}, for the CDF $F_M(x)$ of M:
    \begin{align*}
        F_M(x)=1-(1-F(x))^\ell = 1-(1-(1 -(1-x)^k))^\ell= 1-(1-x)^{k\ell},
    \end{align*}
    and thus $f_M(x)=k\ell (1-x)^{k\ell -1}$.
    We also have
    \begin{align*}
        \mathbb{P}[r=x \land r<M] & =\int_x^1f_{r,M}(x,m)dm=\int_x^1f_r(x)f_M(x)dm\\
        & =\int_x^1 1\cdot k\ell(1-m)^{k\ell - 1}=-(1-m)^{k\ell}|_x^1=(1-x)^{k\ell}
    \end{align*}
    We deduce
    \begin{align*}
        \mathbb{P}[r<M]=\int_0^1\mathbb{P}[r=x \land r<M]dx=\int_0^1(1-x)^{k\ell}=-\frac{1}{k\ell+1}(1-x)^{k\ell+1}|_0^1=\frac{1}{k\ell+1}.
    \end{align*}
    \end{proof}

\lemPreparation*
    \begin{proof}
    We say a set $W\subseteq V(T)$ is successful, if $\forall w \in W:w$ is successful. 


For a parameter $x$ let $f^x$ be the density function of $r\in [0,1]$ chosen u.a.r conditioned on the event that $r$ is smaller than the minimum of $x$ values chosen u.a.r. from $[0,1]$. Let $\ell$ be an integer and let $X_1,\ldots,X_{\ell}$ be distributed according to $f^{d-1}$ and $Y_1,\ldots, Y_{\ell}$ according to $f^{\Dmax-1}$. In order to show the lemma we will show the following two statements.

  \begin{align}
            \mathbb{P}[v \in U] \leq \mathbb{P}[L_v<\ell]+\mathbb{P}[r_v < \min_{i=1}^\ell X_i ]\leq \mathbb{P}[L_v<\ell] +\frac{1}{d\cdot \ell+1}\label{lem:caseSmallDegreePreparationProof}
        \end{align}

  \begin{align}
  \mathbb{P}[v \in U] \leq \prod_{u\in H(v)}\left(\mathbb{P}[C_u< \ell] + \mathbb{P}[r_u < \min_{i=1}^\ell Y_i]\right) \leq 
            \prod_{u\in H(v)}\left(\mathbb{P}[C_u<\ell] +\frac{1}{\Dmax\cdot \ell+1}\right)
          \label{lem:caseHighDegreePreparationProof}
        \end{align}

        \textbf{Proof of \Cref{lem:caseSmallDegreePreparationProof}}: We know that $v$ is covered if it has a local minimum in $N(v)$. Further more we will, for the proof of \Cref{lem:caseSmallDegreePreparationProof}, only bound the probability that $v$ has no child that is a local minimum.
        Therefore we can make the following assumption, since it only decreases the probability that $v$ has a child that is a local minimum:  
        Assume that every node $u\in L(v)$ has degree exactly $d$.
        
        There is a local minimum in $N(v)$ if there exists a successful node $w\in N(v)$ with $r_v > r_w$ (since then $w$ is a local minimum). Thus, in order for $v$ to be not covered, for every successful neighbor $w$ it has to hold that $r_v < r_w$. We obtain with the law of total probability. 
        \begin{align*}
            \mathbb{P}[v \in U ] 
            &\leq \mathbb{P}[\text{$v$ has no child that is a local minimum}] \\
            &=  \mathbb{P}[\text{$v$ has no child that is a local minimum} \mid L_v<\ell]\cdot \mathbb{P}[L_v < \ell]\\ 
            &+ \mathbb{P}[\text{$v$ has no child that is a local minimum} \mid L_v\geq \ell]\cdot \mathbb{P}[L_v \geq \ell]\\
            &\leq \mathbb{P}[L_v < \ell]
            + \sum_{W\subseteq N(v):|W| \geq  \ell}\mathbb{P}[r_v < \min_{w \in W} r_w\mid  W \text{ successful}] \cdot \mathbb{P}[W \text{ successful}]  \\
            &\stackrel{(*)}{\leq} \mathbb{P}[L_v < \ell] + \sum_{W\subseteq N(v):|W| \geq  \ell}\mathbb{P}[r_v < \min_{i=1, \dots, \ell} X_i  ] \cdot \mathbb{P}[W \text{ successful}]\\
            & \leq  \mathbb{P}[L_v<\ell] + \mathbb{P}[r_v < \min_{i=1, \dots, \ell} X_i  ]\cdot \sum_{W\subseteq N(v):|W| \geq  \ell} \mathbb{P}[W \text{ successful}]  \\ 
          &  \leq \mathbb{P}[L_v<\ell]+\mathbb{P}[r_v < \min_{i=1, \dots, \ell} X_i  ] \stackrel{\Cref{lem:conditionalProb}}{\leq} \mathbb{P}[L_v < \ell] + \frac{1}{d\cdot \ell+1}.
        \end{align*}
        In $(*)$ we use the fact, that conditioning on the success of $W$ only influences the random variables $r_w$ with $w \in W$ and not the distribution of $r_v$. Also the distribution of $r_w$ for each $w\in W$ remains independent even if we condition on the set $W$ being successful.
        
        \textbf{Proof of \Cref{lem:caseHighDegreePreparationProof}}: We know that $v$ is covered if there is a local minimum in $N_2^-(v) \subseteq N_2(v)$. Further more we will, for the proof of \Cref{lem:caseHighDegreePreparationProof}, only bound the probability that there is no $w \in N_2^-(v)$ that is a local minimum. Therefor we can make the following assumptions, since it only decreases the probability that there is a node $w\in N_2^-(v)$ that is a local minimum:  Assume that every node $u\in H(v)$ has degree $d+1$, this decreases the size of $N_2^-(v)$ and assume that every node $w \in N_2^-(v)$ has degree $\Dmax$.
        
        There exists a successful node $w\in N_2^-(v)$ with $r_{u_w}>r_w$,  where $u_w$ is the unique child of $v$ that is the parent of $w$. Thus, in order for $v$ to be not covered, for every successful node $w \in N_2^-(v)$ it has to hold that $r_{u_w}<r_w$. Now let us first look at one fixed node $u \in H(v)$ with degree $d_u$ at least $d$ 
        Then we obtain by \Cref{lem:caseSmallDegreePreparationProof}:
        \begin{align*}
        \label{eqn:uinH}
            \mathbb{P}[\text{$u$ has no child that is local minimum}] \leq \mathbb{P}[C_u< \ell] + \mathbb{P}[r_u < \min_{i=1,\dots,\ell} Z_i].
        \end{align*}
        Note that the events whether different nodes in $H(v)$ have children that are local minima are independent. Hence, we obtain. 
:
        \begin{align*}
            \mathbb{P}[v \in U] 
           & \leq \prod_{u \in H(v)} \mathbb{P}[u \text{ has no child that is local a minimum}] \\
           & \leq\prod_{u\in H(v)}\left(\mathbb{P}[C_u< \ell] + \mathbb{P}[r_u < \min_{i=1}^\ell Y_i]\right) \\
            & \stackrel{(*)}{\leq} \prod_{u\in H(v)}\left(\mathbb{P}[C_u<\ell] +\frac{1}{\Dmax\cdot \ell+1}\right)~, 
        \end{align*}
    where we used \Cref{itm:cond1} and \Cref{itm:cond2} of \Cref{lem:conditionalProb} at $(*)$.
    \end{proof}
\section{Deferred Proofs}
\label{sec:deferredproofs}
\begin{proof}[Proof of \Cref{lem:shatteringBase}]
        Let $H \coloneqq T^{[7,8]}$ be the graph on $V(T)$ with the edge set $\{ \{u,v\} \mid 7 \leq dist_T(u,v) \leq 8\}.$ 
        
        Assume that there exists a $6$-independent $8$-connected set $W'$ of size bigger than $\log_{\Dmax} n$.
        Then we know that $H[W]$ contains a tree on more than $\log_{\Dmax}n$ nodes. There are at most $4^{\log_{\Dmax}n}$ different such tree topologies and each can be embedded into $H$ in less than $n \cdot \Dmax^{8(\log_{\Dmax}n-1)}$ ways:
        there are $n$ choices for the root and at most $\Dmax^{8}$ choices for each further node.
        
        Since we measure the distances in $T$ and not in $H$, by \Cref{lem:survivingSubsets}, the probability that a particular tree occurs in $H[W]$ is at most $\Dmax^{-c  \log_{\Dmax}n}$.
        
        A union bound over all trees thus lets us conclude that such a set exists with probability at most $4^{\log_{\Dmax}n} \cdot n \cdot \Dmax^{8(\log_{\Dmax}n-1)} \cdot \Dmax^{-c \log_{\Dmax}n} \leq n^{-(c-10)} \leq n^{-c_1}$. Since $c$ in \Cref{lem:survivingSubsets} is tunable $c_1$ in this statement is as well. Thus with probability $1-n^{-c_1}$ such a set does not exist. 
    \end{proof}

\section{Concentration Inequalities and Density Functions}

For a proof of the Chernoff Bound, see e.g. \cite{DGP98}
\begin{lemma}[Chernoff Bound]
    Let $X$ be the sum of n independent, identically distributed indicator random variables. For any $\delta \in [0,1]$,
    \begin{align*}
        \mathbb{P}[X<(1-\delta)\mathbb{E}[X]]<\exp(-\delta^2 \mathbb{E}[X]/2).
    \end{align*}
    
\end{lemma}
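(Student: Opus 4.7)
The plan is to follow the standard exponential moment (Chernoff's method) argument. First, for any parameter $t > 0$, I would rewrite the event $\{X < (1-\delta)\mu\}$, where $\mu = \mathbb{E}[X]$, as $\{e^{-tX} > e^{-t(1-\delta)\mu}\}$ and apply Markov's inequality to the nonnegative random variable $e^{-tX}$, obtaining $\mathbb{P}[X < (1-\delta)\mu] \leq e^{t(1-\delta)\mu}\,\mathbb{E}[e^{-tX}]$.

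Second, I would factor the moment generating function using the independence assumption: $\mathbb{E}[e^{-tX}] = \prod_{i=1}^n \mathbb{E}[e^{-tX_i}]$. Writing $p$ for the common success probability of each indicator, a one-line computation gives $\mathbb{E}[e^{-tX_i}] = 1 + p(e^{-t}-1)$, which I would bound above by $\exp(p(e^{-t}-1))$ via $1+x\leq e^x$. Taking the product and using $np = \mu$ yields $\mathbb{E}[e^{-tX}] \leq \exp\bigl(\mu(e^{-t}-1)\bigr)$, so
\begin{align*}
\mathbb{P}[X < (1-\delta)\mu] \;\leq\; \exp\bigl(\mu(e^{-t}-1) + t(1-\delta)\mu\bigr).
\end{align*}
Third, I would optimize the exponent over $t>0$. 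Differentiating in $t$ gives the critical point $e^{-t} = 1-\delta$, which is positive for $\delta \in [0,1)$. Substituting back collapses the bound to the classical form $\bigl(e^{-\delta}/(1-\delta)^{1-\delta}\bigr)^{\mu} = \exp\bigl(-\mu\delta - \mu(1-\delta)\ln(1-\delta)\bigr)$.

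The main remaining obstacle, and essentially the only nontrivial step, is the purely analytic inequality
\begin{align*}
-\delta - (1-\delta)\ln(1-\delta) \;\leq\; -\frac{\delta^2}{2} \qquad \text{for } \delta \in [0,1),
\end{align*}
which is what converts the tight bound into the clean statement of the lemma. I would verify this by defining $f(\delta) = -\delta - (1-\delta)\ln(1-\delta) + \delta^2/2$ and checking that $f(0) = 0$, $f'(\delta) = \ln(1-\delta) + \delta$ with $f'(0) = 0$, and $f''(\delta) = -\delta/(1-\delta) \leq 0$ on $[0,1)$. Concavity of $f$ together with $f(0) = f'(0) = 0$ forces $f \leq 0$ on $[0,1)$, which is exactly what is needed. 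The boundary case $\delta = 1$ is handled separately by observing that $\mathbb{P}[X < 0] = 0$ trivially satisfies the bound. Exponentiating yields $\mathbb{P}[X < (1-\delta)\mu] < \exp(-\delta^2\mu/2)$ as claimed.
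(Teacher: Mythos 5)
Your proof is correct and complete. Note that the paper does not actually prove this lemma at all: it simply cites a reference for it, so there is no in-paper argument to compare against. What you supply is the standard exponential-moment (Chernoff) method: Markov's inequality applied to $e^{-tX}$, factorization of the moment generating function via independence, the bound $1+x\leq e^x$, optimization at $e^{-t}=1-\delta$, and the calculus verification that $-\delta-(1-\delta)\ln(1-\delta)\leq -\delta^2/2$ on $[0,1)$ via concavity of $f(\delta)=-\delta-(1-\delta)\ln(1-\delta)+\delta^2/2$ with $f(0)=f'(0)=0$. All of these steps check out, including the boundary case $\delta=1$.

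One pedantic point: the chain of inequalities you write yields $\mathbb{P}[X<(1-\delta)\mu]\leq\exp(-\delta^2\mu/2)$ rather than the strict inequality claimed in the lemma. Strictness can be recovered with a sentence (e.g., $1+x<e^x$ is strict for $x\neq 0$, so the MGF bound is strict unless $p=0$ or $\delta=0$, and those degenerate cases are checked directly since $\mathbb{P}[X<\mathbb{E}[X]]<1$ always), but as used in the paper the distinction is immaterial.
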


\begin{definition}
    The cumulative density function (CDF) of a random variable $X$ is the function given by
    \begin{align*}
        F_X(x)=\mathbb{P}[X\leq x].
    \end{align*}
    The (probability) density function (pdf) of a random variable $X$ is the function given by
    \begin{align*}
        \mathbb{P}[a\leq X\leq b] = \int_a^bf_X(x)dx.
    \end{align*}
    
\end{definition}

\end{document}

    \section{Idea: High girth sampling}
    {\color{blue}
Consider the following algorithm with 2 steps from the perspective of node $v$.
\begin{itemize}
    \item \textbf{Large 2-hop:} Each node marks itself as active independently with probability $1/2$. Then, each active node picks a random number between 0 and 1 and each local minima is chosen into an independent set. Kill each node with a neighboring independent node.
    \item \textbf{Small degree neighbors:} Consider each node with degree at most $\sqrt{\Delta}$ as active. Each active node picks a random real between $0$ and $1$ and the local minima are chosen into an independent set.
\end{itemize}

\textbf{Goal:} Each node with degree at least $\Delta^{3/4}$ dies.
We split into two cases. First, consider nodes with a large amount of two-hop neighbors. Suppose that $\Delta$ is at least some large $\poly \log n$.

\noindent \textbf{Many 2-hop nodes:} Suppose that there are at least $\Delta^{0.6}$ neighbors with degree at least $\sqrt{\Delta}$.
Using Chernoff and the Union bound, we get that each of the direct neighbors has at least $1/3$ neighbors are active.
Using Chernoff again, we have that at most half of the direct neighbors of $v$ with degree at least $\sqrt{\Delta}$ are active.
Combining the above, we have that there are at least $\Delta^{0.55}$ passive neighbors with at least $\sqrt{\Delta}/3$ active neighbors with high probability.
The actions of the active neighbors of passive nodes are independent, since they do not share any neighbors (passive nodes are not participating!). Now, with very high probability, at least one $2$-hop neighbor of $v$ is a local minimum.

\noindent \textbf{Small degree neighbors:} 
Suppose that a node $v$ survived the first step.
Furthermore, assume that no neighbor was removed in the first step. Otherwise, we would have a minimum in our 2-hop neighborhood.
Now, it must be the case (whp) that $v$ has at least $\Delta^{0.6}$ neighbors of degree at most $\sqrt{\Delta}$. Otherwise, $v$ does not survive the first step.

By design, $v$ is not active and will not participate in finding minima. Hence, the trials of the neighbors are independent. We have $\Delta^{0.6}$ independent trials with error probability at most $1 - 1/\sqrt{\Delta}$. Chernoff says that $v$ is killed with high probability.

}

\end{document}

    \section{Improved MIS in high girth graphs}
The goal of this section is to prove the following theorem. \\
\mb{can we compute $(\alpha,\beta)$-ruling set from a $(1,\beta)$-ruling set? }
\thmMISHighGirth*   

\section{Outdated/Wrong/Ideas}

    \begin{lemma}
        Let $c > 0$ be an arbitrary constant. At the end of phase $i$ of step 2, with probability at least $1-n^{-c}$ all nodes with degree higher than $\Delta_i$ are covered and the maximal degree of the active nodes is at most $\Delta_i$ after phase $i$ of step 2 with probability $1-n^{-c}$.
    \end{lemma}
 
    \begin{proof}
        We condition on the event, that the previous phase was successful, i.e. there are no nodes with degree higher than $\Delta_{i-1}$, so $\Delta_{i-1}$ is our maximum degree in phase $i$. For $i=1$ this happens with probability 1 and for phase $i$ we will inductively see that this happens with probability $1-n^{-c}$.\\  
        Let $\tilde{c} = c+1$ and let $k_i = 2\tilde{c}$.\\
        Consider an active node $v \in W_i = \{ v \in V \mid d_v > \Delta_i \} \cap A$, in particular $d_v \in (\Delta_i,\Delta_{i-1}]$.\\
        \textbf{1.Case}: Assume that $\sum_{w \in N(v)} (d_v-1) \geq \Delta_{i}^{1.5}$.\\ 
        In this case we obtain that $|N_2^-(v)| \geq \Delta_i^{1.5}$ (nodes $w$ with $dist(v,w) =2 $), since our input graph is a tree(has girth at least 5). Thus we obtain:
        \begin{align*}
            \mathbb{P}[\text{$v$ remains uncovered after phase $i$}] = \mathbb{P}[\text{$v$ is not covered after $k_i$ rounds of LMJ]} \leq (\prod_{w \in N_{2}(v)} (1-\frac{1}{1+d_w}))^{k_i} \\
            \leq (\prod_{w \in N_{2}(v)} (1-\frac{1} {1+d_w}))^{k_i} \leq (1-\frac{1}{1+d_w})^{\Delta_i^{1.5} \cdot k_i} \leq \exp(-\frac{\Delta_i^{1.5}k_i}{2d_w}) 
            \leq \exp(-\frac{\Delta_i^{1.5}k_i}{2\Delta_{i-1}})\\
            = \exp(-\frac{\Delta_{i-1}^{1.5 \cdot 3/4}k_i}{2\Delta_{i-1}}) 
            = \exp(-\Delta_{i-1}^{\frac{1}{8}}{\tilde{c}}) \leq \exp(- \tilde{c} \log n) = n^{-\tilde{c}} .
        \end{align*}
        \textbf{2.Case}: Assume that $\sum_{w \in N(v)}(d_w-1) < \Delta_i^{1.5}.$\\
        Thus $v$ has at most $\Delta_i^{3/4}$ neighbors with degree at least $\Delta_i^{3/4}$ and since $d_v > \Delta_i$ it has at least $\Delta_i-\Delta_i^{3/4}$ neighbors with degree at most $\Delta^{3/4}$. Let $A \coloneqq \{ w \in N(v) \mid d_w \leq \Delta^{3/4}\}$, then:

        \begin{align*}
            \mathbb{P}[\text{$v$ remains uncovered after phase $i$}]=\mathbb{P}[\text{$v$ is not covered after $k_i$ rounds of LMJ]} \\
            \leq (\prod_{w \in N_{2}(v)} (1-\frac{1}{1+d_w}))^{k_i} 
            \leq (\prod_{w \in A}(1-\frac{1}{d_w+1})^{k_i} 
            \leq (1-\frac{1}{\Delta^{3/4}+1})^{(\Delta_i - \Delta_i^{3/4})k_i}
            \leq \exp(-\frac{(\Delta_i-\Delta_i^{3/4})k_i}{2\Delta_i^{3/4}}) \\
            \leq \exp(-(\Delta_i^{0.25}-1)\tilde{c}) 
            = \exp((-\Delta_{i-1}^{\frac{3}{16}}-1)\tilde{c}) 
            \leq \exp(-\Delta_{i-1}^{\frac{1}{8}}\tilde{c}) \leq \exp(-\tilde{c} \log n) = n^{-\tilde{c}}.
        \end{align*}
        Thus with probability $1-n^{-\tilde{c}}$ every node in $W_i$ is covered after phase $i$ of step 2, which implies that $W_i \subseteq N_2(S_i)$ for $S_i$ being the ruling set $S$ at the end of phase $i$ of step 2. Therefor the set of active nodes $A$ does not contain any node of $W_i$ with high probability, so all nodes remaining in $A$ have at most degree $\Delta_i$ with probability $1-n^{-\tilde{c}}$.\\
        We now drop the conditioning on the event, that phase $i$ was successful and use the union bound to obtain the lemma.
    \end{proof}

    SHOULD WORK AGAIN:

    Finish-off: After $R$ phases of step 2 we know that the maximum degree in $G[A]$ is, with probability $1-n^{-c}$, at most $\Delta_R = \Delta^{(\frac{3}{4})^{R}} \leq \log^8 n$.\\
    Again observe, that we can just compute a 2 ruling set on the remaining nodes, without running at risk to violate the independency of our solution, since we always removed the 2-hop neighborhood of every intermediate solution.\\
    So our goal is to compute a 2 ruling set on the remaining active nodes:
    For that we use the work of Ghaffari, which shows, that we can compute a 2 ruling set in $O(2\log^{\frac{1}{2}} \Delta)+O(2^{\sqrt{\log \log n}}\textit{})$. If we were to plug in our max degree $\Delta_R$, we would obtain a running time of $O(2\log^{\frac{1}{2}} \Delta_R)+O(2^{\sqrt{\log \log n}})$ = $O(\sqrt{\log \log n})+O(2^{\sqrt{\log \log n}})$.
    The critical part is the second summand of the runtime, but we can get rid of this exponential term, because we just consider trees (bounded arb./bounded girth?).\\
    In their presented algorithm they use a sparsification routine (super fast t ruling set paper) and compute an MIS on the remaining graph (faster MIS is their contribution), showing the following result:

    BEFORE WE APPLY THE MIS ALGO WE DO ONE STEP OF SPARSIFICATION (THAT IS THE 2 RULING SET APPROACH). I.e the degree will be further decreased to $O(f_\$)$
    
    \begin{lemma}
        Let c be a large enough constant and B be the set of nodes remaining undecided after $\Theta(c \log \Delta)$ rounds of the MIS algorithm. Then, with probability at least $1-n^{-c}$, we have the following two properties:
        \begin{itemize}
            \item[(P1)] There is no $(G^{4^-})$-independent $(G^{9^-})$-connected subset $S \subseteq B$ s.t. $|S| \geq \log_\Delta n$.
            \item[(P2)] All connected components of $G[B]$, have each at most $O(\log_\Delta n \cdot \Delta^4)$ nodes.
        \end{itemize}
    \end{lemma}

    For our use-case we can plug in $\Delta = \Delta_R $ and obtain that each connected component of $G[B]$ has size at most $O(\log_{\Delta_R}n \cdot \Delta_R^4)=O(\log_{\Delta_R}n \cdot \log^{32} n)=O(\frac{\log^{33}n}{\log \log^8 n}) =O(poly \log n) $.
    So let $N=\frac{\log^{33}n}{\log \log^8 n}$.\\
    
    Since we are considering trees, we know that the arboricity $a(G) = 1$, thus we can compute a H-partition of size $O(\log N)$ and degree at most $(2+ \epsilon) \cdot 2$ in $O(\log N)$ time ($0 < \epsilon \leq 1$). Now we can compute a valid coloring $W_i$ for each $H_i$ with at most $25 \geq ((2+ \epsilon) \cdot 2)^{2}$ colors ($ i \in [ \ell ]$, for $\ell = O(\log N)$).
    Now we can compute an MIS using this coloring as follows:
    Initialize $S \coloneq W_{\ell 1}$ (first color-class of $W_{\ell}$) and extend it to a valid MIS for $H_{\ell}$, by iterating through all color-classes and checking each node $v$ (in parallel) of that color-class weither $v$ can join $S$. \\
    Now we can extend the MIS to an MIS of the component by iterating through all layers $H_{\ell -1}, \dots , H_1$ and repeating the above procedure (i.e. going through all color-classes).\\
    This procedure takes $O(\log N)$ time, since we have $O(\log N)$ layers and each is done in $25 = O(1)$ rounds.\\
    Thus in total we get a total runtime of $O(\log N + \log N) = O(\log poly \log n) = O(\log \log n)$, for computing the H-partition and the MIS.\\

\section{Observations}
    General assumption: input $T$ is a tree.
\begin{itemize} 
    \item To Do:
    \begin{itemize}
        \item \textbf{Fix dependencies!}, look at star example. $\rightarrow$ second case (large 2-hop neighborhood)! 
        \item  obtain shattering for each round (for nodes with high degree)
    \end{itemize}

    \item Can solve shattering instances parallel since they are independent, because we remove 2-hop neighborhood.

    \item If $\Delta \leq O(\log \log n)$, we are done because of $O(\Delta + \log^* n).$
    \item If we are given a subrgaph $H \subset G$ and we compute a $t$-ruling set on $H$. Then we have a a $t+x$-ruling set for all nodes in $G$ with distance at most $x$ to $H$.\\
    In particular a MIS on a subgraph $H$ gives us a $\beta$-ruling set for all $v \in V(G)$ with $dist(v,H)\leq \beta -1$. \\
    (Usecase: compute MIS on high degree nodes?)
    \item If $T$ is $\Delta$-regular, we have an $O(\log \log n)$ rounds algorithm ($\beta \geq 3$). In particular $O(1)$ for $\Delta \geq \log^{\frac{1}{\beta-2}} n$ and for the other case we need shattering. (local minima join)
    \item Still needs formal proof: If all degrees lie in the interval $[\Delta^{\frac{1}{D}},\Delta]$ and $\beta \geq D+2$, we get an $O(\log \log n)$ algorithm with same case distinction as above. In particular $O(1)$ algo. for $\Delta \geq \log^\frac{1}{\beta-2} n$. \\
    (Need to work out the details in order of constants in runtime, in particular what do we get for arbitrary graphs, i.e. $D= \Omega(log \Delta)$? It should be $O(poly(\beta) \log \log n)$ and since $\beta \geq D + 2 \geq \log \Delta +2$, we dont get a $O(\log \log n)$ algo anymore)\\
    (local minima join)
    \item If all degrees are in $[x,x^{1.1})$, we get shattering wrt to $x$, i.e. prob. that a node after $O(1)$ rounds is not covered is at most $x^{-1}$.
    \item Idea to subdivide the nodes into degree classes, e.g. $[x,x^{1.1})$ for $x = \Delta, \dots , 1$ or $x=1, \dots , \Delta$ brings the following problem: it is highly unclear, what the degree of the neighbors of a node $v$ with $d(v) \in [x,x^{1.1})$ is. Also if you look at the induced subgraph on nodes in that degree range, the degree of each node can get arbitrarily small.
    \item To Do: Look at the counterexample: are there activation probabilities $p_v$ for all $v \in V$ s.t. one layer or on node of a layer shatters (maybe just dependent of a constant amount of other layers). $\rightarrow $ has at most diameter of $O(\log \log n)$
    \item Revisist sampling/sparsification probabilities again (can fast 3 ruling sets approach be generalised?)\\
    \textbf{Problem:} if we try to use same sparsification as Kothapalli and Pemmaraju (super-fast 3-ruling sets) we have to sparsify faster (in $O(\log \log n)$ rounds, which leads to a higher degree in the components. In particular I could only bound it by $2^{i}-2^{i-1}$ in stage $i$ which can get "arbitrary" big ($1 \leq i \leq O(\frac{\log \log \Delta}{(\log n)^\epsilon}) \leq O(\log \log n)$).\\
    I am not completely sure why we need the $\epsilon$ in the original paper $\rightarrow$ understand that.
    \item How do sparsification and activation probabilities differ? $\rightarrow$ with activation prob. nodes can change there status each round, where in sparsification each node is permanently added to one nodeset which will be solved later.
    \item Most promising for me at the moment: sampling + better analysis? But since you mentioned that dependency paths get long, I am also not to sure about that. \\
    (I have to read the papers about that)
    \item Or a completely new idea.
    \item \begin{align}
    \exp\left(-\frac{\Delta^{\log^{-i}\log \Delta}}{\Delta^{\log^{-i+1}\log \Delta}}\right)=\exp\left(-\Delta^{\log^{-i}\log \Delta-\log^{-i+1}\log \Delta})\right)
    \end{align}

    \begin{algorithm}
    \caption{Randomized 2-ruling set for trees}
        \begin{itemize}
            \item[1.] Initialize $S \coloneqq \emptyset$ and $A \coloneqq V$. $W_i \coloneqq \emptyset$ for $i=1, \dots,R.$
            \item[2.] For $i=1, \dots, R$ : 
                \begin{itemize}
                    \item[] For $j=1, \dots, k_i:$ \\
                        $S_j \coloneqq $ \LMJ($G[A])$.\\
                        Set $S \leftarrow S \cup S_j$ and $A \leftarrow A \setminus (S_j \cup N_2(S_j))$.

                \end{itemize}
            $W_i \leftarrow \{ v \in A \mid d_v > \Delta_i \}$ and $B_i \leftarrow W_i \setminus A$. \\
            $A \leftarrow  A \setminus W_i$.
            \item[3.] In parallel for $i=1,\dots ,R$: Post-CleanUp($B_i$).
        
            \item[4.] $M \leftarrow MIS(G[A])$.
            \item[5.] Return $(M \cup S)$.
            
        \end{itemize}
    \end{algorithm}
    
\end{itemize}

\clearpage
\section{Treees: Old proof, ignoriring dependencies}

\begin{theorem}
    There is a randomized distributed algorithm that computes a $\beta$-ruling set for $\Delta$-regular trees (for constant $\beta \geq 3)$ $w.h.p.$ in $O(\log \log n)$ rounds in the LOCAL model.
\end{theorem}

\begin{algorithm}
    \caption{Randomized $\beta$-ruling set}
    \begin{itemize}
        \item[1.] Initialize $S \coloneq \emptyset$ and set all nodes to active and let $A \subseteq V$ denote the set of active nodes.
        \item[2.] Compute $r(v) \in [0,1]$ uniformly at random for all nodes.\\
        \textbf{If} $r(v) < r(w)$ for all $w \in N(v) \cap A$: add $v$ to $S$ and set $v$ and $N(v)$ to inactive.
    \end{itemize}
\end{algorithm}

We denote the $\beta$-hop neighborhood of a node $v$ as $N_{\beta}(v) = \{ w \in V | dist(v,w) \leq \beta \}$ and say that a node $v$ is $covered$, if there exists a node $w \in S \cap N_{\beta}(v)$.

For the first part of the analysis we will assume that the max degree of G is large, i.e. $\Delta \geq \log^{\frac{1}{\beta -2}} n$.\\

\begin{lemma}
    If $\Delta \geq \log^{\frac{1}{\beta -2}} n$ Algorithm 1 computes a $\beta$-ruling set w.h.p. after O(1) rounds.
\end{lemma}

\begin{proof}
In order to prove this lemma consider the probability, that a node $v$ stays uncovered after k round:

\begin{align*}
    \mathbb{P}[\text{$v$ is not covered after $k$ rounds}] \leq ((1- \frac{1}{\Delta + 1})^{\Delta^{\beta-1}})^k\leq e^{-\frac{\Delta^{\beta-1}}{\Delta + 1}\cdot k} \\
    \leq e^{-\frac{\Delta^{\beta-1}}{2 \cdot \Delta}\cdot k} 
    \leq e^{-\Delta^{\beta - 2}\cdot \frac{1}{2k}} \leq n^{-c}.
\end{align*}

    Here we use the fact, that each node is only dependent on the randomness in its $\beta$-hop neighborhood and every node stays active until itself or a direct neighbor gets added to $S$. Thus if $v$ is not covered, all nodes in its ($\beta$-1)-hop neighborhood are still active.
    Since this holds for every node $v \in V$, we know that every node is covered $w.h.p.$ $(1-\frac{1}{n^c}$) after k rounds.    
\end{proof}

    Now let us assume that the max. degree is small, i.e. $\Delta < \log^{\frac{1}{\beta -2}}n$.

    \begin{lemma}
        Let $c > 0$ be an arbitrary constant. For any 2$(\beta+1)$-independent set B the probability that all nodes of B remain uncovered after $\Theta (c \log \log n)$ rounds of Algorithm 1 is at most $\Delta^{-c |B|}$.
    \end{lemma}

    \begin{proof}
        Let $c_1 = 2c$.\\
        By the proof of lemma 1, we know that the probability that a node $v$ stays uncovered after $c_1  \log \log n$ rounds is at most 
        \begin{equation*}
            e^{-\Delta^{\beta - 2}\cdot \frac{c_1}{2} \log \log n} = \log^{-\Delta^{\beta -2}\cdot c}  n
        \end{equation*}
        Since $\beta \geq 3$, it holds that $c \log \log n \geq c\frac{1}{\beta -2} \log \log n  = c \log \log^{\frac{1}{\beta -2 }}  n$.\\
        Thus the fact that $\Delta < \log^{\frac{1}{\beta - 2}} n$ implies that 
        \begin{equation*}
            c \log \log n \geq c   \log \Delta \geq \frac{c}{\Delta^{\beta-2}} \log \Delta
        \end{equation*}

        so we get

        \begin{equation*}
            c \log \log^{\Delta^{\beta -2 }}  n \geq c \log \Delta
        \end{equation*}

        and thus we have 
        \begin{equation*}
            \log ^{-\Delta^{\beta-2} \cdot c } n\leq \Delta^{-c}.
        \end{equation*}

        Since the probability of each node to get covered only depends on its $\beta$-hop neighborhood and a node can only get inactive, if itself or a direct neighbor joins $S$, it follows that each nodes randomness only depends on its $(\beta+1)$-hop neighborhood.\\
        Since each pair of nodes in $B$ has at least distance $2(\beta+1)$ their randomness is independent of each other.\\
        Thus,
        \begin{align*}
            \mathbb{P}[\text{$B$ remains uncovered after $c_1 \log \log n$ rounds}] \\
            = \Pi_{v \in B} \mathbb{P}[\text{$v$ remains uncovered after $c_1 \log \log n$ rounds}] \leq \Pi_{v \in B} \Delta^{-c} = \Delta^{-c|B|}.
        \end{align*}

    \end{proof}

    Now applying the shattering lemma (weiter ausführen?), we obtain the following statement:

    \begin{lemma}
        Let $c>0$ be an arbitrary constant. Let B be the set of uncovered nodes after $\Theta(c \log \log n)$ rounds. Then with probability at least $1-n^{-c_1}$, for a tune-able constant $c_1 = c-2(\beta +1) -3$, any connected component of $G[B]$ has size at most $\log_{\Delta} n \Delta^{2(\beta+1)}$.
    \end{lemma}
    Finish-off: We are now only considering components of size 
    \begin{equation*}
         N =  O(\log_{\Delta}n \cdot \Delta^{4(\beta+1)} )< O(\log_{\Delta}n \cdot \log^{\frac{4(\beta+1)}{2-\beta}} n) = O(\log^{\frac{3\beta + 6}{2- \beta}} n)
        \end{equation*}
    
    Version 1:\\
    Since we are considering trees, we know that the arboricity $a(G) = 1$, thus we can compute a H-partition of size $O(\log N)$ and degree at most $(2+ \epsilon) \cdot 2$ in $O(\log N)$ time ($0 < \epsilon \leq 1$). Now we can compute a valid coloring $W_i$ for each $H_i$ with at most $25 \geq ((2+ \epsilon) \cdot 2)^{2}$ colors ($ i \in [ \ell ]$, for $\ell = O(\log N)$).
    Now we can compute an MIS using this coloring as follows:
    Initialize $S \coloneq W_{\ell 1}$ (first color-class of $W_{\ell}$) and extend it to a valid MIS for $H_{\ell}$, by iterating through all color-classes and checking each node $v$ (in parallel) of that color-class weither $v$ can join $S$. \\
    Now we can extend the MIS to an MIS of the component by iterating through all layers $H_{\ell -1}, \dots , H_1$ and repeating the above procedure (i.e. going through all color-classes).\\
    This procedure takes $O(\log N)$ time, since we have $O(\log N)$ layers and each is done in $25 = O(1)$ rounds.\\
    Thus in total we get a total runtime of $O(\log N + \log N) = O(\log \log^{\frac{3 \beta 
    6}{2- \beta}}n) = O( \frac{3 \beta +6}{2- \beta} \log \log n) = O(\log \log n)$, since $\beta = O(1),$ for computing the H-partition and the MIS.\\

    Version 2:\\

    \clearpage

    If we assume all degrees are in the interval $[ x, x^{1.1}]$, i.e. $d(v) \in [x,x^{1.1} ]$, in particular $\Delta \leq x^{1.1}$.
    Then we have 
    \begin{equation*}
        \mathbb{P}[\text{v is not covered after $O(1)$ rounds}] \leq (1-\frac{1}{x^{1.1}+1})^{x^\beta} \leq e^{-\frac{x^\beta}{x^{1.1}}}.
    \end{equation*}

    We want that the probability that v is not covered is at most $\Delta^{-c}$ for some constant $c\geq 1$ in order to apply shattering.
    But if we let $x= \log \Delta$ then we have:

    \begin{equation*}
        e^{-\frac{x^\beta}{x^{1.1}}} = e^{-\frac{\log ^\beta \Delta}{\log^{1.1} \Delta}} = e^{- \log^{(\beta-1.1)} \Delta} \leq e^{- \log \Delta} = \Delta^{-1}.
    \end{equation*}

    Where the last equation holds for $\beta \geq 3$, since then $(\beta-1.1) \geq 1$.

    If we want to generalize to the intervals $[x^{a},x^{a+0.1})$, we have $\Delta < x^{a+0.1}$ and get 
    
    \begin{equation*}
        \mathbb{P}[\text{v is not covered after $O(1)$ rounds}] \leq (1-\frac{1}{x^{a+0.1}})^{x^{a \cdot \beta}} \leq e^{-\frac{x^{a \cdot \beta}}{x^{a+0.1}}} = e^{-x^{a \cdot (\beta-1)-0.1}} \leq e^{-x^{2a-0.1}}.    
    \end{equation*}

    Where the last inequality holds again for $\beta \geq 3$. If we plug in $x= \log \Delta$:

    \begin{equation*}
        e^{-x^{2a-0.1}} = e^{-\log^{2a-0.1} \Delta} \leq e^{-\log \Delta} = \Delta^{-1}.
    \end{equation*}
    Since $a\geq 1$. (general case also covers first case for $a=1$)

    \clearpage
    \printbibliography    
\end{document}

